\title{Communication-Optimal Parallel Standard and Karatsuba Integer Multiplication in the Distributed Memory Model}
\author{Lorenzo De Stefani\thanks{lorenzo\_destefani@brown.edu}}
\affil{Department of Computer Science, Brown University}
  \providecommand\BibTeX{{%
    \normalfont B\kern-0.5em{\scshape i\kern-0.25em b}\kern-0.8em\TeX}}}
\colorlet{pink}{red!40}
\colorlet{blue}{cyan!60}
\newcommand{\BO}[1]{\mathcal{O}\left(#1\right)}
\newcommand{\BOme}[1]{\Omega\left(#1\right)}
\newcommand{\io}{I/O}
\newcommand{\alg}{\mathcal{A}}
\newcommand{\nproc}{\mathcal{P}}
\newcommand{\npr}[1]{|\mathbf{#1}|}
\newcommand{\mymax}[1]{\max\left\{#1\right\}}
\newcommand{\mymaxo}[2]{\max_{#1}\left\{#2\right\}}
\newcommand{\msgsize}{B_m}
\newtheorem*{rep@theorem}{\rep@title}
\newcommand{\newreptheorem}[2]{%
\newenvironment{rep#1}[1]{%
 \def\rep@title{#2 \ref{##1}}%
 \begin{rep@theorem}}%
 {\end{rep@theorem}}}
\newtheorem{theorem}{Theorem}
\newtheorem{lemma}[theorem]{Lemma}
\newtheorem{fact}[theorem]{Fact}
\newcommand{\coim}{\textsc{COPSIM}}
\newcommand{\cok}{\textsc{COPK}}
\begin{document}
\begin{titlepage}
\pagenumbering{gobble}
\maketitle



\begin{abstract}
  We present \coim{} a parallel implementation of standard integer multiplication for the distributed memory setting, and \cok{} a parallel implementation of Karatsuba's fast integer multiplication algorithm for a distributed memory setting. 
When using $\mathcal{P}$ processors, each equipped with a \emph{local memory}, to compute the product of tho $n$-digits integer numbers, under mild conditions, our algorithms achieve \emph{optimal speedup} of the computational time. That is, $\BO{n^2/\mathcal{P}}$ for \coim{}, and $\BO{n^{\log_2 3}/\mathcal{P}}$ for \cok{}. The total amount of memory required across the processors is $\BO{n}$, that is, within a constant factor of the minimum space required to store the input values.
We rigorously analyze the \emph{Input/Output} (\io{}) cost of the proposed algorithms. We show that their \emph{bandwidth cost} (i.e., the number of memory words sent or received by at least one processors) matches asymptotically corresponding known \io{} lower bounds, and their \emph{latency} (i.e., the number of messages sent or received in the algorithm's \emph{critical execution path}) is asymptotically within a multiplicative factor $\BO{\log^2_2 \mathcal{P}}$ of the corresponding known \io{} lower bounds. Hence, our algorithms are asymptotically optimal with respect to the bandwidth cost and almost asymptotically optimal with respect to the latency cost. 
\end{abstract}

\end{titlepage}
\pagenumbering{arabic}
\section{Introduction}\label{sec:intro}
Integer multiplication is a widely used and widely studied basic primitive with many important applications, among which primes factorization is of particular notice due to its impact on the field of cryptography~\cite{jahani2014efficient, mera2020time}.
The importance of integer multiplication can be fully appreciated by noting many computers implement it in hardware. Still, it is also complex enough that in many other very successful cases, it is entirely computed by software.

The standard algorithm
(also known as the long multiplication or the schoolbook algorithm)
takes $\Theta\left(n^2\right)$ digit operations to multiply two
$n$-digit numbers.  In 1960,
Karatsuba~\cite{karatsuba1962multiplication} showed how to improve the
bound to $\Theta\left(n^{\omega}\right)$, where $\omega = \log_2
3 \approx 1.585$.  This result has motivated a number of efforts which
have led to increasingly faster algorithms. Among these, of particular note are the \emph{Toom-Cook}
algorithmic scheme  originally introduced by Andrei
Toom~\cite{toom1963complexity} for circuits and later adapted by
Stephen Cook~\cite{cook} to software programs, the asymptotically
faster $\Theta\left(n \log n \log \log n \right)$ 
 Sch\"onhage-Strassen algorithm~\cite{schonhage1971schnelle}, and F\"urer's algorithm~\cite{furer2009faster} with complexity $\Theta\left(n \log n 2^{\BO{\log^*n}} \right)$, where $\log^*n$ is the iterated logarithm, and, most recently, the algorithm by Harvey and van der Hoven~\cite{harvey2019faster} with complexity $\Theta\left(n \log n\right)$.
 However, due to the, sometimes extremely high, constant multiplicative factors ``\emph{hidden}'' by the asymptotic notation, standard-long integer multiplication and Karatsuba's algorithm actually outperform the other, asymptotically faster, algorithms for a wide range of input sizes up to $2^{2^{14}}$~\cite{garcia2005can}. Hence, both standard and Karasuba's algorithms are of great practical interest. The problem of improving the performance of integer multiplication algorithms is actively researched, as evidenced by the significant number of publications in this field. 
 
 While promising, designing parallel algorithms based on the known fast multiplication algorithms appears challenging due to the apparent ``\emph{sequential nature}'' of the integer multiplication algorithms discussed so far, and the necessity to carefully manage communications among the processors participating in the computation \emph{around} such sequential components.

When designing efficient parallel algorithms, it is important not only to balance the computational effort among processors but also to minimize the time spent by the processors communicating to each other to transfer data and coordinate operations. The  \emph{communication cost} (or \emph{I/O} cost) is, in many cases, much higher than that due to computation, and, therefore, is the real bottleneck of algorithmic performance. This technological
trend~\cite{patterson2005getting} appears destined to continue, as physical
limitations on minimum device size and maximum message speed lead
to inherent costs when moving data, whether across the levels of a
hierarchical memory system or between processing elements of a
parallel system~\cite{bilardi1995horizons}. Due to these challenges, most parallel algorithms for integer multiplication were proposed for the \emph{shared memory model} where all processors have access to a shared memory space (among others, ~\cite{mansouri2014parallelization,Kronenburg2016ToomCookMS,giorgi2013parallel,jebelean1997using,Tembhurne2014PerformanceEO,edamatsu2018acceleration}). While this model simplifies many of the mentioned challenges related to communication, it is rather unrealistic for modern architectures. 
In this work, we consider a more realistic parallel distributed-memory model, where each of the $\mathcal{P}$ processor is equipped with a local (non-shared) memory space which can hold up to $M$ memory words, and data communication among processors occurs only by message exchange. 

Other approaches have been presented for specific hardware devices (e.g., FPGA)~\cite{pan2019hardware,portugal2006reversible,bahigfast,tembhurne2017parallel} or for models with limitations in the number of available processors. While some parallel versions of the standard and Karatsuba algorithms were presented in the literature for the distributed memory model, these contributions focus on specific settings with respect to the number of available processors~\cite{cesari1996performance, char1994some}, or assume unbounded local memory space~\cite{knuth1981seminum}. Further, in all mentioned contributions, the impact of the communication over execution time is evaluated through experimental evaluation of specific implementations rather than a formal theoretical analysis, or a rigorous comparison with theoretical lower bounds. 

In recent contributions, De Stefani~\cite{stefani2019io}, and Bilardi and De Stefani~\cite{BilardiS19} presented the first analytical lower bound on the communication cost of, respectively, standard integer multiplication algorithms, and Toom-Cook fast integer multiplication, of which Karatsuba can be seen as a special case. Their results for the distributed-memory parallel model yield lower bounds for both the \emph{bandwidth cost} (i.e., the number of memory words transmitted by at least one processor) and the \emph{latency} (i.e., the number of messages exchanged by at least one processor) of parallel integer multiplication algorithms. These works left open the important question of whether it is actually possible to construct algorithms matching these bounds.

In this work, we present \coim{} a parallel algorithm based on the recursive long multiplication algorithm, and \cok{} a parallel fast integer multiplication algorithm based on Karatsuba's algorithm. Both our algorithms are designed for the distributed memory model. Under very mild conditions (i.e., $n\geq \mathcal{P}$ and $M\geq \log_2 \mathcal{P}$), our algorithms achieve \emph{optimal speedup} of the computation time with respect to their sequential counterpart, asymptotically optimal communication bandwidth cost, and latency within a $\BO{\log^2 \mathcal{P}}$ multiplicative factor of the corresponding lower bounds~\cite{BilardiS19,stefani2019io}. Finally, both our algorithms require only $\BO{n}$ memory space to be available when combining the size of the local memories of the processors. That is, the total required memory space is within a constant multiplicative factor of the memory space required to store the input. Hence, \coim{} and \cok{} have \emph{asymptotically optimal memory requirement}.
To the best of our knowledge, ours are the first parallel algorithms for integer multiplication to achieve computational and communication optimality in the distributed memory setting. 

\paragraph{Related work}
As discussed in the introduction, various parallel implementation of standard-long integer multiplication algorithms have been presented in the literature for the shared memory model (among others, ~\cite{mansouri2014parallelization,Kronenburg2016ToomCookMS,giorgi2013parallel,jebelean1997using,Tembhurne2014PerformanceEO,edamatsu2018acceleration}), and for for specific hardware (among others, ~\cite{pan2019hardware,portugal2006reversible,bahigfast,tembhurne2017parallel}). The analysis of the communication component of these algorithms' execution time is mostly given as experimental evaluation of specific implementations of the proposed algorithms rather than a formal analysis of their scalability for a range of values if input size, number of available processors, and available memory. In this work, we present a rigorous analysis of the computation time, memory requirement, and communication cost for both our proposed algorithms.

Similarly, parallel versions of the Karatsuba's algorithm are mostly presented for the shared memory setting~\cite{kuechlin1991integer}, or focus on the experimental analysis of specific implementations without formally analyzing the scalability and the communication cost of the proposed algorithms~\cite{char1994some}. 
In~\cite{cesari1996performance}, Cesari and Maeder introduce three parallel Karatusba-based algorithms for the distributed memory setting: The first two algorithms have time complexity $O(n)$, where $n$ denotes the number of digits of the input integers, when using $n^{\log_2 3}$ processors. The last one exhibits $O(n \log n)$ time complexity while using $n$ processors. Their approach follows an \emph{master-slave} approach where single processors are assigned recursively-generated subproblems to be solved in parallel, and they may themselves use other, still unused processors to do so. Thus, the scalability of these algorithms is limited by the fact that long integer additions and subtractions need to be computed by single processors. Further, their approach does not account for limitations due to the size of the local memory available to the processors being used, as several processors need to store integer values of size $O(n)$ entirely. 
In contrast, our~\cok{} algorithm achieves computational time $\BO{n^{\log_2 3}/\mathcal{P}}$ for any number $\mathcal{P}\leq n$ of processors available processors. Both the computational and the communication cost of \cok{} scales proportionally with $1/\mathcal{P}$, thus exhibiting perfect strong scaling. Further, the \emph{comulative memory space} across the processors required by the algorithm is within a constant factor of that necessary to represent the input factor integers and their product.

The analysis of the communication requirement of algorithms has been studied extensively in the literature both in the sequential and the parallel setting. There have been also numerous efforts to obtain communication efficient parallel algorithms for many important problems among whom the computation of the FFT~\cite{chu1998fft}, Cholesky decomposition~\cite{ballard2010communication, liu1986computational}, Matrix Factorization~\cite{o1986assignment}, and Matrix Multiplication~\cite{ballard2012communicationalg,ballard2011minimizing,berntsen1989communication,choi1994pumma,cannon1969cellular,agarwal1995three}.
In particular, in~\cite{ballard2012communicationalg} Ballard et. al presented \emph{CAPS} a parallel version of Strassens's algorithm for fast matrix multiplication~\cite{strassen1969gaussian}. Their algorithm achieves optimal speedup and it minimizes the bandwidth cost among all parallel Strassen-based algorithms. This work draws inspiration from the technique used in their work to obtain communication-optimal algorithms for integer multiplication.  Doing so requires several major, and challenging, modifications due to differences between matrix and integer multiplication, and, in particular, the apparently sequential nature of components of the latter, which our algorithms overcome by speculatively precalculating some intermediate results of the algorithm.

As mentioned in the introduction, De Stefani~\cite{stefani2019io} and Bilardi and De Stefani~\cite{BilardiS19}, presented, respectively, lower bounds on the communication complexity of parallel implementations of standard-long integer multiplication algorithms~\cite{stefani2019io}, and of Toom-Cook algorithms~\cite{BilardiS19} in a memory-distributed model. We present these bounds in detail in Section~\ref{sec:bounds}, and they will serve as a term of comparison when evaluating the performance of our proposed algorithms. 

\paragraph{Our contributions}
We present two parallel integer multiplication algorithms for the distributed memory setting called \coim{} (Communication Optimal Parallel Standard Integer Multiplication) and ~\cok{} (Communication Optimal Parallel Karatsuba):
\begin{itemize}
    \item \coim{} computes the product of two given $n$ digits input integers using $\mathcal{P}$ processors (with $n\geq \mathcal{P}$) each equipped with a local memory of size $M\geq 24\sqrt{\mathcal{P}}$, in $\BO{n^2/\mathcal{P}}$ parallel computational steps. \coim{} exhibits  $\BO{n^2/(MP)}$ bandwidth cost and $\BO{n^2/(M^2\mathcal{P})}$ latency. Thus, by the known lower communication lower bounds in~\cite{stefani2019io}:
    \begin{theorem}[Communication optimality of \coim{}]\label{thm:informalcoim}
\coim{} achieves optimal computation time speedup and optimal bandwidth cost among all parallel standard integer multiplication algorithms. It also minimizes the latency cost up to a $\BO{\log^2\mathcal{P}}$ multiplicative factor. 
\end{theorem}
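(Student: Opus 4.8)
The plan is to prove Theorem~\ref{thm:informalcoim} by juxtaposing the performance guarantees established for \coim{} against the communication lower bounds of \cite{stefani2019io} (restated in Section~\ref{sec:bounds}), and checking that the two agree up to the claimed factors. Concretely, the analysis of \coim{} yields, under the stated hypotheses $n \geq \mathcal{P}$ and $M \geq 24\sqrt{\mathcal{P}}$, a parallel computation time of $\BO{n^2/\mathcal{P}}$, a bandwidth cost of $\BO{n^2/(M\mathcal{P})}$, and a latency of $\BO{n^2/(M^2\mathcal{P})}$. The first task is to confirm these three upper bounds; since they are derived in the technical sections, the argument here takes them as given and concentrates on establishing optimality.

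For the computation time, I would first record the trivial work-division lower bound: the standard (schoolbook) algorithm forms $\Theta(n^2)$ elementary digit products, and since each of the $\mathcal{P}$ processors performs at most one operation per parallel step, any standard integer multiplication algorithm requires $\BOme{n^2/\mathcal{P}}$ steps. Matching this against the $\BO{n^2/\mathcal{P}}$ time of \coim{} yields optimal (linear) speedup, as $\mathcal{P}\cdot\BO{n^2/\mathcal{P}}=\BO{n^2}$ recovers the sequential cost.

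For the bandwidth and latency, the key step is to invoke the lower bounds of \cite{stefani2019io} for the distributed-memory model and verify that \coim{} lies in the class of algorithms those bounds govern, i.e., that it is a bona fide standard/long multiplication algorithm computing the full set of partial products, and that the parameter regime ($n \geq \mathcal{P}$, $M \geq 24\sqrt{\mathcal{P}}$) falls within the range of validity of the cited results. The bandwidth lower bound $\BOme{n^2/(M\mathcal{P})}$ then coincides asymptotically with the $\BO{n^2/(M\mathcal{P})}$ bandwidth of \coim{}, establishing bandwidth optimality. Dividing the $\BO{n^2/(M^2\mathcal{P})}$ latency of \coim{} by the corresponding latency lower bound of \cite{stefani2019io} leaves a residual factor of $\BO{\log^2\mathcal{P}}$, which is precisely the near-optimality asserted for the latency.

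I expect the genuine difficulty to reside not in this matching argument but in the upper-bound constructions that feed it: designing \coim{} so that its inter-processor communication is organized into broadcast/reduction trees of depth $\BO{\log\mathcal{P}}$ (the plausible source of the two logarithmic factors separating its latency from the lower bound) while keeping the bandwidth along the critical path asymptotically tight, and speculatively precomputing intermediate results so that the per-processor work stays balanced at $\BO{n^2/\mathcal{P}}$. Once those bounds are in hand, the only remaining subtlety is to ensure that the hypotheses under which \coim{} is analyzed are compatible with those of the cited lower bounds, so that the comparison is legitimate across the entire stated parameter range.
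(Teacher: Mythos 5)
Your overall strategy---taking the upper bounds on time, bandwidth, and latency as given from the technical sections and matching them against the lower bounds of~\cite{stefani2019io}---is the same high-level approach as the paper's, and your work-division argument for optimal speedup is fine. However, there is a genuine gap in the bandwidth/latency matching: you invoke \emph{only} the memory-dependent lower bounds $\BOme{n^2/(M\mathcal{P})}$ and $\BOme{n^2/(M^2\mathcal{P})}$ of Theorem~\ref{thm:lwbstamemdip}, and you treat $\BO{n^2/(M\mathcal{P})}$ as the algorithm's bandwidth over the entire parameter range. This breaks down when the local memories are large. For $M\geq 12n/\sqrt{\mathcal{P}}$ the product is computed entirely in the MI execution mode, whose bandwidth is $\BO{n/\sqrt{\mathcal{P}}}$ (Theorem~\ref{thm:costcoimUM}) and does not decrease further as $M$ grows; in that regime $n^2/(M\mathcal{P})\leq n/(12\sqrt{\mathcal{P}})$, so the memory-dependent lower bound is strictly weaker than the algorithm's actual communication and cannot certify its optimality. (Note also that Theorem~\ref{thm:lwbstamemdip} is only stated for $M<n$, so it does not even apply throughout the large-memory regime.)

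The paper's proof therefore splits into two cases. For $M\geq 12n/\sqrt{\mathcal{P}}$ it compares $\coim{}_{MI}$ against the \emph{memory-independent} bounds of Theorem~\ref{thm:lwbstamemind}, namely $BW=\BOme{n/(\msgsize{}\sqrt{\mathcal{P}})}$ and $L=\BOme{1}$; this step requires verifying the extra hypothesis of that theorem that no processor initially holds more than $\BO{n/\mathcal{P}}$ digits of each input---which holds because $A$ and $B$ are initially partitioned in $\mathbf{P}$ in $n/\mathcal{P}$ digits. For $12n/\sqrt{\mathcal{P}}>M\geq 80n/\mathcal{P}$ it compares \coim{} in its main execution mode (Theorem~\ref{thm:coim}) against the memory-dependent bounds, exactly as you propose. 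Your plan captures the second regime but omits the first: without invoking the memory-independent bound, and without checking its balanced-input-distribution hypothesis, you cannot conclude bandwidth optimality ``among all parallel standard integer multiplication algorithms'' for the full range of $M$ that the theorem asserts. To repair the proposal, add the case distinction on $M$ and carry out the comparison against Theorem~\ref{thm:lwbstamemind} in the large-memory regime.
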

    \item \cok{} computes the product of two given $n$ digits input integers using $P$ processors (with $n\geq \mathcal{P}$) each equipped with a local memory of size $M \geq 10\mathcal{P}^{(\log_3 3)/2}$, in $\BO{n^{\log_2 3}/\mathcal{P}}$ parallel computational steps. \cok{} exhibits  $\BO{\left(\frac{n}{M}\right)^{\log_2 3}\frac{M}{\mathcal{P}}}$ bandwidth cost and $\BO{\left(\frac{n}{M}\right)^{\log_2 3}\frac{1}{\mathcal{P}}}$ latency. Thus, by the known lower communication lower bounds in~\cite{BilardiS19,stefani2019io}:
    \begin{theorem}[Communication optimality of \cok{}]\label{thm:informalcok}
\cok{} achieves optimal computation time speedup and optimal bandwidth cost among all parallel Karatsuba-based integer multiplication algorithms. It also minimizes the latency cost up to a $\BO{\log^2\mathcal{P}}$ multiplicative factor. 
\end{theorem}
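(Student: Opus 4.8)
The plan is to treat this statement as a summary corollary and derive it by combining two independently established ingredients: an exact analysis of the resources consumed by \cok{}, and the lower bounds recalled in Section~\ref{sec:bounds}. Concretely, I would first prove that \cok{} runs in $\BO{n^{\log_2 3}/\mathcal{P}}$ parallel steps, transmits $\BO{\left(n/M\right)^{\log_2 3} M/\mathcal{P}}$ words per processor, sends $\BO{\left(n/M\right)^{\log_2 3}/\mathcal{P}}$ messages along the critical path, and uses $\BO{n}$ total memory; the three optimality claims then follow by inspection. The speedup claim is immediate, since sequential Karatsuba performs $\BT{n^{\log_2 3}}$ digit operations, so $\BOme{n^{\log_2 3}/\mathcal{P}}$ is an unconditional lower bound on parallel time which the algorithm matches. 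The bandwidth claim follows because the achieved per-processor bandwidth coincides with the lower bound of~\cite{BilardiS19,stefani2019io}, and the latency claim follows by dividing the achieved message count by the corresponding latency lower bound, which leaves the stated $\BO{\log^2\mathcal{P}}$ gap.

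The bulk of the work is therefore the resource analysis of the algorithm, which I would organize around a BFS/DFS recursion in the style of \emph{CAPS}~\cite{ballard2012communicationalg}. Each Karatsuba level reduces two $m$-digit factors to the three half-size products $x_1y_1$, $x_0y_0$, and $(x_1+x_0)(y_1+y_0)$, preceded by two pre-additions and followed by a combine step that shifts and subtracts to assemble the $2m$-digit result. In a BFS step the $\mathcal{P}$ processors are partitioned into three groups of $\mathcal{P}/3$, each responsible for one product after a redistribution of operands; in a DFS step all $\mathcal{P}$ processors cooperate on one product at a time. I would take the $\BO{\log_3 \mathcal{P}}$ BFS steps available while more than one processor is active, reducing to a subproblem of size $\BO{n/\mathcal{P}^{1/\log_2 3}}$ per processor that is then solved by the sequential, memory-bounded routine. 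Counting digit operations across the recursion yields $T(m,p)=T(m/2,p/3)+\BO{m/p}$ with base case $T(m,1)=\BT{m^{\log_2 3}}$, whose solution is $\BO{n^{\log_2 3}/\mathcal{P}}$ under $n\geq\mathcal{P}$; the per-processor bandwidth then equals the sequential memory-bounded \io{} cost $\BT{\left(n/M\right)^{\log_2 3} M}$ of Karatsuba balanced across $\mathcal{P}$ processors, and, with message size capped at $\BT{M}$, the latency is this bandwidth divided by $M$. For memory I would argue that operands and partial products are never replicated beyond a constant factor and that space is reused across recursive calls, so the cumulative footprint stays $\BO{n}$.

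The main obstacle is the combine step. Karatsuba's pre-additions and, more importantly, its final assembly add and subtract $\Theta(m)$-digit integers whose carries propagate sequentially; handled naively, this serializes the critical path over a length-$\Theta(m)$ chain and forces sequential communication, which would destroy both the speedup and the bandwidth optimality. The resolution, hinted at in the introduction, is to precompute carries speculatively: split each long operand across the processors, have every processor compute its local sum together with both the carry-in $=0$ and carry-in $=1$ variants, and then resolve the true carry chain with a parallel-prefix reduction of depth $\BO{\log \mathcal{P}}$, at only a constant-factor cost in local work and words moved. I would then have to show that this simultaneously keeps per-level communication within the lower-bound budget and adds at most $\BO{\log \mathcal{P}}$ rounds per level. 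Quantifying the resulting message count precisely and comparing it against the latency lower bound of Section~\ref{sec:bounds}, so as to extract exactly the $\BO{\log^2\mathcal{P}}$ factor from the interplay of the $\BO{\log \mathcal{P}}$ carry-resolution depth and the $\BO{\log \mathcal{P}}$ recursion depth, is the delicate final step and the one I expect to require the most care.
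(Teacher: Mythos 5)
Your proposal is correct and follows essentially the same route as the paper: Theorem~\ref{thm:informalcok} is proved there exactly as in your first paragraph, by citing the resource bounds of Theorem~\ref{thm:costcokUM} and Theorem~\ref{thm:cok} and comparing them, in two memory regimes, against the memory-independent and memory-dependent lower bounds (Theorems~\ref{thm:karaboundmi} and~\ref{thm:karabound}), while the underlying resource analysis is the CAPS-style BFS/DFS schedule with speculative carry precomputation (the \textsc{SUM}/\textsc{SUMA}/\textsc{DIFF} subroutines) that you describe, the $\BO{\log^2\mathcal{P}}$ latency factor arising precisely from $\BO{\log\mathcal{P}}$-depth carry resolution per level over $\BO{\log\mathcal{P}}$ recursion levels. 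The one refinement worth noting is that in the limited-memory regime ($M<10n/\mathcal{P}^{\log_3 2}$) the paper runs the DFS steps \emph{before} the BFS phase, since your explicitly stated schedule (BFS until single processors, then a sequential memory-bounded solve) would leave per-processor subproblems of size $n/\mathcal{P}^{\log_3 2}$ that do not fit in local memory and achieves only the memory-independent bandwidth, whereas your stated target $\BO{\left(n/M\right)^{\log_2 3}M/\mathcal{P}}$ is the memory-dependent cost that requires this DFS-first ordering.
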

\end{itemize}

Both our algorithms require $\BO{n}$ total memory space to be available across all processors. That is, each of the $\mathcal{P}$ processors requires a local memory of size only $\BO{n/\mathcal{P}}$. That is, the required memory space is within a constant factor of the minimum memory space necessary to store the input (and output) values. Both \coim{} and \cok{} are \emph{strongly scaling} as both the computation time and the bandwidth cost scale linearly with respect to $\mathcal{P}^{-1}$, provided that the size of the local memory of each processor scales accordingly (i.e., it is $\BO{n/\mathcal{P}}$).

A rigorous analysis of the performance of \coim{} (resp., \cok{}) is given in Theorem~\ref{thm:costcoimUM} and Theorem~\ref{thm:coim} in Section~\ref{sec:algdef} (resp., Theorem~\ref{thm:costcokUM} and Theorem~\ref{thm:cok} in Section~\ref{sec:karatsuba}). Proof of Theorem~\ref{thm:informalcoim} is presented in Section~\ref{sec:stalwbcomp}, and proof Theorem~\ref{thm:informalcok} is presented in Section~\ref{sec:karlwbcomp}. Further, these results imply that the lower bounds on the bandwidth cost of parallel standard-long inter multiplication algorithms~\cite{stefani2019io} and for parallel Karatsuba's algorithms~\cite{BilardiS19,stefani2019io} are indeed \emph{asymptotically tight}.

Our methods use a recursive divide-and-conquer approach and speculatively precalculate multiple possible values that may be used in the continuation of the algorithm in order to overcome the challenges related to the apparently sequential nature of integer multiplication algorithms. While this may seem wasteful in computation time and usage of available computational resources, this allows us to exploit the available parallelism while incurring low computational overhead. 

Our algorithms are designed with the intent of making the \emph{best possible use} of the memory space available to each processor. 
This is achieved by analyzing the recursion tree corresponding to the algorithm's execution, and by scheduling its traversal using an opportune combination of \emph{depth-first} and \emph{breadth-first} steps as discussed in Section~\ref{sec:overview}.

\paragraph{Paper organization}
In Section~\ref{sec:preliminaries}, we present an overview of the notation and the computational model considered in this work. In Section~\ref{sec:bounds}, we present an overview of the known lower bounds on communication cost of integer multiplications, which will serve as a term of comparison in evaluating the performance of our proposed algorithms.   In Section~\ref{sec:overview}, we present the main common strategy used by both our proposed algorithm. In Section~\ref{sec:components}, we present subroutines for adding, comparing, and subtracting integers using multiple processors in the distributed memory setting. These subroutines are used extensively in our algorithms. We present and fully analyze \coim{} (Communication Optimal Integer Multiplication) in Section~\ref{sec:algdef}, and \cok{} (Communication Optimal Karatsuba) in Section~\ref{sec:karatsuba}.

\section{Preliminaries}\label{sec:preliminaries}
We discuss algorithms that compute the product of two integers: $C = A\times B$. We assume the input integers to be expressed as a sequence of $n$ base-$s$ digits in \emph{positional notation}. We further assume that each integer is represented as an unsigned integer with an additional bit to denote the sign. For a given integer $A$, we denote its expansion in base $s$ as:
\begin{equation*}
    A = \left(A[n-1],A[n-2],\ldots A[0]\right)_s,
\end{equation*}
where $n$ is the number of digits in the base-$s$ expansion of $A$, and its digits are indexed in order from the least significant digit~$A[0]$ to the most significant digit~$A[n-1]$. Further, for $i\in\{1,1,\ldots,n\}$, we refer to $A[i-1]$ (resp., $A[n-i]$) as the $i$-th \emph{least} (resp., \emph{most}) significant digit of $A$. 
With a slight abuse of notation, we use $A$ (resp., $B$) to denote both the value being multiplied and the set of input variables to the algorithm.  

We refer to the number of digits of the base-$s$ expansion of an integer $A$ as its  ``\emph{size}'', and we denote it as $|A|$.

We consider parallel algorithms for integer multiplication in a distributed-memory parallel model where $P$ processors, each equipped with local (non shared) memory that can hold up to $M$ memory words, are directly connected to each other by a network. Each processor in the model is identified by an \emph{unique code} given by an integer value from $\{0,1,\ldots,P-1\}$. 
Processors can exchange point-to-point messages, with every message containing up to $\msgsize{}$ memory words. In the following, we refer to the number of memory words which can be stored in the local memory (resp., that can be transmitted in a single message), as the \emph{size} of the memory (resp., of the messages). 
We assume that each processor is equipped with digit-wise product and algebraic sum elementary operations. Further, we assume that the processor is equipped with operations for producing the most and least significant digits of an integer in base $s$. Unless explicitly stated otherwise, when referring to the ``\emph{digits}'' of integers, we mean the digits of their expansion in the base chosen for their representation in memory.

\subsection{Data layout}
We assume both the input integers and intermediate results to be stored in memory expressed as their base-$s$ expansion, with $s\in \mathbb{N}^{+}$, and with $s$ being at most equal to the maximum value which can be maintained in a single memory word plus one. (That is, if a memory word can fit 32 bits, we have $2\leq s \leq 2^{32}-1$.) In particular, we assume each digit in the base-$s$ expansion of a value to be stored in a different memory word. 

Given a set of available processors, in this work, we will often consider them organized in \emph{ordered sequences}.
An ordered sequence of processors $\mathbf{P}=\left(P_{\npr{P}-1},\ldots,P_0\right)$, we denote as $\mathbf{P[i]}$ the $i$-th processor in the sequence (indexed from the end), for $0\leq i\leq |\mathbf{P}|-1$. That is, if $\mathbf{P}=\left(P_{z},P_{y},\ldots,P_b,P_a\right)$, then $\mathbf{P}[0]= P_a$, $\mathbf{P}[1]= P_b$, and $\mathbf{P}[\npr{P}-1]= P_z$. Such ordered sequences will be used extensively through the presentation to clarify the organization of the processors in the computation, the assignment of digits of the same integer value across the local memory of multiple processors, and the patterns of communication among the processors. 

Given an integer $A$, the digits of its base-$s$ representation may be stored in the local memories of different processors. Given an ordered sequence of processors $\mathbf{P}$, we say that an $n$-digit integer $A$ is ``\emph{partitioned among the processors in $\mathbf{P}$ in $n'$ digits}'' if , for $0\leq j \leq |\mathbf{P}|-1$, the $n'$ digits of $A$ form the $\left(jn'\right)$-least significant (if any) to the $\left((j+1)n'\right)$-least significant are stored in the local memory of processor $\mathbf{P}[j+1]$. If $n\geq n'\npr{P}$, the remaining digits of $A$ are stored in the local memory of $\mathbf{\npr{P}-1}$. Sometimes we use the shorter expression ``$A$ is partitioned  in $\mathbf{P}$'', which implies $n'= \lceil n/\npr{P}\rceil$. 
When the digits of an integer $A$ are distributed among multiple processors, we assume that their digits are stored in \emph{positional notation} in the local memories of each of these processors.
In the following, we use the notation $A_{P_j}$ to denote the integer value whose base-$s$ expression corresponds to the digits of $A$ stored in the local memory of $P_j$. 

\subsection{Algorithmic performance metrics}
We characterize the performance of the proposed algorithms according to the following metrics:
\begin{itemize}
    \item The \emph{Memory requirement} $M(n,\mathcal{P})$, which denotes the memory space used in the local memory available to each processor;
    \item The \emph{Computational cost} $T(n,\mathcal{P},M)$, which denotes the number of digit-wise computations executed by during the algorithm's execution;
    \item The  \emph{Bandwidth} cost $BW(n,\mathcal{P},M)$, which denotes the number of memory words exchanged during the algorithm's execution;
    \item The  \emph{Latency} cost $L(n,\mathcal{P},M)$, which denotes the number of point-to-point words exchanged during the algorithm's execution;
\end{itemize}
where $n$ denotes the number of digits of the integers being multiplied, $\mathcal{P}$ denotes the number of processors being utilized, and $M$ denotes the size (in terms of memory words) of the local memory available to each processor.
We count the number of digit-wise operations, memory words exchanged,  and messages exchanged along the \emph{critical execution path} of the algorithm as defined by Yang and Miller~\cite{yang1988critical}. That is, operations executed in parallel by distinct processors are counted only once. Similarly, messages (and, thus, memory words) exchanged in parallel between distinct pairs of processors are counted only once. We assume that in any execution step of the algorithm a processor may only either \emph{send} or \emph{receive} a message to/from another processor but not both. 

These metrics can be composed to characterize the \emph{execution time of the algorithm}. Assume that the processors are \emph{homogeneous}, that is,  time $\alpha$ is required to compute a single digit-wise operation for each processor, and for each pair of processors the communication latency is $\beta$ and $\gamma$ time is required to transmit a memory word.
Then the overall execution time of the algorithm can be bound as:
\begin{equation*}
    \alpha T(n,\mathcal{P},M) + \beta L(n,\mathcal{P},M) + \gamma BW(n,\mathcal{P},M).
\end{equation*}

While the values of the constants $\alpha$,$\beta$ and $\gamma$ depend on the specific hardware being used, our analysis holds for any device and any network being used.

\subsection{Communication lower bounds for integer multiplication algorithms}\label{sec:bounds}
\paragraph{Lower bounds for parallel standard integer multiplication algorithms}
In~\cite{stefani2019io}, De Stefani introduced the following lower bounds on the communication costs of any parallel standard-long integer multiplication algorithm for a model analogous to that considered in this work and discussed at the beginning of Section~\ref{sec:preliminaries}.

\begin{theorem}[{\cite{stefani2019io}[Corollary 8]}]\label{thm:lwbstamemdip}
Let $\mathcal{A}$ be any standard integer multiplication algorithm which computes $\BOme{n^2}$ digit operations to multiply two integers $A,B$ represented as $n$-digit base-$s$ numbers using $\nproc{}$ processors each equipped with a local memory of size $M < n$. Then:
\begin{align*}
    BW(n,P,M) &=\BOme{\frac{n^2}{M\nproc{}}},\\
    L(n,P,M) &=\BOme{\frac{n^2}{M^2\nproc{}}}.
\end{align*}
\end{theorem}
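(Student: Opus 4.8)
The plan is to adapt the classical sequential I/O-lower-bound methodology of Hong and Kung, in the parallel form later used by Irony, Toledo and Tiskin for matrix multiplication, to the combinatorial structure of schoolbook multiplication. The starting observation is that a \emph{standard} algorithm, by definition, evaluates all $n^2$ elementary digit products $A[i]\cdot B[j]$ with $0\le i,j\le n-1$ (each such product is an indecomposable unit of work, and the $\BOme{n^2}$ work hypothesis forces essentially all of them to be computed). Since these $n^2$ products are distributed over $\nproc$ processors, an averaging (pigeonhole) argument guarantees that at least one processor, call it $p^\ast$, is responsible for at least $n^2/\nproc$ distinct products. I would then lower bound the communication performed by $p^\ast$ alone; because the I/O operations of a single processor are sequential along its own timeline, this quantity lower bounds the critical-path bandwidth and latency of the whole execution.

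The heart is a per-segment counting lemma. I would partition the local timeline of $p^\ast$ into consecutive \emph{segments}, each consisting of exactly $M$ send/receive operations. Within one segment the digits of $A$ and of $B$ that $p^\ast$ can touch are only those already resident in its local memory at the start of the segment (at most $M$ words) together with those it receives during the segment (at most $M$ words), for a total of at most $2M$ distinct operand digits. Writing $a$ (resp.\ $b$) for the number of distinct digits of $A$ (resp.\ $B$) available in the segment, we have $a+b\le 2M$, and every product computed in the segment lies in the grid $\{A[i]B[j]\}$ determined by the available operands, so at most $ab$ products can be produced. By AM--GM, $ab\le\left(\frac{a+b}{2}\right)^2\le M^2$. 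Hence each segment, which costs $M$ memory words of communication, accounts for at most $M^2$ of $p^\ast$'s products. Therefore $p^\ast$ needs at least $\frac{n^2/\nproc}{M^2}$ segments, and its total communication volume is at least $M\cdot\frac{n^2}{\nproc M^2}=\frac{n^2}{M\nproc}$, giving $BW(n,\nproc,M)=\BOme{\frac{n^2}{M\nproc}}$. The latency bound then follows by converting volume into message count: since a single message carries at most $\msgsize\le M$ words, the number of messages $p^\ast$ exchanges is at least its communication volume divided by $M$, i.e.\ at least $\frac{1}{M}\cdot\frac{n^2}{M\nproc}=\frac{n^2}{M^2\nproc}$, yielding $L(n,\nproc,M)=\BOme{\frac{n^2}{M^2\nproc}}$.

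I expect the main obstacles to be two points that the sketch above glosses over. First, making rigorous the claim that membership in the class of ``standard'' algorithms forces the computation of (essentially) all $n^2$ distinct products $A[i]B[j]$: this requires pinning down the algebraic model (which cancellations and recombinations are disallowed, so that Karatsuba-type savings are excluded) so that the $n^2$ products are genuinely distinct indivisible tasks that cannot be shared or amortized across the grid. Second, the critical-path bookkeeping: I must argue carefully that counting only $p^\ast$'s own sequential I/O is legitimate under the ``send-or-receive, counted-once-in-parallel'' cost model, that redundant recomputation of a product on several processors cannot reduce the total number of distinct products that must be realized somewhere, and that rounding in the segment partition (the last, possibly short, segment; the boundary resident memory) only affects constants. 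The per-segment estimate $ab\le M^2$ itself is the clean, load-bearing step and should go through with little friction once the operand-availability accounting is set up.
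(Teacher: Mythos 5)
This statement is imported verbatim from~\cite{stefani2019io} (Corollary 8) and the paper contains no proof of it; your reconstruction matches the argument of the cited source, which likewise pigeonholes $\BOme{n^2/\nproc{}}$ distinct digit products onto a single busiest processor and applies an Irony--Toledo--Tiskin-style partition of that processor's I/O into $M$-word segments, bounding the products per segment by $ab\leq\left(\frac{a+b}{2}\right)^2\leq M^2$ over the at most $2M$ accessible operand digits. The only detail worth making explicit is the regime $n^2/\nproc{}\lesssim M^2$, where the segment count degenerates; there the claimed bound is subsumed by the memory-independent bound of Theorem~\ref{thm:lwbstamemind} (since $M\geq n/\sqrt{\nproc{}}$ implies $n/\sqrt{\nproc{}}\geq n^2/(M\nproc{})$), exactly the dominance split the paper notes after that theorem.
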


In the same work, the author also provides \emph{memory independent} \io{} lower bounds which hold under the assumption that the input integers are originally distributed in a \emph{balanced} way among the processors, but require no assumption of the size of the local memories available to each processor:

\begin{theorem}[{\cite{stefani2019io}[Corollary 12]}]\label{thm:lwbstamemind}
Let $\mathcal{A}$ be any standard integer multiplication algorithm to multiply two integers $A,B$ represented as $n$-digit base-$s$ numbers using $\nproc{}$ processors each equipped with an unbounded local memory. Assume further that at the beginning of $\mathcal{A}$ no processor has more than $\alpha n/\nproc{}$ (where $\alpha$ is a constant with respect to $n/\nproc{}$) digits of each of the input integers stored in its local memory. Then:
\begin{align*}
	BW(n,\nproc{})  &\geq \BOme{\frac{n}{\msgsize{}\sqrt{\nproc{}}}},\\
	L(n,\nproc{})  &\geq \BOme{1}.
\end{align*} 
\end{theorem}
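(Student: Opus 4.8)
The plan is to adapt the projection-based, memory-independent lower bound technique (in the style of the Loomis--Whitney arguments of Irony, Toledo, and Tiskin for matrix multiplication) to the two-dimensional combinatorial structure of standard integer multiplication. The starting point is the defining property of the algorithm class: any standard multiplication algorithm evaluates all $\BT{n^2}$ elementary digit products $A[i]\times B[j]$ for $0\le i,j\le n-1$. I would model an execution as an assignment of each such product to the processor that computes it, so that the $n\times n$ grid of index pairs $(i,j)$ is partitioned among the $\nproc{}$ processors. By averaging (pigeonhole), at least one processor $P^{*}$ is assigned a set $S$ of at least $n^2/\nproc{}$ of these products.

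The combinatorial heart is a projection bound on how much \emph{input} data $P^{*}$ must touch in order to evaluate the products in $S$. Consider the two coordinate projections $\pi_A(i,j)=i$ and $\pi_B(i,j)=j$; evaluating a product at $(i,j)$ requires reading digit $A[i]$ and digit $B[j]$, so $P^{*}$ must access exactly $|\pi_A(S)|$ distinct digits of $A$ and $|\pi_B(S)|$ distinct digits of $B$. Since $S\subseteq \pi_A(S)\times\pi_B(S)$ we have $|\pi_A(S)|\cdot|\pi_B(S)|\ge |S|\ge n^2/\nproc{}$, and AM--GM then gives that the total number of distinct input digits read by $P^{*}$ is $|\pi_A(S)|+|\pi_B(S)|\ge 2\sqrt{|S|}\ge 2n/\sqrt{\nproc{}}$. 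This elementary two-dimensional bound plays the role that the three-dimensional Loomis--Whitney inequality plays for matrix multiplication, and is where the exponent $\sqrt{\nproc{}}$ (rather than $\nproc{}^{2/3}$) originates.

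Next I would invoke the balanced-initialization hypothesis: at the start of $\mathcal{A}$, $P^{*}$ holds at most $\alpha n/\nproc{}$ digits of $A$ and at most $\alpha n/\nproc{}$ digits of $B$, hence at most $2\alpha n/\nproc{}$ of the input digits it needs. Every other required input digit must arrive over the network. For $\nproc{}$ larger than a constant depending on $\alpha$ (so that $2n/\sqrt{\nproc{}}$ dominates $2\alpha n/\nproc{}$), the number of input digits $P^{*}$ must receive is therefore $2n/\sqrt{\nproc{}}-2\alpha n/\nproc{}=\BOme{n/\sqrt{\nproc{}}}$ memory words. Accounting for the model's message size, these words are delivered in messages carrying at most $\msgsize{}$ words each, which yields the stated bandwidth bound $\BOme{n/(\msgsize{}\sqrt{\nproc{}})}$; and since at least one such message must be received, the latency is $\BOme{1}$.

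The main obstacle, and the step requiring the most care, is formalizing the reduction from ``algorithm execution'' to ``assignment of the $n^2$ products to processors together with the input digits each processor reads,'' so that the projection counts $|\pi_A(S)|$ and $|\pi_B(S)|$ genuinely lower-bound \emph{distinct} words that must be resident in or communicated to $P^{*}$, rather than recomputed copies or intermediate sums. In particular one must argue that the bound depends only on the input digits $A[i]$ and $B[j]$, so that no amount of local memory or clever reuse of partial sums can circumvent it -- this is precisely what makes the bound memory-independent. A secondary subtlety is pinning down the exact definition of the ``standard algorithm'' class that guarantees all $n^2$ elementary products are computed, since this is what distinguishes the $\sqrt{\nproc{}}$ regime here from the behavior of fast algorithms such as Karatsuba.
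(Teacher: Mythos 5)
This statement is imported verbatim from \cite{stefani2019io} (Corollary 12) and the paper contains no proof of it, but your reconstruction --- pigeonhole over the $\BT{n^2}$ elementary digit products to find a processor owning $n^2/\nproc{}$ of them, the two-dimensional projection inequality $|\pi_A(S)|+|\pi_B(S)|\ge 2\sqrt{|S|}$, and the balanced-initialization hypothesis to force $\BOme{n/\sqrt{\nproc{}}}$ words onto the network --- is exactly the standard Irony--Toledo--Tiskin-style memory-independent argument on which the cited result rests, so your approach is correct and essentially the same. Two small remarks: your final division by $\msgsize{}$ really counts messages, so you have in fact proved the stronger word bound $\BOme{n/\sqrt{\nproc{}}}$, which implies the stated $\BOme{n/(\msgsize{}\sqrt{\nproc{}})}$ since $\msgsize{}\ge 1$; and your caveat that the bound is vacuous unless $\nproc{}$ exceeds a constant depending on $\alpha$ (roughly $\nproc{}>\alpha^2$) is a genuine restriction, implicitly present in the original statement as well.
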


The memory-independent bound is dominant for $M\geq \BOme{\frac{n}{\msgsize{}\sqrt{\nproc{}}}}$, while the memory-dependent bound is dominant for $M\geq \BO{\frac{n}{\msgsize{}\sqrt{\nproc{}}}}$.

\paragraph{Lower bounds for parallel Karatsuba-based algorithms}
In~\cite{BilardiS19}, Bilardi and De Stefani introduced lower bounds on the communication costs of Toom-Cook integer multiplication algorithms for a model analogous to that considered in this work and discussed at the beginning of Section~\ref{sec:preliminaries}. As argued in their work, Karatsuba's algorithm can be seen as a special case of Toom-$2$. Hence, here we present a bound on the communication cost of Karatsuba's algorithm based on their more general result for Toom-Cook:

\begin{theorem}[{\cite{BilardiS19}[Theorem 4.2]}]\label{thm:karabound}
Let $\alg$ be an algorithm which uses the Karatsuba recursive strategy to compute the product of two integers $A,B$ whose base-$s$ expansions have $n$ digits using  $\nproc$ processors each equipped with a local memory of size $M$. Then:
\begin{align*}
    BW(n,P,M) &=\BOme{\left(\frac{n}{M}\right)^{\log_2 3}\frac{M}{\nproc{}}},\\
    L(n,P,M) &=\BOme{\left(\frac{n}{M}\right)^{\log_2 3}\frac{1}{\nproc{}}}.
\end{align*}
\end{theorem}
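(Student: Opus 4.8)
The plan is to prove this memory-dependent communication lower bound by combining a combinatorial bound on the \emph{computational intensity} of the Karatsuba computation with a segmenting argument along the critical execution path, in the style of the red-blue pebble game of Hong and Kung extended to the distributed setting (the approach used for Strassen's algorithm in~\cite{ballard2012communicationalg}). First I would fix the computation DAG $H_K$ induced by the Karatsuba recursive strategy on $n$-digit inputs: its vertices are the elementary digit-wise products and algebraic sums, and its edges encode the data dependencies of the three recursive products $A_1B_1$, $A_0B_0$, and $(A_1+A_0)(B_1+B_0)$ together with the combine step, applied recursively. Because $\alg$ is constrained to follow this strategy, $H_K$ is fixed, and every value a processor uses but did not itself produce must have been received, while every value it produces but another processor later consumes must have been sent; charging these transfers to $BW$ and the corresponding messages to $L$ reduces the problem to bounding, for each processor, how much of $H_K$ it can evaluate as a function of the data it touches.

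The heart of the argument is a Karatsuba-specific \emph{computational-intensity lemma}: any sub-computation whose set of inputs together with its set of persistent outputs (its dominator set and minimum set) has total size at most $S$ can complete at most $\BO{S^{\log_2 3}}$ of the elementary multiplications of $H_K$. I would prove this by induction on the recursion depth, exploiting the self-similar structure of $H_K$: the leaf multiplications are indexed by the $\{0,1,2\}$-valued choices made at each of the $\log_2 n$ recursion levels, and the inputs feeding a given leaf are determined by its path. At each level the problem size halves while the branching factor is $3$, so a dominator set of size $S$ can reach the inputs of only $\BO{S^{\log_2 3}}$ leaves; the exponent $\log_2 3$ is exactly the ratio of the branching factor to the size-halving, mirroring the role of the Loomis--Whitney inequality for classical matrix multiplication. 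This lemma is the main obstacle, since unlike matrix multiplication the Karatsuba DAG has no clean tensor-product geometry: the additions in the combine step create dependencies that cut across the three recursive subproblems, and the induction must account for the shared intermediate sums $A_1+A_0$ and $B_1+B_0$ without double counting.

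Given the lemma, I would partition the operations executed by a fixed processor into \emph{segments}, each a maximal interval during which it sends or receives at most $M$ words. A processor holds at most $M$ words in local memory, so within one segment the data it can access --- its starting memory plus its received words --- has size $\BO{M}$, and by the lemma it completes at most $\BO{M^{\log_2 3}}$ multiplications. Since Karatsuba performs $\BT{n^{\log_2 3}}$ multiplications in total and these are distributed among $\nproc{}$ processors, the busiest processor completes $\BOme{n^{\log_2 3}/\nproc{}}$ of them, and therefore needs $\BOme{\left(n^{\log_2 3}/\nproc{}\right)/M^{\log_2 3}} = \BOme{\left(\frac{n}{M}\right)^{\log_2 3}\frac{1}{\nproc{}}}$ segments.

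Finally, I would read off both bounds from this segment count. Each completed segment is delimited by at least one message, so the number of messages on the critical path --- the latency --- is at least the number of segments, giving $L(n,P,M) = \BOme{\left(\frac{n}{M}\right)^{\log_2 3}\frac{1}{\nproc{}}}$. Each segment moves $\Theta(M)$ words by construction, so summing over segments yields $BW(n,P,M) = \BOme{\left(\frac{n}{M}\right)^{\log_2 3}\frac{M}{\nproc{}}}$. The remaining care lies in the load-balancing step: one must rule out that the busiest processor evades communication by keeping its share entirely local, which the intensity lemma precludes since $M < n$ forces it to stream inputs and results through its bounded memory; handling the boundary segments (the first and last, which may move fewer than $M$ words) and the constant factors hidden in the $\{0,1,2\}$-indexing is routine but must be done to make the averaging argument tight.
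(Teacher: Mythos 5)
Theorem~\ref{thm:karabound} is not proved in this paper at all: it is imported verbatim from Bilardi and De Stefani~\cite{BilardiS19} (their Theorem 4.2) and serves here only as the benchmark against which \cok{} is measured in Section~\ref{sec:karlwbcomp}. So the comparison is necessarily with the proof in the cited reference. Your high-level skeleton does match that proof's framework: a Hong--Kung-style segment decomposition applied to a processor performing at least a $1/\nproc{}$ fraction of the work, with each segment granting access to $\BO{M}$ words, and the arithmetic converting a per-segment progress bound into $BW=\BOme{\left(n/M\right)^{\log_2 3}M/\nproc{}}$ and $L=\BOme{\left(n/M\right)^{\log_2 3}/\nproc{}}$ is carried out correctly.

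The genuine gap is the ``computational-intensity lemma,'' which is the entire technical content of the theorem and which you assert rather than prove. The sketch --- ``at each level the problem size halves while the branching factor is $3$, so a dominator set of size $S$ can reach the inputs of only $\BO{S^{\log_2 3}}$ leaves'' --- is not a valid argument: (i) the three sibling subproblems \emph{share} inputs ($A_0$, $A_1$, and $A_0+A_1$ are three operand streams manufactured from two data), so leaf-reach amplifies under sharing and a naive path count does not control it; (ii) dominator vertices may sit at intermediate levels of the CDAG, and ruling out that mixed-level placements dominate more leaves requires a quantitative argument, not an analogy with Loomis--Whitney; (iii) the model permits recomputation, which is precisely where segment arguments for fast algorithms usually break, and handling it is a central point of~\cite{BilardiS19}. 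Note also that the analogous intensity bound for Strassen, which you invoke as your template via~\cite{ballard2012communicationalg}, rests on a nontrivial edge-expansion theorem for Strassen's CDAG; no off-the-shelf counterpart exists for the Karatsuba DAG, and you would have to prove one. The cited proof avoids this route entirely: it exhibits, at recursion level $i$, $3^i$ vertex-disjoint sub-CDAGs isomorphic to Karatsuba instances on $n/2^i$ digits, proves a dominator-size lower bound for each such sub-CDAG by exploiting the full-rank evaluation/interpolation structure, fixes the level at which subproblems have input size $\Theta(M)$, and bounds the number of such subproblems any segment with $\BO{M}$ accessible words can complete --- counting completed $\Theta(M)$-size subproblems rather than elementary multiplications. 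Until your intensity lemma is actually established, your argument proves nothing; with it, the rest of your write-up (boundary segments, the latency count of at least one message per completed segment) would indeed go through.
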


In~\cite{stefani2019io}, De Stefani provides \emph{memory independent} \io{} lower bounds for parallel Toom-Cook algorithm which hold under the assumption that the input integers are originally distributed in a \emph{balanced} way among the processors, but require no assumption of the size of the local memories available to each processor. Once again, here we present a bound on the communication cost of Karatsuba's algorithm based on the more general result for Toom-Cook in\cite{stefani2019io}:

\begin{theorem}[{\cite{stefani2019io}[Corollary 12]}]\label{thm:karaboundmi}
Let $\mathcal{A}$ be any Karatsuba-based integer multiplication algorithm to multiply two integers $A,B$ represented as $n$-digit base-$s$ numbers using $\nproc{}$ processors each equipped with an unbounded local memory. Assume further that at the beginning of $\mathcal{A}$ no processor has more than $\alpha n/\nproc{}$ (where $\alpha$ is a constant with respect to $n/\nproc{}$) digits of each of the input integers stored in its local memory. Then:
\begin{align*}
	BW(n,\nproc{})  &\geq \BOme{\frac{n}{P^{1/\log_2 3}}},\\
	L(n,\nproc{})  &\geq \BOme{1}.
\end{align*} 
\end{theorem}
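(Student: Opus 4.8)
The statement is advertised as a specialization of the general memory-independent Toom-Cook bound of~\cite{stefani2019io}, so the plan is to verify that Karatsuba is precisely the Toom-$2$ instance covered by that result and then substitute parameters. Concretely, Karatsuba's recursion splits each $n$-digit factor into two halves and issues three half-size recursive multiplications (on $A_0$, $A_1$, and $A_0+A_1$, against the matching combinations of $B$), glued by the evaluate/interpolate additions; this is exactly Toom-$k$ with $k=2$ and $2k-1=3$ evaluation points, the same structure already exploited for the memory-dependent bound of Theorem~\ref{thm:karabound}. Once this identification is made the general theorem applies verbatim, and setting $k=2$ in its exponent $1/\log_k(2k-1)$ yields $1/\log_2 3$, i.e. the claimed $BW=\BOme{n/\mathcal{P}^{1/\log_2 3}}$ and $L=\BOme{1}$.

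To make the substitution transparent I would reconstruct the two ingredients behind the memory-independent argument. First, a balanced-work pigeonhole: the Karatsuba CDAG contains $\BT{n^{\log_2 3}}$ elementary digit products, so some processor $q$ must perform at least $\BOme{n^{\log_2 3}/\mathcal{P}}$ of them. Second, an expansion (input-footprint) property of the CDAG: any set of $W$ elementary products reads $\BOme{W^{1/\log_2 3}}$ distinct input digits, reflecting that the self-similar ternary recursion maps only $n$ raw inputs onto $n^{\log_2 3}$ leaves with bounded fan-in expansion. Combining the two, processor $q$ must at some point have access to $\BOme{\left(n^{\log_2 3}/\mathcal{P}\right)^{1/\log_2 3}}=\BOme{n/\mathcal{P}^{1/\log_2 3}}$ distinct input digits.

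The bandwidth bound then follows from the balanced-input hypothesis: at the start $q$ holds at most $\alpha n/\mathcal{P}$ digits of each factor, and since $1/\log_2 3<1$ the quantity $n/\mathcal{P}^{1/\log_2 3}$ asymptotically dominates the initial allotment $n/\mathcal{P}$, so all but a lower-order fraction of the $\BOme{n/\mathcal{P}^{1/\log_2 3}}$ digits it needs must be received (or the symmetric contributions to its outputs sent), forcing $BW=\BOme{n/\mathcal{P}^{1/\log_2 3}}$ words. The latency bound is immediate: for any $\mathcal{P}\geq 2$ no correct execution can be communication-free, since with balanced inputs at least one output digit aggregates contributions held on distinct processors; hence a constant number of messages is unavoidable and $L=\BOme{1}$.

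The crux — and the part genuinely supplied by~\cite{stefani2019io,BilardiS19} rather than by the specialization — is the expansion lemma for an \emph{arbitrary} share of $W$ elementary products, since $q$'s leaves need not form a single contiguous subproblem but may be scattered across the recursion tree. Establishing $\BOme{W^{1/\log_2 3}}$ for such scattered sets requires the partition/dominator-style structural analysis of the recursive CDAG, together with care that the footprint counts raw input digits $q$ must actually hold rather than intermediate evaluate/interpolate values it could recompute locally. As a sanity check on the exponent I would note that at recursion depth $\log_3\mathcal{P}$ there are exactly $\mathcal{P}$ subproblems, each on $n/2^{\log_3\mathcal{P}}=n/\mathcal{P}^{1/\log_2 3}$ digits; this matches the bound and confirms $n/\mathcal{P}^{1/\log_2 3}\geq n/\mathcal{P}$, so that communication is indeed forced.
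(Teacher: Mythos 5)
The paper contains no proof of this statement to compare against: Theorem~\ref{thm:karaboundmi} is imported verbatim from \cite{stefani2019io} (Corollary~12), and Section~\ref{sec:bounds} only restates it as a benchmark. Judged on its own terms, your reconstruction follows what is indeed the route of the cited works (and of the analogous memory-independent bounds for Strassen-based matrix multiplication): identify Karatsuba as Toom-$2$ so the exponent $1/\log_k(2k-1)$ becomes $1/\log_2 3$; pigeonhole the $\BT{n^{\log_2 3}}$ leaf products so some processor $q$ executes $\BOme{n^{\log_2 3}/\mathcal{P}}$ of them; apply an expansion-type bound $\BOme{W^{1/\log_2 3}}$ to $q$'s share; and subtract the balanced initial allotment $\alpha n/\mathcal{P}$, which is asymptotically negligible since $1/\log_2 3<1$. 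Your consistency checks ($W=n^{\log_2 3}$ gives footprint $n$; depth $\log_3\mathcal{P}$ gives $\mathcal{P}$ subproblems on $n/\mathcal{P}^{1/\log_2 3}$ digits) are correct, and the $L=\BOme{1}$ claim is as trivial as you say.

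One imprecision is worth fixing, because as literally stated one step would fail. You phrase the expansion lemma as: any $W$ elementary products ``read $\BOme{W^{1/\log_2 3}}$ \emph{distinct input digits},'' and you flag only local recomputation as the danger. The sharper danger is the opposite direction: $q$ never needs raw input digits at all, since another processor can ship it \emph{pre-combined} evaluate-phase operands (e.g.\ the digits of $A_0-A_1$, or of deeper $\pm 1$ combinations), and each such word encodes many raw digits --- at depth $d$ a single operand digit aggregates up to $2^d$ input slices. So a count of distinct raw input digits touched does \emph{not} lower-bound the number of words $q$ receives. The quantity that does is the minimum size of a dominator set of $q$'s sub-CDAG (any set of values cutting all input-to-product paths), and the content of \cite{BilardiS19,stefani2019io} is precisely that this dominator size is $\BOme{W^{1/\log_2 3}}$ even for scattered, non-contiguous collections of leaves --- the bound survives because the combined operands are themselves memory words counted by the cut. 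Your own depth-$\log_3\mathcal{P}$ sanity check illustrates this: each of the $\mathcal{P}$ subproblems there can be fed with $\BT{n/\mathcal{P}^{1/\log_2 3}}$ combined words whose raw-digit footprint is far larger. Since you do invoke the dominator-style analysis as the crux supplied by the cited works, this is a misattribution of where the difficulty lies rather than a structural flaw; restating the key lemma in terms of dominator cardinality (words) instead of raw input digits makes the argument sound, and the rest of your derivation then goes through unchanged.
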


The memory-independent bound is dominant for $M\geq \BOme{\frac{n}{\nproc{}^{\log_32}}}$, while  the memory-dependent bound is dominant for $M\geq \BO{\frac{n}{\nproc{}^{\log_32}}}$.

\section{Overview on algorithm strategy}\label{sec:overview}
In this section, we outline the main design principle shared by both our proposed integer multiplication algorithms. We will then delve in the details of \coim{} in Section~\ref{sec:algdef}, and of \cok{} in Section~\ref{sec:karatsuba}.
All the proposed algorithms follow a recursive strategy: sub-problems are recursively generated and opportunely assigned to the processors being used depending on the input size, the number of available processors, and the size of the local memory assigned to each processor. At the bottom of the recursion, sub-problems are assigned to single processors, which then compute their solution \emph{locally} without any further communication. 

Let us consider the recursion tree corresponding to the execution of the algorithm. Each node in the tree corresponds to an invocation of \coim{} (resp., \cok{}). The root of the tree corresponds to the initial invocation. The descendants of a node correspond to the recursive invocations of \coim{} (resp., \cok) used to compute the four (resp., three) sub-problems generated at each recursive level. At each recursion level, the algorithms may choose to schedule the execution of the recursive calls in two possible ways:
\begin{itemize}
    \item \textbf{Breadth First:} In a \emph{Breadth First Step} (BFS), the available $\nproc$ processors are divided into disjoint subsets of the same cardinality which are each assigned to compute \emph{in parallel} one of the recursive subproblems.
    \item \textbf{Depth First:} In a \emph{Depth First Step} (DFS), all processors are assigned to solve, each subproblem, \emph{in sequence} one at a time.
\end{itemize}

A BFS incurs a lower communication and computation cost than a DFS, as multiple recursive branches of the execution tree can be continued \emph{in parallel}. BF steps, however, require a higher available memory than DFS. Therefore, to minimize communication and computation cost, we pursue a strategy based on the opportune composition of BF and DF steps. Such a strategy is ultimately aimed to make the \emph{best possible use} of the available memory space. 

Let $\mathcal{P}$ denote the number of available processors. In each BFS, the number of processors assigned to each sub-problem is reduced by a factor equal to the number of sub-problems being generated (i.e., $4$ for \coim{}, and $3$ for $\cok{}$). Hence, after $\BO{\log \mathcal{P}}$ BFS, each of the generated sub-problem will be assigned to a single processor to be computed locally. Using DFS allows generating sub-problems of smaller input size whose result is computed 
using all the available processors. 

All our algorithms operate in two execution modes:
\begin{itemize}
    \item \textbf{Memory-independent execution mode (MI):} the algorithm executes $\BO{\log \mathcal{P}}$ consecutive BFS after which each of the sub-problems generated at the $\ell_{BF}$-th level is assigned to a \emph{single processor}, which computes the assigned sub-problem \emph{locally} with no further communication. This traversal scheme is only possible is the total combined available memory of the processors being used is sufficient given the input size (i.e., the number of digits in the base-$s$ expansion of the input integers). The denomination ``\emph{memory independent}'' highlights the fact that, provided that enough local memory space is available, the behavior of the algorithm while in the MI execution mode does not depend on the actual size of the local memories, which can then be assumed to be \emph{unlimited}, but rather only on the number of available processors.
    \item \textbf{Main execution mode:} given as input $n$-digit integers, in the main execution mode, the algorithm proceeds by executing  $\ell_{DF}$ consecutive DFS steps and then computes each of the sub-problems generated at the $\ell_{DF}$-th level in the MI execution mode. $\ell_{DF}$ is chosen as the minimum value be such that the size of the sub-problems generated at the $\ell_{DF}$-th level of recursion allows for them to be computed according to the MI scheme. As for both our algorithms the size of the sub-problems is reduced by half at every level of recursion, $\BO{\log_2 n/n_0}$ DF steps are executed in the main execution mode. This execution mode is also referred as the ``\emph{Limited Memory execution mode}'', as, contrary to the MI execution mode, the amount of available memory across the processors heavily affects the execution of the algorithm.
\end{itemize}

The values $\ell_{BF},\ell_{DF}$ and $n_0$ mentioned above depend on the specific algorithm considered, its memory requirement, the available computation resources (processors and memory), and the input size. In Section~\ref{sec:algdef} (resp., Section~\ref{sec:karatsuba}) we present the details of the proposed algorithms and their execution in both execution modes. Note that while it is possible to consider alternative strategies in which BFS and DFS are interleaved, we will show that our algorithms achieve optimal computation speedup and minimize the bandwidth communication cost.

\section{Parallel Algorithmic components}\label{sec:components}
In this section, we present subroutines for parallel addition, comparison, and subtraction of integer numbers, used in our algorithms for integer multiplications. 

\subsection{Parallel Sum with Distributed Memory}\label{sec:distsum}
Let $\mathbf{P}$ be a sequence of processors each equipped with a local memory of size $M$, and let $A$,$B$ be two $n$-digits integer numbers partitioned in $\mathbf{P}$ in $n/|\mathbf{P}|$ digits. That is 
    $$
    \begin{aligned}
A &= \left(A(\mathbf{P}[|\mathbf{P}|-1])\ldots A(\mathbf{P}[0])\right)_{s^{\frac{n}{|\mathbf{P}|}}},\label{eq:A11}\\
B &= \left(B(\mathbf{P}[|\mathbf{P}|-1])\ldots B(\mathbf{P}[0]\right)_{s^{\frac{n}{|\mathbf{P}|}}},\label{eq:B11}
\end{aligned}
$$
The parallel subroutine $\textsc{SUM}\left(\mathbf{P}, A,B, n/|\npr{P}|\right)$ computes $C=A+B$ in parallel, with $C$ being partitioned  in $\mathbf{P}$ in $n/|\mathbf{P}|$ digits, with $\mathbf{P}[|\mathbf{P}|-1]$ holding the most significant of the $n+1$ digits of $C$. Further, all processors in $\mathbf{P}$ hold a value $v\in\{0,1\}$ which denotes the most significant digit of $C$. In the following, we assume $n$ and $|\mathbf{P}|$ to be integer powers of two. If that is not the case, our algorithm may be applied with minor adjustments (e.g., padding).\\

The algorithm follows a recursive strategy. 
When $\textsc{SUM}\left(\mathbf{P}, A,B,, n/|\mathbf{P}|\right)$ is invoked, if  $|\mathbf{P}|=1$, the single processor $\mathbf{P}[0]$ computes $C$ and $v$ locally. If $|\mathbf{P}|>1$, the algorithm executes the following operations: 
\begin{enumerate}
    \item The sequence of available processors is divided in subsequences:
    \begin{equation}\label{eq:processorpartition2}
    \begin{aligned}
    \mathbf{P}' &= [\mathbf{P}[(|\mathbf{P}|/2)-1],\ldots,\mathbf{P}[0]]\\
    \mathbf{P}'' &= [\mathbf{P}[(|\mathbf{P}|-1],\ldots,\mathbf{P}[|\mathbf{P}|/2]]\\
\end{aligned}
\end{equation}
Correspondingly, let:
\begin{equation}\label{eq:partitionAB2}
        \begin{aligned}
    A_0 &= \left(A(\mathbf{P}[|\mathbf{P}|/2-1])\ldots A(\mathbf{P}[0])\right)_{s^{\frac{s}{|\mathbf{P}|}}}\\
    A_1 &= \left(A(\mathbf{P}[|\mathbf{P}|-1])\ldots A(\mathbf{P}[|\mathbf{P}|/2])\right)_{s^{\frac{s}{|\mathbf{P}|}}}\\
    B_0 &= \left(B(\mathbf{P}[|\mathbf{P}|/2-1])\ldots B(\mathbf{P}[0])\right)_{s^{\frac{s}{|\mathbf{P}|}}}\\
    B_1 &= \left(B(\mathbf{P}[|\mathbf{P}|-1])\ldots B(\mathbf{P}[|\mathbf{P}|/2])\right)_{s^{\frac{s}{|\mathbf{P}|}}}\\
\end{aligned}
\end{equation}
where $A_0$ and $B_0$ (resp., $A_1$ and $B_1$) are partitioned in $\mathbf{P}'$ (resp. $\mathbf{P}''$) in $n/|\mathbf{P}|$ digits. 
\item $\textsc{SUM}$ then invokes $\textsc{SUM}(\mathbf{P'}, A_0,B_0, n/|\mathbf{P}|)$ and the \emph{auxiliary subroutine} $\textsc{SUMA}(\mathbf{P'}, A_1,B_1, n/|\mathbf{P}|)$ to be executed in parallel.
$\textsc{SUMA}\left(\mathbf{P''}, \{A',B'\}, n\right)$ computes
\begin{equation*}
    \begin{aligned}
        C''_0 &= (A'+B')\mod s^{n|\mathbf{P'}|/|\mathbf{P}|}\\
        u_0 &=   \lfloor (A''+B'')/ s^{n|\mathbf{P'}|/|\mathbf{P}|}\rfloor\\
        C''_1 &= (A'+B'+1)\mod s^{n|\mathbf{P'}|/|\mathbf{P}|}\\
        u_1 &=   \lfloor (A''+B''+1)/ s^{n|\mathbf{P'}|/|\mathbf{P}|}\rfloor\\
    \end{aligned}
\end{equation*}
    That is, $C''_i$ (resp., $u_i$), for $i\in\{0,1\}$, denotes the value corresponding to the $n|\mathbf{P''}|/|\mathbf{P}|$ least significant digits (resp., the most significant digit) of the sum $A'+B'+i$. Such values are used to \emph{speculatively precalculate} the value of the digits of the final output $C$ to be partitioned in $\mathbf{P''}$ for the two possible values of the carryover (i.e., $v$) of the sum $A_0+B_0$ computed by $P_0$.  All the computed $C_i''$ are partitioned in $\mathbf{P''}$ in $n/|\mathbf{P}|$ digits, and each processor in $\mathbf{P''}$ holds a copy of the $u_i$'s.
\item Let us denote as $C' = (A_0+B_0)\mod s^{n/2}$ and $v' = \lfloor (A_0+B_0)/ s^{n/2}\rfloor$ the output values of the subroutine $\textsc{SUM}(\mathbf{P'}, A_0,B_0, n/|\mathbf{P}|)$. Clearly, $C'$ corresponds the the value of the $n/2$ least significant digits of $A+B$. In parallel, each processor $\mathbf{P}'[j]$, for $0\leq j\leq |\mathbf{P}|/2-1$, sends to $\mathbf{P}''[j]$ the value $v'$, and then removes it from its local memory. 
\item Upon receipt each $\mathbf{P}''[j]$ assigns $C(\mathbf{P''}[j]) = C''_{v'}(\mathbf{P''}[j])$ and $v = u_{v'}$.
In parallel, each $\mathbf{P}'[j]$ assigns $C(\mathbf{P'}[j]) = C'(\mathbf{P'}[j])$.
At the end of this step, we have:
\begin{equation*}
\begin{split}
    C = &(v, C(\mathbf{P''}[|\mathbf{P}|/2-1]]), C(\mathbf{P''}[|\mathbf{P}|/2-2]),\ldots, C(\mathbf{P''}[0]), C(\mathbf{P'}[|\mathbf{P}|/2-1]),\\ & C(\mathbf{P'}[|\mathbf{P}|/2-2]),\ldots, C(\mathbf{P'}[0]))_{s^{n/|\mathbf{P}|}}.
\end{split}
\end{equation*}
That is, $(C\mod s^n)$ is partitioned in $\mathbf{P}$ in $n/|\mathbf{P}|$ digits, and all processors in $\mathbf{P''}$ have information on the most significant digit of $C$. Then, each $\mathbf{P}'[j]$ (resp., $\mathbf{P}''[j]$) removes from its memory the temporary value $C'(\mathbf{P'}[j])$ (resp., $C''_0(\mathbf{P''}[j])$, $C''_1(\mathbf{P''}[j])$, $u_0$ and $u_1$).
\item  In parallel, each processor $\mathbf{P}''[j]$, sends to $\mathbf{P}'[j]$ a copy of $v$.
\end{enumerate}

After the completion of the initial invocation of $\textsc{SUM}$, it easily possible to reconstruct the full $C=A+B$ by having $\mathbf{P}[|\mathbf{P}|-1]$ append $v$ as the most significant digit of $C(\mathbf{P}[|\mathbf{P}|-1])$. Once $C$ is computed, all processors in $\mathbf{P}$ may remove $v$ from their local cache.

To complete the description of the algorithm, we present the details of the auxiliary procedure $\textsc{SUMA}$: If $|\mathbf{P}|=1$, $\textsc{SUMA}\left(\mathbf{P}, \{A,B\}, n'\right)$ computes locally the following values:

\noindent\begin{minipage}{.5\linewidth}
\begin{align*}
        C_0 &= (A+B)\mod s^{n'}\\
        u_0 &=   \lfloor (A+B)/ s^{n'}\rfloor
    \end{align*}
\end{minipage}%
\begin{minipage}{.5\linewidth}
\begin{align*}
        C_1 &= (A+B+1)\mod s^{n'}\\
        u_1 &=   \lfloor (A+B+1)/ s^{n'}\rfloor
    \end{align*}
\end{minipage}

\noindent where $n'$ denotes the number of digits of $A$ and $B$ stored in the local memory of each processor.

\noindent Instead, if $|\mathbf{P}|>1$, the algorithm executes the following operations:
\begin{enumerate}
    \item  As done in point (2) of the description of $\textsc{ADD}$, the sequence of available processors is divided in the two subsequences $\mathbf{P'}$ and $\mathbf{P''}$~\eqref{eq:processorpartition2}, and the input $A$ (resp., $B$) is partitioned in $A_0$ and $A_1$ (resp., $A_1$ and $B_1$)~\eqref{eq:partitionAB2}. 
\item $\textsc{SUMA}$ then recursively invokes $\textsc{SUMA}(\mathbf{P'}, A_0,B_0, n')$ and $\textsc{SUMA}(\mathbf{P''}, A_1,B_1, n')$ to be executed in parallel. In the following we denote as $C'_0$, $C'_1$, $u'_0$ and $u'_1$ (resp., $C''_0$, $C''_1$, $u''_0$ and $u''_1$) the output values of the former (resp., latter) call.
\item In parallel, each processor $\mathbf{P}'[j]$, for $j=0,1,\ldots, |\mathbf{P}|/2-1$, sends to $\mathbf{P}''[j]$ the values $u'_0$ and $u'_1$, and then removes them from its local memory. Upon receipt each $\mathbf{P}''[j]$ assigns

\noindent\begin{minipage}{.5\linewidth}
\begin{align*}
        C_0(\mathbf{P''[j]}) &= C''_{u'_0}(\mathbf{P''[j]})\\
        u_0 &=  u''_{u'_0}
\end{align*}
\end{minipage}%
\begin{minipage}{.5\linewidth}
\begin{align*}
        C_1(\mathbf{P''[j]}) &= C''_{u'_1}(\mathbf{P''[j]})\\
        u_1 &=  u''_{u'_1}
\end{align*}
\end{minipage}
\vspace{2mm}

In parallel, each $\mathbf{P}'[j]$ assigns  $C_0(\mathbf{P'[j]}) = C'_{0}(\mathbf{P'[j]})$ and $C_1(\mathbf{P'[j]}) = C'_{1}(\mathbf{P'[j]})$. Then  each  $\mathbf{P'}[j]$ (resp., $\mathbf{P''}$[j]) removes the values $C'_i$, $u'_i$ (resp., $C''_i$, $u'_i$) from its local memory, for $i=0,1$.
\item  In parallel, each processor $\mathbf{P}''[j]$, sends to $\mathbf{P}'[j]$ a copy of $u_0$ and $u_1$. 
\end{enumerate}

The following lemma characterizes the memory requirement, the computational time and \io{} cost of $\textsc{SUM}$:
\begin{lemma}\label{lem:costsum}
Let $A$,$B$ be two $n$-digit integers initially partitioned in a sequence of processors $\mathbf{P}$ in $n/|\mathbf{P}|$ digits. If each processor in $\mathbf{P}$ is equipped with a local memory of size $4(n/|\mathbf{P}|+1)$, algorithm $\textsc{SUM}$ computes the sum $C=A+B$. We have:
\begin{align*}
T_\textsc{SUM}\left(n,|\mathbf{P}|\right)&\leq 6n/|\mathbf{P}|+4\log_2 |\mathbf{P}|\\ 
B_\textsc{SUM}\left(n,|\mathbf{P}|\right)&\leq 4\log_2 |\mathbf{P}|\\ 
L_\textsc{SUM}\left(n,|\mathbf{P}|\right)&\leq 2\log_2 |\mathbf{P}|
\end{align*}
\end{lemma}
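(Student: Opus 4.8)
The plan is to prove the memory claim together with the three cost bounds by setting up and solving recurrences that mirror the recursive structure of \textsc{SUM}, analyzing its auxiliary routine \textsc{SUMA} first since \textsc{SUM} invokes it. Write $p := |\mathbf{P}|$ and $m := n/|\mathbf{P}|$. The crucial structural observation is that $m$, the number of digits stored per processor, is \emph{invariant} along the recursion: each recursive call halves both the processor sequence and the portion of each operand assigned to it, so $\mathbf{P}'$ and $\mathbf{P}''$ each receive $p/2$ processors holding $(n/2)/(p/2)=m$ digits. Hence (recalling that $n$ and $p$ are powers of two) the recursion tree has depth exactly $\log_2 p$, and every leaf is a \emph{local} addition of two $m$-digit integers.

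For correctness and the memory bound I would argue by induction on the recursion depth that \textsc{SUMA} returns, for each carry-in $i\in\{0,1\}$, the low-order digits $C_i$ of $A+B+i$ together with the outgoing carry $u_i$, and that \textsc{SUM} returns $C=A+B$ partitioned across $\mathbf{P}$. The inductive step is exactly the carry selection in steps (3)--(4): once the low half's true carry $v'$ is known, choosing $C''_{v'}$ and $u_{v'}$ among the speculatively precomputed pair yields the correct high-half digits. For the memory bound I would track peak usage and show it never exceeds $4(m+1)$: each processor simultaneously stores the $m$-digit slices of $A$ and $B$ and the two speculatively computed $m$-digit results ($4m$ words), together with only a bounded number of single-digit carry words ($u_0,u_1$, the received $u'_0,u'_1$, and $v,v'$, each in $\{0,1\}$). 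The explicit ``removes from its local memory'' steps are what keep these carries from accumulating across levels, so the peak stays within the $4m+4$ budget.

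The cost bounds then follow from the recurrences. Since \textsc{SUMA} spawns two \textsc{SUMA} calls on the disjoint halves $\mathbf{P}'$, $\mathbf{P}''$ running in parallel, every metric $F$ obeys $F_{\textsc{SUMA}}(p)=F_{\textsc{SUMA}}(p/2)+c_F$, where the per-level cost $c_F$ is the forward send of $(u'_0,u'_1)$ and the return send of $(u_0,u_1)$, each pair bundled into a single message: this is $2$ messages, $4$ words, and $\BO{1}$ digit operations (the two selections) per level. Solving along the critical path gives $L_{\textsc{SUMA}}(p)=2\log_2 p$, $B_{\textsc{SUMA}}(p)=4\log_2 p$, and $T_{\textsc{SUMA}}(p)=T_{\textsc{SUMA}}(1)+\BO{\log p}$. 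Since \textsc{SUM} instead spawns one \textsc{SUM} and one \textsc{SUMA} in parallel, its recurrence takes a maximum, $F_{\textsc{SUM}}(p)=\max\{F_{\textsc{SUM}}(p/2),F_{\textsc{SUMA}}(p/2)\}+c'_F$, whose per-level cost is just the single carry word $v'$ sent forward and $v$ sent back ($2$ messages, $2$ words, one selection). A short induction shows the \textsc{SUMA} branch always dominates the maximum: its base case adds both $A+B$ and $A+B+1$ and so, under the natural accounting of a constant number of elementary operations per digit position, costs $\le 6m$ versus $\le 3m$ for the single addition at a \textsc{SUM} leaf, and its per-level selection cost is at least that of \textsc{SUM}. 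This resolves the maxima to $F_{\textsc{SUM}}(p)=F_{\textsc{SUMA}}(p/2)+c'_F$, yielding $L_{\textsc{SUM}}(p)\le 2\log_2 p$, $B_{\textsc{SUM}}(p)\le 4\log_2 p$, and $T_{\textsc{SUM}}(p)\le 6m+4\log_2 p=6n/|\mathbf{P}|+4\log_2|\mathbf{P}|$.

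The main obstacle I anticipate is the \emph{critical-path accounting}, on two fronts. First, \textsc{SUM}($\mathbf{P}'$) and \textsc{SUMA}($\mathbf{P}''$) run on disjoint processor subsets yet carry a genuine data dependency, since the high half needs the low half's carry $v'$; I must argue carefully that the speculative precomputation is precisely what keeps $v'$ \emph{off} the critical path of the parallel phase, so that the two branches truly overlap and the recurrence uses a $\max$ rather than a sum. Second, the computation bound hinges on the carry-driven selection ``$C(\mathbf{P}''[j])=C''_{v'}(\mathbf{P}''[j])$'' costing $\BO{1}$ rather than $\BO{m}$: it must be realized as designating one of two already-resident $m$-digit blocks, not copying it, since a per-level cost of $\BO{m}$ would inflate the bound to $\BO{m\log|\mathbf{P}|}$ and break the stated $6m+4\log_2|\mathbf{P}|$. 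I would make both points explicit before collapsing the recurrences, and verify the $\le 4(m+1)$ memory budget down to its small additive constant as a final bookkeeping check.
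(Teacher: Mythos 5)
Your proposal is correct and follows essentially the same route as the paper's proof: analyze \textsc{SUMA} via a recurrence with constant per-level costs ($4$ words, $2$ messages, $\BO{1}$ selection operations), a leaf cost of $6n/|\mathbf{P}|$ for computing both speculative sums, and the same $4(n/|\mathbf{P}|+1)$ memory bookkeeping. Your explicit max-recurrence for \textsc{SUM} and your insistence that carry selection be a designation rather than a copy are exactly what the paper compresses into the remark that the analysis of \textsc{SUMA} ``subsumes'' that of \textsc{SUM}; these are worthwhile clarifications but not a different argument.
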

\begin{proof}
\textsc{SUM} correctly computes $C=A+B$ by inspection. We focus on the analysis of the computational and \io{} requirement of the auxiliary subroutine $\textsc{SUMA}$, as this subsumes the analysis of $\textsc{SUM}$.
At any point during the computation each processor must maintain in memory at most $n/|\mathbf{P}|$ digits of the input integers $A$ and $B$, $n/|\mathbf{P}|$ digits for each the values $C_0$ and $C_1$ obtained from the last recursive call, two values $u_0$ and $u_1$ returned by the previous recursive call, and at most two copies of $u'_0$ and $u'_1$ received from another processor and not yet deleted. Hence, $\textsc{SUM}$ can be executed if each processor is equipped with a local memory of size $4(n/|\mathbf{P}|+1)$.
For $|\mathbf{P}|>1$, $\textsc{SUMA}$ recursively invokes two instances to be executed in parallel. In step (3) and (4) of the algorithm, half of the processors send two memory words to distinct processors in the remaining half. Finally, at step (3), up to four comparisons are necessary to assign the values $C_0$,$C_1$,$u_0$ and $u_1$. Hence, we have that the computational cost, the bandwidth cost and the latency of $\textsc{SUMA}$ satisfy:
\begin{equation}\label{eq:suma1}
    \begin{aligned}
    T_{\textsc{SUMA}}\left(n,|\mathbf{P}|\right) &\leq T_{\textsc{SUMA}}\left(n/2,|\mathbf{P}|/2\right) + 4\\
    B_{\textsc{SUMA}}\left(n,|\mathbf{P}|\right) &\leq B_{\textsc{SUMA}}\left(n/2,|\mathbf{P}|/2\right) + 4\\
    L_{\textsc{SUMA}}\left(n,|\mathbf{P}|\right) &\leq L_{\textsc{SUMA}}\left(n/2,|\mathbf{P}|/2\right) + 2
    \end{aligned}
\end{equation}
In the base case, for $|\mathbf{P}|=1$, $\textsc{SUMA}$ computes the sums $C_0$ and $C_1$ locally without any further communication (i.e., $B_{\textsc{SUMA}}\left(n/|\mathbf{P}|,1\right)= L_{\textsc{SUMA}}\left(n/|\mathbf{P}|,1\right) = 0$). As the numbers being added have at most $n/|\mathbf{P}|$ digits, each value can be computed using at most $3n/|\mathbf{P}|$ elementary operations (i.e.,  $T_{\textsc{SUMA}}\left(n/|\mathbf{P}|,1, M\right)<6n$). Thus, from~\eqref{eq:suma1} we have  
\begin{equation*}
    \begin{aligned}
    T_{\textsc{SUMA}}\left(n,|\mathbf{P}|\right) &\leq T_{\textsc{SUMA}}\left(n/|\mathbf{P}|,1\right) + \sum_{i=1}^{\log_2 |\mathbf{P}|}4\\
    B_{\textsc{SUMA}}\left(n,|\mathbf{P}|\right) &\leq B_{\textsc{SUMA}}\left(n/|\mathbf{P}|,1 \right) + \sum_{i=1}^{\log_2 |\mathbf{P}|}4\\
    L_{\textsc{SUMA}}\left(n,|\mathbf{P}|\right) &\leq L_{\textsc{SUMA}}\left(n/|\mathbf{P}|,1 \right) + \sum_{i=1}^{\log_2 |\mathbf{P}|}2
    \end{aligned}
\end{equation*}
The lemma follows
\end{proof}
When summing $n$-digits integers using $\npr{P}$ processors, such that $\npr{P}\log_2\npr{P}\in \BO{n}$, $\textsc{SUM}$ achieves optimal speedup. 
While the presentation discussed here focuses on the sum of two integers, the procedure can be easily extended to more addends. The computation and \io{} cost scales \emph{linearly} with the number of addends.

\subsection{Parallel Comparison with Distributed Memory}
In this subsection we describe how given two $n$-digit input integers $A, B$ partitioned in a sequence of processors $\mathbf{P}$ in $n/|\mathbf{P}|$ digits, is it possible to efficiently determinate whether $A\geq B$. Our algorithm $\textsc{COMPARE}\left(\mathbf{P}, A,B\right)$ achieves asymptotically computational speedup for $n\geq\Omega\left( |\mathbf{P}|\log_2|\mathbf{P}|\right)$.\\

At the end of an invocation of $\textsc{COMPARE}\left(\mathbf{P},A,B\right)$ each processor in $\mathbf{P}$ holds a flag $f$ such that $f=0$ if $A=B$, $f=1$ if $A>B$, and $f=-1$ if $B>A$. The algorithms employs a recursive strategy. If $|\mathbf{P}|=1$, the single processor in $\mathbf{P}$ computes the value of the flag $f$ locally. If that is not the case, the following operations are executed:
\begin{enumerate}
     \item  Divide the sequence of available processors in the two subsequences $\mathbf{P'}$ and $\mathbf{P''}$ as  in~\eqref{eq:processorpartition2}, and the input $A$ (resp., $B$)  in $A_0$ and $A_1$ (resp., $A_1$ and $B_1$) as in~\eqref{eq:partitionAB2}. 
     \item Recursively invoke $\textsc{COMPARE}(\mathbf{P'}, A_0,B_0)$ and $\textsc{COMPARE}(\mathbf{P''}, A_1,B_1)$ to be executed in parallel.
     \item \sloppy Let $f'$ (resp., $f''$) denote the flag computed by $\textsc{COMPARE}(\mathbf{P'}, A_0,B_0)$  (resp., $\textsc{COMPARE}(\mathbf{P''}, A_1,B_1)$). In parallel, each processor $\mathbf{P}'[i]$, for $0\leq i\leq |\mathbf{P}|/2-1$, sends to $\mathbf{P}''[i]$ the flag $f'$ and then removes them from its cache. 
     \item Upon receipt, the $\mathbf{P}''[i]$'s, compute the new flag $f$:
     \begin{equation*}
         f = \begin{cases}f' &if\ f'\neq 0\\
         f'' &if\ f'=0\\\end{cases},
     \end{equation*}
     and then remove $f'$ and $f''$ from their local cache.
     \item In parallel, each processor $\mathbf{P}''[i]$, for $0\leq i\leq |\mathbf{P}|/2-1$, sends to $\mathbf{P}'[i]$ a copy of the flag $f$.
\end{enumerate}

\begin{lemma}\label{lem:costcompare}
Using algorithm $\textsc{COMPARE}$ to compare two $n$-digits integers $A$ and $B$ using $|\mathbf{P}|$ processors requires each processor being equipped with a local memory of size $2n/|\mathbf{P}|+2$. Further:
\begin{align*}
    T_\textsc{COMPARE}\left(n,|\mathbf{P}|\right) &\leq \frac{n}{|\mathbf{P}|} + \log_2 |\mathbf{P}|,\\
    B_\textsc{COMPARE}\left(n,|\mathbf{P}|\right) &\leq \log_2 |\mathbf{P}|,\\
    L_\textsc{COMPARE}\left(n,|\mathbf{P}|\right) &\leq \log_2 |\mathbf{P}|.
\end{align*}
\end{lemma}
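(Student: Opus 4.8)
The plan is to follow the template of Lemma~\ref{lem:costsum}, since \textsc{COMPARE} has the same recursive processor-splitting skeleton as \textsc{SUMA}. I would first establish correctness by induction on $|\mathbf{P}|$. In the base case $|\mathbf{P}|=1$ the single processor holds all of $A$ and $B$ and evaluates the sign of $A-B$ directly. For the inductive step, the partition~\eqref{eq:partitionAB2} assigns the most significant digits of $A,B$ to $\mathbf{P}''$ and the least significant to $\mathbf{P}'$; hence the lexicographic comparison of $A$ and $B$ is governed by the flag of the most significant half whenever the two most significant halves differ, and by the less significant flag only when the most significant halves tie. The verification then reduces to checking that the case split of step~(4) implements exactly this rule, and that steps~(3) and~(5) propagate the combined flag so that on return every processor in $\mathbf{P}$ stores the same final value $f$.

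For the memory requirement I would note that, unlike \textsc{SUM}, no processor ever needs to hold digits belonging to another processor: throughout the recursion each processor stores only its own block of $n/|\mathbf{P}|$ digits of $A$ and of $B$, i.e.\ at most $2n/|\mathbf{P}|$ words, together with a constant number of single-word flags $f',f'',f$. This gives the claimed local space $2n/|\mathbf{P}|+2$.

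For the three cost metrics I would set up one-level recurrences exactly as in~\eqref{eq:suma1}. An invocation with $|\mathbf{P}|>1$ issues two calls on $n/2$ digits and $|\mathbf{P}|/2$ processors that run in parallel (so only one is charged to the critical path), followed by the flag exchange of steps~(3)--(5), which performs $\BO{1}$ one-word comparisons and transmits $\BO{1}$ single-word messages. This yields
\begin{align*}
T_{\textsc{COMPARE}}(n,|\mathbf{P}|) &\leq T_{\textsc{COMPARE}}(n/2,|\mathbf{P}|/2)+c_T,\\
B_{\textsc{COMPARE}}(n,|\mathbf{P}|) &\leq B_{\textsc{COMPARE}}(n/2,|\mathbf{P}|/2)+c_B,\\
L_{\textsc{COMPARE}}(n,|\mathbf{P}|) &\leq L_{\textsc{COMPARE}}(n/2,|\mathbf{P}|/2)+c_L,
\end{align*}
with base cases $B_{\textsc{COMPARE}}(n/|\mathbf{P}|,1)=L_{\textsc{COMPARE}}(n/|\mathbf{P}|,1)=0$ and $T_{\textsc{COMPARE}}(n/|\mathbf{P}|,1)\leq n/|\mathbf{P}|$, the last of these because comparing two $n/|\mathbf{P}|$-digit numbers needs at most a single scan of the digits. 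Unrolling over the $\log_2|\mathbf{P}|$ levels of recursion then produces the target forms: $n/|\mathbf{P}|+\log_2|\mathbf{P}|$ for the computational cost and $\log_2|\mathbf{P}|$ for both bandwidth and latency.

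The step I expect to be the main obstacle is fixing the per-level constants $c_B$ and $c_L$ so that the unrolled sums match the claimed coefficient of exactly $\log_2|\mathbf{P}|$. Each level carries out a forward flag-send (step~(3), $\mathbf{P}'[i]\to\mathbf{P}''[i]$) and a backward flag-send (step~(5), $\mathbf{P}''[i]\to\mathbf{P}'[i]$), and since the parent's step~(3) requires every $\mathbf{P}'[i]$ — including the processor that only receives $f'$ during the child's step~(5) — to already hold $f'$, a direct critical-path count charges both transfers and gives a coefficient of $2$. Obtaining the tighter bound therefore hinges on arguing which of these two single-word transfers genuinely lies on the critical path for the final flag becoming available at every processor, and in particular whether the downward broadcast of step~(5) can be amortized rather than charged afresh at every level; making this accounting rigorous (or, failing that, recording the constant as $2\log_2|\mathbf{P}|$) is the delicate part of the proof.
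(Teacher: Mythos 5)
Your proposal follows the paper's proof essentially verbatim: the paper bounds the memory with exactly your inventory --- $n/|\mathbf{P}|$ digits of each of $A$ and $B$ plus the locally held flag and one received flag, giving $2(n/|\mathbf{P}|+1)$ --- and then disposes of all three cost metrics in one line by declaring the analysis ``analogous to that discussed in the proof of Lemma~\ref{lem:costsum}'', i.e., precisely the one-level recurrences you set up, unrolled over the $\log_2 |\mathbf{P}|$ levels of the recursion. The obstacle you flag at the end is real, and the paper's proof does not engage with it: the parent's step (3) requires every processor of $\mathbf{P}'$ to hold $f'$, which is exactly what the child's step (5) establishes, so the up-send and the down-send are sequentially dependent, and a faithful critical-path count (the same accounting used for $\textsc{SUMA}$, where both communication directions are charged at every level) gives $2$ words and $2$ messages per level, hence $2\log_2|\mathbf{P}|$ rather than $\log_2|\mathbf{P}|$ for both bandwidth and latency; no amortization is available, since the invariant that all processors of a group hold the group flag must be re-established at every level before the next cross-group exchange can proceed. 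Your fallback --- recording the constant as $2\log_2|\mathbf{P}|$ --- is the defensible resolution: the lemma's stated coefficients are loose by a factor of $2$, and correcting them merely perturbs the constants downstream (e.g., in Lemma~\ref{lem:costdiff}, where $\textsc{COMPARE}$'s cost is added to $\textsc{DIFFR}$'s) without affecting any asymptotic claim in the paper.
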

\begin{proof}
At any time during the computation each processor needs to maintain in its local cache at most $n/|\mathbf{P}|$ digits of $A$ and $B$, the value of the flag $f$ returned by the last invocation of $\textsc{COMPARE}$, and the value of a second flag received by another processor. Hence the memory requirement is bounded by $2(n/|\mathbf{P}|+1)$.
The analysis of the computation and \io{} cost is analogous to that discussed in the proof of Lemma~\ref{lem:costsum} for \textsc{SUM}.
\end{proof}

\subsection{Parallel Difference with Distributed Memory}
Let $A$ $B$ be two positive integer numbers whose base-$s$ expansion has at most $n$ digits, and assume them to be initially partitioned in $\mathbf{P}$ in $n/|\mathbf{P}|$ digits. The parallel subroutine $\textsc{DIFF}\left(\mathbf{P}, A,B, n/|\mathbf{P}|\right)$ computes $C=|A-B|$, with $C$ being partitioned  in $\mathbf{P}$ in $n/|\mathbf{P}|$ digits, and a flag $f$ such that $f=0$ if $A=B$, $f=1$ if $A>B$, and $f=-1$ if $B>A$. In the following, we assume $n$ and $|\mathbf{P}|$ to be integer powers of two. If that is not the case, our algorithm may be applied with  minor adjustments (e.g., padding).\\

 When $\textsc{DIFF}\left(\mathbf{P}, A,B, n/|\mathbf{P}|\right)$ is invoked, if $|\mathbf{P}|=1$ the single processor $\mathbf{P}[0]$, computes $|A-B|$ and the value $f$ locally. If $|\mathbf{P}|>1$, the following operations are executed:
\begin{enumerate}
    \item $\textsc{COMPARE}\left(\mathbf{P}, A,B\right)$ is invoked to set the value $f$. If $f=0$ then each processor $\mathbf{P[i]}$ sets $C(\mathbf{P}[i])=0$ and no further operation is executed. Instead, if $f=1$ (resp., $f=-1$), the algorithm computes $A-B$ (resp., $B-A$) using a recursive divide and conquer approach. For the sake of simplicity, in the following we assume $A>B$. The case $B>A$ follows by swapping $A$ and $B$ in the following. 
    \item Let $\mathbf{P}'$ and $\mathbf{P}''$ be defined as in~\eqref{eq:processorpartition2}, and let $A_0$, $A_1$, $B_0$ and $B_1$ defined as in~\eqref{eq:partitionAB2}. $\textsc{DIFF}$ proceeds by invoking, in parallel, two, slightly different, recursive subroutines:
    \begin{itemize}
    \item $\textsc{DIFFL}\left(\mathbf{P'}, A_0,B_0, n/|\mathbf{P}|\right)$ computes $C'= (A+s^{n|\mathbf{P'}|/|\mathbf{P}|}-B)\mod s^{n|\mathbf{P'}|/|\mathbf{P}|}$, and partitions it in $\mathbf{P'}$ in $n|\mathbf{P'}|/|\mathbf{P}|$ digits. $C'$ corresponds to the $n/2$ least significant digits of $|A-B|$. Further, $\textsc{DIFFL}$ computes  $b'$ such that $b'=0$ if $A\geq B$ and $b'=1$ otherwise.
    $b'$ is used to  denote whether when computing $A_0-B_0$ it will be necessary to ``\emph{borrow}'' from the $(n/2)$-th digit of $A$(i.e,, if yes $b'=1$, if no $b'=0$). At the end of $\textsc{DIFFL}$, all processors in $\mathbf{P}$ have a copy of $b'$ in their local cache.
    \item $\textsc{DIFFR}\left(\mathbf{P''}, A_1,B_1, n/|\mathbf{P}|\right)$ computes
    \begin{align*}
        C''_0 &= \left(A_1+s^{n|\mathbf{P''}|/|\mathbf{P}|}-B_1\right)\mod s^{n|\mathbf{P}''|/|\mathbf{P}|}\\
        b''_0 &=   \mathbf{1}(A_1 \geq B_1)\\
        C''_{1} &= \left(A_1+s^{n|\mathbf{P''}|/|\mathbf{P}|}-B_1-1\right)\mod s^{n|\mathbf{P''}|/|\mathbf{P}|}\\
        b''_1 &=   \mathbf{1}(A_1 -1 \geq B_1)
    \end{align*}
    The values $C''_i$'s are used to \emph{speculatively calculate} the $n/2$ most significant digits of $A-B$ depending whether it will necessary to \emph{borrow} from $A_1$ when computing $A_0-B_0$ (i.e., $i=0$) or not (i.e, $i=1$). Similarly, $f''_i$'s are used to \emph{speculatively calculate} whether when computing the difference $A_1-B_1-i$ it will be necessary to borrow, depending on whether it will necessary to \emph{borrow}. The computed $C_i''$'s are partitioned in $\mathbf{P''}$ in $n/|\mathbf{P}|$ digits, and each processor in $\mathbf{P''}$ holds a copy of the $b''_i$'s.
\end{itemize}
    \item Each processor $\mathbf{P'}[j]$, for $j=0,1,\ldots, |\mathbf{P}|/2-1$, sends to $\mathbf{P''}[j]$, the value $b'$ and then removes it from its cache. Upon receipt, each $\mathbf{P''}[j]$ assigns $C[\mathbf{P''}[j]] = C''_{b'}[\mathbf{P''}[j]]$, and then removes $b'$ $C''_{0}[\mathbf{P''}[j]]$, $C''_1[\mathbf{P''}[j]]$, $b''_0$, and $b''_1$ from its local memory. In parallel, each $\mathbf{P'}[j]$ assigns $C[\mathbf{P'}[j]] = C[\mathbf{P'}[j]]$, and removes $C[\mathbf{P'}[j]]$ from its local memory.
\end{enumerate}
As desired, the value $C=|A-B|$ is partitioned in $\mathbf{P}$ in $n/|\mathbf{P}|$ digits.

To conclude the description of $\textsc{DIFF}$, we now present the details of the recursive subroutines $\textsc{DIFFL}$ and $\textsc{DIFFR}$:

When $\textsc{DIFFL}\left(\mathbf{P}, \{A,B\}, n/|\mathbf{P}|\right)$ is invoked, if $|\mathbf{P}|=1$ the single processor $\mathbf{P}[0]$, computes $|A-B|$ and the value $b$ locally. If $|\mathbf{P}|>1$, the following operations are executed:
\begin{enumerate}
    \item The sequence of available processors is divided in the two subsequences $\mathbf{P'}$ and $\mathbf{P''}$~\eqref{eq:processorpartition2}, and the input $A$ (resp., $B$) is partitioned in $A_0$ and $A_1$ (resp., $A_1$ and $B_1$)~\eqref{eq:partitionAB2}. 
    \item $\textsc{DIFFL}$ recursively invokes $\textsc{DIFFL}\left(\mathbf{P}', A_0,B_0,  n/|\mathbf{P}|\right)$ and $\textsc{DIFFR}\left(\mathbf{P}'', A_1,B_1,  n/|\mathbf{P}|\right)$ to be executed in parallel. Let $C'$ and $b'$ (resp., $C''_0$,$C''_1$,$b''_0$, $b''_1$) denote the output of the former (resp., latter) call.
    \item $\textsc{DIFFL}$ proceeds to compute $|A-B|$ following the operations discussed in step (3) of the main procedure $\textsc{DIFF}$. 
    \item Additionally, when  the processor  $\mathbf{P''}[j]$'s receive $b'$ from $\mathbf{P'}[j]$, it assigns $b= b''_{b'}$, removes $b'$, $b''_0$ and $b''_1$ from its local memory, and then sends a copy of $b$ to $\mathbf{P'}[j]$.
\end{enumerate}

Finally, when $\textsc{DIFFR}\left(\mathbf{P}, A,B, n/|\mathbf{P}|\right)$ is invoked, if $|\mathbf{P}|=1$ the single processor $\mathbf{P}[0]$, computes $C_0=|A-B|$, $C_1 = |A-B-1|$, $b_0=\mathbf{1}(A\geq B)$ and $b_1=\mathbf{1}(A-1\geq B)$ locally. If $|\mathbf{P}|>1$, the following operations are executed:
\begin{enumerate}
    \item The sequence of available processors is divided in the two subsequences $\mathbf{P'}$ and $\mathbf{P''}$~\eqref{eq:processorpartition2}, and the input $A$ (resp., $B$) is partitioned in $A_0$ and $A_1$ (resp., $A_1$ and $B_1$)~\eqref{eq:partitionAB2}. 
    \item $\textsc{DIFFR}$ recursively invokes $\textsc{DIFFR}\left(\mathbf{P}', A_0,B_0,  n/|\mathbf{P}|\right)$ and $\textsc{DIFFR}\left(\mathbf{P}'', A_1,B_1,  n/|\mathbf{P}|\right)$ to be executed in parallel. Let $C_0'$,$C_1'$, $b'_0$ and $b'_1$ (resp., $C''_0$,$C''_1$,$b''_0$, $b''_1$) denote the output of the former (resp., latter) call.
    \item Each processor $\mathbf{P}'[j]$, for $i=0,1,\ldots, |\mathbf{P}|/2-1$, sends to $\mathbf{P}''[j]$, the values $b'_0$ and $b'_1$, and then removes them from its cache. 
    \item Upon receipt, each $\mathbf{P}''[i]$ assigns $C_0[\mathbf{P''}[j]] = C''_{b'_0}[\mathbf{P''}[j]]$ (resp., $C_1[\mathbf{P''}[j]] = C''_{b'_1}[\mathbf{P''}[j]]$). Further, it assigns  $b_0 = b''_{b'_0}$ (resp., $b_1 = b''_{b'_1}$), and then removes $C''_0$,$C''_1$, $b'_0$, $b'_1$, $b''_0$ and $b''_1$ from its local memory. In parallel, each $\mathbf{P'}[j]$  assigns $C_0(\mathbf{P'}[j]) = C'_{0}(\mathbf{P'}[j])$ (resp., $C_1(\mathbf{P'}[j]) = C'_{1}(\mathbf{P'}[j])$), and then removes $C''_0$an $C''_1$ from its local memory
    \item In parallel, each $\mathbf{P}''_[j]$ sends to $\mathbf{P}'_[j]$ a copy of $b_0$ and $b_1$.
\end{enumerate}
At the end of these operations, the values $C_0$, $C_1$ are partitioned in $\mathbf{P}$ as desired.

The following lemma characterizes the memory requirement, the computational time and \io{} cost of $\textsc{DIFF}$:
\begin{lemma}\label{lem:costdiff}
Using algorithm $\textsc{DIFF}$ to compute the difference of two $n$-digit integers $A-B$ using a sequence of processors $\mathbf{P}$, requires each processor to be equipped with a local memory of size at least $4n/|\mathbf{P}|+5$. We have:
\begin{align*}
T_\textsc{DIFF}\left(n,|\mathbf{P}|\right)&\leq 7n/|\mathbf{P}|+5\log_2 |\mathbf{P}|\\ 
B_\textsc{DIFF}\left(n,|\mathbf{P}|\right)&\leq 5\log_2 |\mathbf{P}|\\ 
L_\textsc{DIFF}\left(n,|\mathbf{P}|\right)&\leq 3\log_2 |\mathbf{P}|
\end{align*}
\end{lemma}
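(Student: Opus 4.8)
The plan is to prove Lemma~\ref{lem:costdiff} by reducing the analysis of $\textsc{DIFF}$ to the analyses already established for its components, following the same recursive accounting used in Lemma~\ref{lem:costsum}. Correctness of $\textsc{DIFF}$ follows by inspection of the construction: the subroutine $\textsc{COMPARE}$ fixes the sign flag $f$ and the ordering of $A$ and $B$, after which $\textsc{DIFFL}$ and $\textsc{DIFFR}$ mirror the speculative-carry structure of $\textsc{SUMA}$, but now tracking the \emph{borrow} bit instead of the carry. The key observation is that $\textsc{DIFFR}$ is structurally identical to $\textsc{SUMA}$ (two speculative values $C_0''$, $C_1''$ and two borrow bits $b_0''$, $b_1''$ propagated and selected exactly as the $u_i$'s are), so its recurrence is the governing one and subsumes both $\textsc{DIFFL}$ (which propagates only a single value and borrow bit) and the top-level $\textsc{DIFF}$.

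First I would bound the memory requirement. At any point a processor executing $\textsc{DIFFR}$ must retain at most $n/|\mathbf{P}|$ digits of each of $A$ and $B$, the two speculative results $C_0''$ and $C_1''$ (each $n/|\mathbf{P}|$ digits), and a constant number of borrow bits $b_0', b_1', b_0'', b_1''$. This gives the stated bound of $4n/|\mathbf{P}|+5$, where the additive constant absorbs the borrow bits and the sign flag $f$ passed down from the initial $\textsc{COMPARE}$ call; this dominates the $\textsc{COMPARE}$ requirement of $2n/|\mathbf{P}|+2$ from Lemma~\ref{lem:costcompare}, so the overall requirement is as claimed.

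Next I would set up the per-level recurrences exactly as in~\eqref{eq:suma1}. At each recursive level with $|\mathbf{P}|>1$, $\textsc{DIFFR}$ makes two parallel recursive calls on inputs of half the size with half the processors; steps (3)--(5) then incur a constant number of messages (each carrying the two borrow bits) and a constant number of digit-comparisons to select among the speculative values. This yields $T_{\textsc{DIFFR}}(n,|\mathbf{P}|)\leq T_{\textsc{DIFFR}}(n/2,|\mathbf{P}|/2)+c_T$, and similarly $B_{\textsc{DIFFR}}(n,|\mathbf{P}|)\leq B_{\textsc{DIFFR}}(n/2,|\mathbf{P}|/2)+c_B$ and $L_{\textsc{DIFFR}}(n,|\mathbf{P}|)\leq L_{\textsc{DIFFR}}(n/2,|\mathbf{P}|/2)+c_L$ for suitable constants. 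In the base case $|\mathbf{P}|=1$, the processor computes $C_0,C_1,b_0,b_1$ locally with no communication, using $O(n/|\mathbf{P}|)$ elementary operations (a constant number of full-length subtractions/comparisons, giving the leading digit-operation term). Unrolling over the $\log_2|\mathbf{P}|$ levels produces the additive $\Theta(\log_2|\mathbf{P}|)$ communication terms and the $\Theta(n/|\mathbf{P}|)$ computation term; the final constants ($7$, $5$, $3$) are then obtained by bookkeeping the exact number of local subtractions, words sent, and messages per level.

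The one point requiring genuine care — the main obstacle — is verifying that the top-level $\textsc{DIFF}$ does not inflate these bounds beyond the stated constants, since it invokes $\textsc{COMPARE}$ \emph{in addition} to the recursive difference computation. Here I would invoke Lemma~\ref{lem:costcompare}, whose cost is additive in the $\log_2|\mathbf{P}|$ terms and whose leading computation term $n/|\mathbf{P}|$ is of the same order, and simply check that the sum of the $\textsc{COMPARE}$ contribution and the $\textsc{DIFFR}$-governed recurrence stays within the claimed constant factors. Because $\textsc{COMPARE}$, $\textsc{DIFFL}$, and $\textsc{DIFFR}$ all share the same divide-by-two-in-both-$n$-and-$|\mathbf{P}|$ recursion shape, their costs compose by simple addition of constants per level, and no cross-term growth arises; the lemma then follows.
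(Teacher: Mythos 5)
Your proposal matches the paper's own proof essentially step for step: correctness by inspection, the $4n/|\mathbf{P}|+5$ memory accounting, reduction of the analysis to $\textsc{DIFFR}$ (whose recurrence mirrors that of $\textsc{SUMA}$ in~\eqref{eq:suma1} and subsumes $\textsc{DIFFL}$), unrolling over $\log_2|\mathbf{P}|$ levels, and finally adding the $\textsc{COMPARE}$ contribution via Lemma~\ref{lem:costcompare}. The paper's version simply carries the explicit per-level constants ($4$ words, $2$ messages, $4$ comparisons per level; $6n/|\mathbf{P}|$ at the base case) that you defer to bookkeeping, and these do yield the stated $7$, $5$, $3$.
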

\begin{proof} \textsc{DIFF} correctly computes $|A-B|$ and the sign of $A-B$ by inspection.

By Lemma~\ref{lem:costcompare}, the initial invocation of $\textsc{COMPARE}\left(\mathbf{P}, A,B\right)$, requires each processor in $\mathbf{P}$ to be equipped with a memory of size at least $2(n/|\mathbf{P}|+1)$. For $j=0,1,\ldots |\mathbf{P}|-1$, the locations used to store the digits of $A(\mathbf{P}[j])$ and $B(\mathbf{P}[j])$ are reused in the remainder of the computation. At any point during the computation each processor must maintain in memory at most $n/|\mathbf{P}|$ digits of the input integers $A(\mathbf{P}[j])$ and $B(\mathbf{P}[j])$ , $2n/|\mathbf{P}|$ (resp., $2$) digits for the values $C_0$ and  $C_1$ (resp., $b_0$ and $b_1$) returned by the last recursive call, at most two copies of $b'_0$ and $b'_1$ received from another processor and not yet deleted, and the flag $f$ computed by the initial call to $\textsc{COMPARE}$. Hence, $\textsc{DIFF}$ can be executed if each processor is equipped with a local memory of size $4n/|\mathbf{P}|+5$.
We focus on the analysis of the computational and \io{} requirement of the auxiliary subroutine $\textsc{DIFFR}$, as this subsumes the analysis of $\textsc{DIFFL}$. 
For $|\mathbf{P}|>1$, $\textsc{DIFFR}$ recursively invokes two instances to be executed in parallel. In step (3) and (5) of the algorithm, half of processors send two memory words to distinct processors in the remaining  half. Finally, at step (4), up to four comparisons are necessary to assign the values $C_0$,$C_1$,$b_0$ and $b_1$. Hence, we have that the computational cost, the bandwidth cost and the latency of $\textsc{DIFFR}$ satisfy:
\begin{equation}\label{eq:diff1}
    \begin{aligned}
    T_{\textsc{DIFFR}}\left(n,|\mathbf{P}|\right) &\leq T_{\textsc{DIFFR}}\left(n/2,|\mathbf{P}|/2\right) + 4\\
    B_{\textsc{DIFFR}}\left(n,|\mathbf{P}|\right) &\leq B_{\textsc{DIFFR}}\left(n/2,|\mathbf{P}|/2\right) + 4\\
    L_{\textsc{DIFFR}}\left(n,|\mathbf{P}|\right) &\leq L_{\textsc{DIFFR}}\left(n/2,|\mathbf{P}|/2\right) + 2
    \end{aligned}
\end{equation}
In the base case, for $|\mathbf{P}|=1$, $\textsc{DIFFR}$ computes the differences $C_0$ and $C_1$ locally without any further communication (i.e., $B_{\textsc{SUMA}}\left(n/|\mathbf{P}|,1,M\right)= B_{\textsc{DIFFR}}\left(n/|\mathbf{P}|,1,M\right) = 0$). As the numbers being added have at most $n/|\mathbf{P}|$ digits, each value can be computed using at most $3n/|\mathbf{P}|$ elementary operations (i.e.,  $T_{\textsc{DIFFR}}\left(n/|\mathbf{P}|,1, M\right)<6n$). Thus, from~\eqref{eq:diff1} we have  
\begin{equation*}
    \begin{aligned}
    T_{\textsc{DIFFR}}\left(n,|\mathbf{P}|\right) &\leq T_{\textsc{DIFFR}}\left(n/|\mathbf{P}|,1\right) + \sum_{i=1}^{\log_2 |\mathbf{P}|-1}4\\
    B_{\textsc{DIFFR}}\left(n,|\mathbf{P}|\right) &\leq B_{\textsc{DIFFR}}\left(n/|\mathbf{P}|,1 \right) + \sum_{i=1}^{\log_2 |\mathbf{P}|}4\\
    L_{\textsc{DIFFR}}\left(n,|\mathbf{P}|\right) &\leq L_{\textsc{DIFFR}}\left(n/|\mathbf{P}|,1 \right) + \sum_{i=1}^{\log_2 |\mathbf{P}|-1}2
    \end{aligned}
\end{equation*}
The lemma follows by summing the computation time (resp., \io{} cost) of $\textsc{DIFFR}$ with that of the initial invocation of $\textsc{COMPARE}$ (according to Lemma~\ref{lem:costcompare}).
\end{proof}
When computing the difference of $n$-digits integers using $\npr{P}$ processors, such that $\npr{P}\log_2\npr{P}\in \BO{n}$, $\textsc{DIFF}$ achieves optimal speedup. 
As for $\textsc{SUM}$, while the presentation discussed here focuses on the difference of two integers, the procedure can be easily extended to more inputs. The computation and \io{} cost scales \emph{linearly} with the number of input integers.

\section{\coim{}: Communication-optimal Parallel Standard Integer Multiplication}\label{sec:algdef}
In this section we present our first algorithm, \coim{} (Communication Optimal Parallel Standard Integer Multiplication), a parallel implementation of the sequential recursive \emph{long multiplication} algorithm, which given two input $n$-digit integers computes their product by executing $\BO{n^2}$ operations. The sequential algorithm, henceforth referred to as $\textsc{SLIM}$ (Sequential Long Integer Multiplication), follows a simple recursive strategy: If $n=1$, the product $C=A\times B$ is computed directly. Otherwise, let:

\noindent\begin{minipage}{.5\linewidth}
\begin{align*}
        A_0 &= \left(A[\lceil n/2\rceil-1],\ldots, A[0]\right)_s\\  
B_0 &= \left(B[\lceil n/2\rceil-1],\ldots, B[0]\right)_s
\end{align*}
\end{minipage}%
\begin{minipage}{.5\linewidth}
\begin{align*}
        A_1 &= \left(A[n-1],\ldots, A[\lceil n/2\rceil]\right)_s\\  
        B_1 &= \left(B[n-1],\ldots, B[\lceil n/2\rceil]\right)_s
\end{align*}
\end{minipage}
\vspace{2mm}

\noindent $\textsc{SLIM}$ is then recursively invoked to compute the products $C_0=A_0\times B_0$, $C_1 = A_0\times B_1$, $C_2= A_1\times B_0$ and $C_3=A_1\times B_1$. Finally, $C$ is computed as $C=C_0 + s^{n/4}(C_1+C_2)+s^{n/2}C_3$. 

\begin{fact}\label{lem:seqcostlim}
Algorithm $\textsc{SLIM}$ computes the product of two $n$-digit integers using at most $8n^2$ digit wise operations and memory space of size at most $8n$. 
\end{fact}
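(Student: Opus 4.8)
The plan is to prove the three assertions---correctness, the $8n^2$ operation bound, and the $8n$ space bound---separately, each by induction on $n$ mirroring the recursive structure of $\textsc{SLIM}$. First I would verify correctness: the base case $n=1$ is immediate since the single digit-wise product is computed directly, and for the inductive step the splitting $A = s^{\lceil n/2\rceil}A_1 + A_0$ and $B = s^{\lceil n/2\rceil}B_1 + B_0$ yields the schoolbook identity $A\times B = A_0B_0 + s^{\lceil n/2\rceil}(A_0B_1 + A_1B_0) + s^{2\lceil n/2\rceil}A_1B_1$, so that, by the inductive hypothesis that the four recursive calls return $C_0,\dots,C_3$ correctly, the combination step reproduces exactly this value.

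For the operation count, let $T(n)$ bound the number of digit-wise operations. Each of the four recursive calls runs on inputs of at most $\lceil n/2\rceil$ digits, and their results $C_0,\dots,C_3$ each have at most $n$ digits. The combination step performs only a constant number of additions of integers of size $\BO{n}$ (the shifts are mere reindexings and cost nothing), and adding two integers of size $m$ costs a number of digit operations linear in $m$ once carry propagation is counted. This gives $T(n)\leq 4T(n/2) + cn$ with $T(1)\leq 1$. I would solve this by expansion: the per-level additive contributions $4^i\,c\,(n/2^i) = 2^i cn$ form a geometric series summing to $cn^2 - cn$, while the leaves contribute $T(1)\,n^2$, so $T(n) \leq (T(1)+c)n^2 - cn \leq 8n^2$. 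The only genuine work is to pin down $c$ by counting the exact number of digit operations in the combination step tightly enough that $T(1)+c\leq 8$.

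For the space bound, I would exploit that $\textsc{SLIM}$ is sequential: the four subproducts can be computed one at a time, each accumulated into a single output buffer of size at most $2n$ and its scratch released before the next sibling call is launched. Hence at any instant the live memory consists of the output accumulator, the operand slices (which are substrings of the already-stored inputs rather than new allocations), one buffer holding the current subproduct, and the space used by the single active recursive call. This yields a recurrence $S(n)\leq c'n + S(n/2)$, whose solution is $S(n)\leq 2c'n$, so choosing the accounting with $c'\leq 4$ gives $S(n)\leq 8n$.

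I expect the space analysis to be the delicate part. The $8n$ bound holds only if memory is genuinely reused: the scratch and result buffers of each completed subproduct must be freed before the next child begins, so that overhead does not accumulate across the four siblings or down the recursion. Verifying that the live per-level overhead is truly $\BO{n}$---rather than, say, simultaneously storing all four $n$-digit results at every level, which would break the geometric summation---is the crux of the argument; by comparison, the operation count and correctness are routine inductions.
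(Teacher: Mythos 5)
The paper supplies no proof of this Fact at all: like its Karatsuba counterpart (Fact~\ref{lem:seqcostcok}), it is asserted by inspection of $\textsc{SLIM}$, so your write-up is strictly more detailed than anything in the paper, and its architecture --- correctness by induction on the splitting identity, the recurrence $T(n)\leq 4T(n/2)+cn$ unrolled to $(T(1)+c)n^2-cn$, and the space recurrence $S(n)\leq c'n+S(n/2)$ summing geometrically because the four sibling calls are executed sequentially with buffers freed --- is the right formalization and is sound. (Your identity with shifts $s^{\lceil n/2\rceil}$ and $s^{2\lceil n/2\rceil}$ is the correct one; the paper's recombination line, written $C=C_0+s^{n/4}(C_1+C_2)+s^{n/2}C_3$, contains exponent typos that you silently fix.) Two caveats on the constants, which you rightly flag as the only genuine work. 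First, whether $T(1)+c\leq 8$ holds depends on the operation-accounting convention: under the convention the paper itself uses elsewhere (the proof of Lemma~\ref{lem:costsum} charges roughly $3m$ elementary operations to add two $m$-digit integers), three naive additions of $\Theta(n)$-digit partial products would push $c$ above $7$; to stay under $8n^2$ you must either charge one operation per digit position or exploit that $C_0$ and $s^{2\lceil n/2\rceil}C_3$ occupy disjoint digit ranges, so only the middle $\approx n$ positions require genuine addition-with-carry work, giving $c\approx 3$. Second, in the space bound note that the root level alone holds $2n$ digits of input and up to $2n$ digits of output, so $c'\leq 4$ leaves essentially no slack beyond the factor-$2$ geometric sum; the bound survives, but your observation that operand slices are substrings of already-stored inputs rather than fresh allocations is load-bearing, not cosmetic. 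With one accounting convention pinned down explicitly, your proof is complete and fills a gap the paper leaves open.
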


Our parallel algorithm $\coim{}$ implements the recursive scheme of $\textsc{SLIM}$ to take advantage of multiple available processors following the general outline discussed in Section~\ref{sec:overview}. We assume that the input factor $n$-digit integers to be partitioned in a sequence of processors $\mathbf{P}$ in $n/|\mathbf{P}|$ digits at the beginning of the computation. Further, we assume each processor to be equipped with a local memory of size $M\geq 80n/|\mathbf{P}|$. Finally, to simplify our analysis, we assume $n$ to be an integer power of two, and $|\mathbf{P}|$ to be an integer power of four, with $n\geq |\mathbf{P}|$. If that is not the case, the input can be padded with dummy digits, and/or some of the available processors may not be used. The following description and analysis remain correct in these cases with small corrections of the constant factors.

If $n\leq M\sqrt{|\mathbf{P}|}/12$, then the product is computed by \coim{} in the \emph{MI execution mode} (i.e.,$\coim{}_{MI}$) in $\log_4 |\mathbf{P}|$ \emph{breadth-first} recursive steps. If that is not the case, \coim{} proceeds in the \emph{main execution mode} by executing up to $\lceil \log_2 n/(24M\sqrt{|\mathbf{P}|})\rceil$ \emph{depth-first} recursive steps. The sub-problems generated after such steps will have input size which, given the size of the available memory,  allows for their solution to be calculated in the MI execution mode. \coim{} uses the sequential algorithm $\textsc{SLIM}$ to compute the products of integers $\emph{locally} $. Clearly, any sequential algorithm can be used in place of it. In the following, we present in detail and analyze \coim{}'s execution in the MI and main execution mode.

\subsection{\coim{} in the MI execution mode}
We denote the operations of $\coim{}$ in the MI execution mode as $\coim{}_{MI}$. 
Consider an invocation of $\coim{}_{MI}$  to multiply two $n$-digits integers using a sequence of processors $\mathbf{P}$. If $|\mathbf{P}| = 1$, then the product $C=A\times B$ is computed by the single available processor using the sequential algorithm $\textsc{SLIM}$.
If $|\mathbf{P}| > 1$, $\coim{}_{MI}$ proceeds as follows:
\begin{enumerate}
    \item \textbf{Splitting:} Let $\mathbf{P}'$ and $\mathbf{P}''$ be two subsequences of $\mathbf{P}$ defined as in~\eqref{eq:processorpartition2}, and let $A_0$ and $A_1$ ($B_0$ and $B_1$) be defined as~\ref{eq:partitionAB2}. $A_0$ (resp., $B_0$) is partitioned in $\mathbf{P}'$ and $A_1$ (resp., $B_1$) is partitioned in $\mathbf{P}'$ in $n/|\mathbf{P}|$ digits.
 \coim{} divides the available $\mathbf{P}$ processors into four sub-sequences each with $|\mathbf{P}|/4$ processors:
$$
\begin{aligned}
    \mathbf{P_0} &= [\mathbf{P}[(|\mathbf{P}|/2)-2],\ldots,\mathbf{P}[2],\mathbf{P}[0]]\\
    \mathbf{P_1} &= [\mathbf{P}[(|\mathbf{P}|/2 -1],\ldots,\mathbf{P}[3],\mathbf{P}[1]]\\
    \mathbf{P_2} &= [\mathbf{P}[(|\mathbf{P}| -2],\ldots,\mathbf{P}[|\mathbf{P}|/2 +2],\mathbf{P}[|\mathbf{P}|/2]]\\
    \mathbf{P_3} &= [\mathbf{P}[(|\mathbf{P}|-1],\ldots,|\mathbf{P}|/2+3],\mathbf{P}[|\mathbf{P}|/2+1]]
\end{aligned}
$$
The sub-sequence $\mathbf{P_0}$ (resp., $\mathbf{P_1}$) is assigned the even-index (resp., odd-index) processors in the first half of $\mathbf{P}$ (i.e., $\mathbf{P}'$), while $\mathbf{P_2}$ (resp., $\mathbf{P_3}$) is assigned the even-index (resp., odd-index) processors in the second half of $\mathbf{P}$ (i.e., $\mathbf{P}''$). In a BF step, \coim{} assigns each subsequence of the available processors to one of the four sub-problems which are recursively invoked to compute the product $A\times B$: $\mathbf{P_0}$ computes $A_0\times B_0$, $\mathbf{P_1}$ computes $A_0\times B_1$, $\mathbf{P_2}$ computes $A_1\times B_0$ and $\mathbf{P_3}$ computes $A_1\times B_1$.
\coim{} transfers to each sequence $\mathbf{P_i}$ the input integers for the corresponding sub-problem. This is achieved in the two following parallel communication steps: 
\begin{enumerate}
    \item In parallel, each odd-index (resp., even-index) processor $\mathbf{P}[i]$ (resp., $\mathbf{P}[j]$) for $i=1,3,\ldots,|\mathbf{P}|/2-1$ (resp., $j=|\mathbf{P}|/2,|\mathbf{P}|/2+2,\ldots,|\mathbf{P}|-2$) sends to $\mathbf{P}[i-1]$ (resp., $\mathbf{P}[j+1]$) a copy of the $n/|\mathbf{P}|$ digits of $A(\mathbf{P}[i])$ and $B(\mathbf{P}[i])$ (resp., $A(\mathbf{P}[j])$ and $B(\mathbf{P}[j])$). 
    At the end of this step $A_0$ and $B_0$ (resp., $A_1$ and  $B_11$) are partitioned in $\mathbf{P_0}$ (resp., $\mathbf{P_2}$) in $2n/|\mathbf{P}|$ digits. 
    \item In parallel, all processor $\mathbf{P_0}[i]$ (resp., $\mathbf{P_3}[i]$), for $i\in\{0,1,\ldots, |\mathbf{P}|/4\}$,  send to $\mathbf{P_1}[i]$ (resp., $\mathbf{P_2}[i]$) a copy of the digits of $A_0(\mathbf{P_0}[i])$ (resp.,  $A_1(\mathbf{P_2}[i])$).
    Ad the end of this step  a copy of $A_0$ and (resp., $A_1$) is partitioned in $\mathbf{P_1}$ (resp., $\mathbf{P_3}$) in $2n/|\mathbf{P}|$ digits.
    \item In parallel, all processor $\mathbf{P_0}[i]$ (resp., $\mathbf{P_3}[i]$), for $i\in\{0,1,\ldots, |\mathbf{P}|/4\}$, send  to $\mathbf{P_2}[i]$ (resp., $\mathbf{P_1}[i]$) a copy of the digits of $B_0(\mathbf{P_0}[i])$ (resp.,  $B_1(\mathbf{P_2}[i])$).
    Ad the end of this step  a copy of $B_0$ and (resp., $B_1$) is partitioned in $\mathbf{P_3}$ (resp., $\mathbf{P_1}$) in $2n/|\mathbf{P}|$ digits.
\end{enumerate}
 
 \item \textbf{Recursive multiplication:} \sloppy The four sub-products are computed in parallel by recursively invoking $\coim{}_{MI}$: $C_0= \textsc{COIM}_{MI}(\mathbf{P_0},A_0, B_0,n/|\mathbf{P}|)$,  $C_1= \textsc{COIM}_{MI}(\mathbf{P_1},A_0, B_1,n/|\mathbf{P}|)$ $\mathbf{P_0}$, $C_2= \textsc{COIM}_{MI}(\mathbf{P_2},A_1, B_0,n/|\mathbf{P}|)$ and $C_3= \textsc{COIM}_{MI}(\mathbf{P_3},A_1, B_1,n/|\mathbf{P}|)$. At the end of these recursive calls, $C_i$, for $i=0,1,2,3$, is partitioned in $\mathbf{P_i}$ in $4n/|\mathbf{P}|$ digits.
 
 \item \textbf{Recomposition:} $\coim{}_{MI}$ composes the values $C_0$, $C_1$,$C_2$ and $C_3$ to obtain $C$. First, the outputs of the sub-problems are opportunely redistributed among the processors: 
 \begin{itemize}
    \item[(a)] In parallel, each processor $\mathbf{P}_0[j]$ (resp., $\mathbf{P}_3[j]$) for $j=1,3,\ldots,|\mathbf{P}_0|= |\mathbf{P}_3| = |\mathbf{P}|/4$, sends to $\mathbf{P}_1[j]$ (resp., $\mathbf{P}_2[j]$) the $2n/|\mathbf{P}|$ most (resp., least) significant digits of $C_0(\mathbf{P}_1[j])$ (resp., $C_3(\mathbf{P}_3[j])$), and then it deletes them from its local memory. 
    At the end of this step $C_0$ (resp., $C_3$) is partitioned in $\mathbf{P}[|\mathbf{P}|/2-1..0]$ (resp., $\mathbf{P}[|\mathbf{P}|-1..|\mathbf{P}|/2]$) in $2n/|\mathbf{P}|$ digits.  
    
    \item[(b)] In parallel, each processor $\mathbf{P}_1[j]$ (resp., $\mathbf{P}_2[j]$) for $j=1,3,\ldots,|\mathbf{P}_1|= |\mathbf{P}_2| = |\mathbf{P}|/4$, sends to $\mathbf{P}_0[j]$ (resp., $\mathbf{P}_3[j]$) the $2n/|\mathbf{P}|$ least (resp., most) significant digits of $C_1(\mathbf{P}_1[j])$ (resp., $C_2(\mathbf{P}_3[j])$), and then it deletes them from its local memory. 
    At the end of this step $C_1$ (resp., $C_2$) is partitioned in $\mathbf{P}[|\mathbf{P}|/2-1..0]$ (resp., $\mathbf{P}[|\mathbf{P}|-1..|\mathbf{P}|/2]$) in $2n/|\mathbf{P}|$ digits.  
    
    \item[(c)] In parallel, each processor $\mathbf{P}[j+|\mathbf{P}|/4]$ for $j=0,1,\ldots,|\mathbf{P}|/4-1$, sends to $\mathbf{P}[j+|\mathbf{P}|/2]$ the $2n/|\mathbf{P}|$ digits of $C_1(\mathbf{P}[j+|\mathbf{P}|/4])$, and then it deletes them from its local memory.
    At the end of this step the $n/2$ most significant digits of $C_1$ are partitioned in $\mathbf{P}[3|\mathbf{P}|/4-1..|\mathbf{P}|/2]$ in $2n/|\mathbf{P}|$ digits.  
    
    \item[(d)] In parallel, each processor $\mathbf{P}[j+|\mathbf{P}|/2]$ for $j=0,1,\ldots,|\mathbf{P}|/4-1$, sends to $\mathbf{P}[j+|\mathbf{P}|/4]$ the $2n/|\mathbf{P}|$ digits of $C_2(\mathbf{P}[j+|\mathbf{P}|/4])$, and then it deletes them from its local memory.
    At the end of this step the $n/2$ least significant digits of $C_1$ are partitioned in $\mathbf{P}[|\mathbf{P}|/2-1..|\mathbf{P}|/4]$ in $2n/|\mathbf{P}|$ digits.  
    
    \item[(e)] In parallel, each processor $\mathbf{P}[j]$ (resp., $\mathbf{P}[j+3|\mathbf{P}|/4]$) for $j=0,1,\ldots,|\mathbf{P}|/4-1$, sends to $\mathbf{P}[j+|\mathbf{P}|/4]$ (resp., $\mathbf{P}[j+|\mathbf{P}|/2]$) the $2n/|\mathbf{P}|$ digits of $C_1(\mathbf{P}[j])$ (resp., $C_2(\mathbf{P}[j+3|\mathbf{P}|/4])$), and then it deletes them from its local memory.
 \end{itemize}
 After these five parallel communication steps, $C_0$ is partitioned in $[\mathbf{P}[|\mathbf{P}|/2-1],\ldots,\mathbf{P}[0]]$ in $2n/|\mathbf{P}|$ digits, $C_1$ and $C_2$ are partitioned in $[\mathbf{P}[3|\mathbf{P}|/4-1],\ldots,\mathbf{P}[|\mathbf{P}|/4]]$ in $2n/|\mathbf{P}|$ digits, and $C_3$ is partitioned  in $[\mathbf{P}[|\mathbf{P}|-1],\ldots,\mathbf{P}[|\mathbf{P}|/2]]$ in $2n/|\mathbf{P}|$ digits. We have $C(\mathbf{P}[j])= C_0((\mathbf{P}[j])$ for $j=0,1,\ldots, |\mathbf{P}|/4-1$. Let $C'_0$ denote the integer whose base-$s$ expansion corresponds to the $n/2$ most significant digits of $C_0$, and let $C'_3 = C_3*s^{n/2}$ (i.e., the integer whose base-$s$ expansion correspond to the digits of $C_3$ shifted $n/4$ times). By construction, $C'_0$ $C'_3$ (as well as $C_1$ and $C_2$) are partitioned in the subsequence $\mathbf{P}^*= [\mathbf{P}[|\mathbf{P}|-1],\ldots,3|\mathbf{P}|/4+1,3|\mathbf{P}|/4]$ in $2n/|\mathbf{P}|$ digits. The $3n/2$ most significant digit of $C$ correspond to those of the sum of $C'_0$, $C_1$,$C_2$ and $C'_3$, which can be computed with three consecutive invocations of the $\textsc{SUM}$ subroutine discussed in Section~\ref{sec:distsum} using the sequence $\mathbf{P}^*$. The $3n/2$ digits of such sum are partitioned in $P^*$ in $2n/|\mathbf{P}|$ digits, and, hence, $C$ is partitioned in $\mathbf{P}$ in $2n/|\mathbf{P}|$ digits.
\end{enumerate}

The following theorem characterizes memory utilization, computation cost, and \io{} cost for $\coim{}_{MI}$:

\begin{theorem}\label{thm:costcoimUM}
Let $A$ and $B$ be two $n$-digit integers portioned in a sequence of processors $\mathbf{P}$, with $n\geq |\mathbf{P}|$, in $n/|\mathbf{P}|$ digits. $\coim{}_{MI}$ computes the product $C=A\times B$ using the processors in $\mathbf{P}$, provided that each processor is equipped with a local memory of size at least $M_{\coim{}_{MI}}(n,|\mathbf{P}|)=12n/\sqrt{|\mathbf{P}|}$.  We have:
\begin{align*}
    T_{\coim{}_{MI}}\left(n,|\mathbf{P}|\right) &\leq 38\frac{n^2}{|\mathbf{P}|} + 3\log_2^2 |\mathbf{P}|\\
    BW_{\coim{}_{MI}}\left(n,|\mathbf{P}|\right) &\leq 14\frac{n}{\sqrt{|\mathbf{P}|}} + 6\log^2_2 |\mathbf{P}|\\
    L_{\coim{}_{MI}}\left(n,|\mathbf{P}|\right) &\leq 3 \log^2_2 |\mathbf{P}|
\end{align*}
\end{theorem}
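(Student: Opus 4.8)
The plan is to prove the four assertions — correctness, the memory bound, and the three cost bounds — by analyzing the $4$-way breadth-first recursion tree, whose depth is exactly $\log_4 |\mathbf{P}|$ since each BFS quarters the processor count. A node at recursion level $k$ (with $0 \le k < \log_4|\mathbf{P}|$) operates on integers of $n/2^k$ digits using $|\mathbf{P}|/4^k$ processors, and at level $\log_4|\mathbf{P}|$ every subproblem is held by a single processor multiplying two integers of $n/2^{\log_4|\mathbf{P}|} = n/\sqrt{|\mathbf{P}|}$ digits, solved locally by $\textsc{SLIM}$. Correctness follows by induction on this tree: the base case is the correctness of $\textsc{SLIM}$; for the inductive step I would verify that the splitting substeps place $A_0,B_0,A_1,B_1$ correctly in $\mathbf{P_0},\ldots,\mathbf{P_3}$, and that after the five recomposition substeps the shifted summands $C_0',C_1,C_2,C_3'$ are aligned in $\mathbf{P}^*$ so that three applications of $\textsc{SUM}$ reconstruct the most significant digits of $C = C_0 + s^{n/2}(C_1+C_2) + s^n C_3$.

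For the computation cost I would set up the critical-path recurrence
\begin{equation*}
T_{\coim{}_{MI}}(n,|\mathbf{P}|) \le T_{\coim{}_{MI}}(n/2,|\mathbf{P}|/4) + 3\,T_{\textsc{SUM}}\!\left(\BO{n},|\mathbf{P}|/4\right),
\end{equation*}
since the four subproblems run in parallel (only one branch is charged) and the recomposition is three consecutive sums on $\BO{n}$-digit operands over a subsequence of $|\mathbf{P}|/4$ processors. Unrolling and inserting the base case $T_{\coim{}_{MI}}(n/\sqrt{|\mathbf{P}|},1) \le 8n^2/|\mathbf{P}|$ from Fact~\ref{lem:seqcostlim}, the recomposition cost at level $k$ is, by Lemma~\ref{lem:costsum}, $\BO{n 2^{k}/|\mathbf{P}|} + \BO{\log(|\mathbf{P}|/4^k)}$. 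The first terms form a geometric series summing to $\BO{(n/|\mathbf{P}|)\,2^{\log_4|\mathbf{P}|}} = \BO{n/\sqrt{|\mathbf{P}|}}$, and the logarithmic terms sum to $\BO{\log_2^2|\mathbf{P}|}$. Since $n \ge |\mathbf{P}|$ forces $n/\sqrt{|\mathbf{P}|} \le n^2/|\mathbf{P}|$, the base-case term dominates, yielding $T \le 38 n^2/|\mathbf{P}| + 3\log_2^2|\mathbf{P}|$ once the constants are collected.

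The bandwidth and latency bounds follow from entirely analogous recurrences. For bandwidth, each level contributes $\BO{n 2^k/|\mathbf{P}|}$ words of data movement (a constant number of parallel rounds in the splitting and recomposition, each shipping $\BO{n_{\mathrm{node}}/|\mathbf{P}_{\mathrm{node}}|}$ words per processor) plus the $\BO{\log(|\mathbf{P}|/4^k)}$ words charged by the three sums; the data terms telescope to $\BO{n/\sqrt{|\mathbf{P}|}}$ and the logarithmic terms to $\BO{\log_2^2|\mathbf{P}|}$, giving $BW \le 14 n/\sqrt{|\mathbf{P}|} + 6\log_2^2|\mathbf{P}|$. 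For latency the base case contributes nothing and the data-movement rounds contribute only $\BO{1}$ messages per level (summing to $\BO{\log|\mathbf{P}|}$), so the dominant term is the $\BO{\log(|\mathbf{P}|/4^k)}$ latency of the three $\textsc{SUM}$ calls per level; summing over the $\log_4|\mathbf{P}|$ levels gives $L \le 3\log_2^2|\mathbf{P}|$.

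The memory bound, together with the correctness bookkeeping, is where I expect the real work to lie. For memory I would track the peak per-processor occupancy over a single node's execution: the ratio $n_{\mathrm{node}}/|\mathbf{P}_{\mathrm{node}}| = n 2^k/|\mathbf{P}|$ is maximized at the deepest level, where a single processor must hold two $(n/\sqrt{|\mathbf{P}|})$-digit inputs and run $\textsc{SLIM}$, which by Fact~\ref{lem:seqcostlim} needs $8n/\sqrt{|\mathbf{P}|}$ cells; the remaining budget up to $12n/\sqrt{|\mathbf{P}|}$ must be shown to absorb the transient copies received during the splitting and the several $C_i$-fragments held simultaneously during recomposition. \textbf{The main obstacle is precisely this recomposition accounting}: one must confirm that the ``delete after sending'' discipline in substeps (a)--(e), together with the in-place reuse of the input cells, keeps every processor's footprint below $12n/\sqrt{|\mathbf{P}|}$ at all times, and — coupled with the digit-alignment verification needed for correctness — this is the delicate, bookkeeping-heavy part of the argument; the cost recurrences themselves are routine once the geometric sums are set up.
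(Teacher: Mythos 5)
Your proposal follows essentially the same route as the paper's proof: the same $4$-way breadth-first recursion of depth $\log_4|\mathbf{P}|$ with $\textsc{SLIM}$ at single-processor leaves of size $n/\sqrt{|\mathbf{P}|}$, the same per-level recurrences charging one parallel branch plus three $\textsc{SUM}$ invocations (the paper runs the sums on the subsequence $\mathbf{P}^*$ of $3|\mathbf{P}|/4$ processors, a constant-factor detail only), the same geometric-plus-logarithmic summation closed via $n\geq|\mathbf{P}|$, and the same peak-occupancy memory accounting — where you correctly flag the recomposition bookkeeping as the binding part, which is exactly where the paper's $12n/\sqrt{|\mathbf{P}|}$ bound is determined. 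No gaps beyond routine constant collection; note in passing that your recomposition identity $C=C_0+s^{n/2}(C_1+C_2)+s^nC_3$ is the correct one (the paper's $\textsc{SLIM}$ description contains a typo in the exponents).
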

\begin{proof}
$\coim{}_{MI}$ correctly computes $C=A\times B$ by inspection.

If $|\mathbf{P}|=1$, the product $C$ is computed locally by the single available processor using algorithm $\textsc{SLIM}$. By Lemma~\ref{lem:seqcostlim} we thus have $M_{\coim{}_{MI}}(n,1)\le 8n$,  $T_{\coim{}_{MI}}(n,1)\le 8n^2$, $BW_{\coim{}_{MI}}(n,1)=0$ and $L_{\coim{}_{MI}}(n,1)=0$.

In the following, we assume $|\mathbf{P}|>1$. By construction, during the $i$-th recursion step of $\coim{}_{MI}$ for $1\leq i < \log_4 |\mathbf{P}|$, each processor needs to hold in its local memory at most $2\times 4^{i-1}n/|\mathbf{P}|2^{i-1} = n2^{i}/|\mathbf{P}|$ digits of the inputs one of the sub-problems being computed at the $i$-th level, and $n2^{i+1}/|\mathbf{P}|$ digits of the input of the sub-problem being generated to whom the processor will be assigned. Hence, $M_{\coim{}_{MI}}(n,|\mathbf{P}|)\geq 3n/\sqrt{|\mathbf{P}|}$. Similarly, once the output of the sub-problems generated at the $i$-th recursion step of $\coim{}_{MI}$, for $1< i \le \log_4 |\mathbf{P}|$, have been computed, each processor needs to hold in its local memory at most $3\times n2^{i}/|\mathbf{P}|$ digits of the outputs of one of the sub-problems computed at the $i$-th level, and $n2^{i-1}/|\mathbf{P}|$ digits of the output of one the sub-problem at the $(i-1)$-th level. Hence, $M_{\coim{}_{MI}}(n,|\mathbf{P}|)\geq 4n/\sqrt{|\mathbf{P}|}$. Further, each processor must be equipped with enough memory to sum $C'_0$, $C_1$, $C_2$ and $C'_3$ using \textsc{SUM}, as discussed at end of phase 3 of $\coim{}_{MI}$'s description. In the $i$-th recursion level, for $1\le i \le \log_4 |\mathbf{P}|-1$, \textsc{SUM} is used to sum $(3n2^{i+1}/|\mathbf{P}|)$-digit integers initially partitioned in $3|\mathbf{P}|/4^{i+1}$ processors (which are used to compute the sum) in $2^{i}n/|\mathbf{P}|$ digits. Hence, from Lemma~\ref{lem:costsum}, $M_{\coim{}_{MI}}(n,|\mathbf{P}|)\geq 4(2^{i+1}n/|\mathbf{P}|+1)\leq 12n/\sqrt{|\mathbf{P}|}$.
Finally, each processor must be equipped with enough memory to compute the product of two $(n/\sqrt{|\mathbf{P}|})$ digits integers using algorithm $\textsc{SLIM}$. Hence, by Lemma~\ref{lem:seqcostlim}, $M_{\coim{}_{MI}}(n,|\mathbf{P}|)\geq 8n/\sqrt{|\mathbf{P}|}$. We can thus conclude that all such requirement are met if each processor is equipped with a local memory of size $M_{\coim{}_{MI}}(n,|\mathbf{P}|)=12n/\sqrt{|\mathbf{P}|}$.\\

In the ``\emph{Recomposition}'' step $\coim{}_{MI}$ computes the sum of $C'_0$,$C_1$,$C_2$, $C'_3$ using  the $\textsc{SUM}$ parallel subroutine. By construction, each such values has at most $3n/2$ digits and it is partitioned in $\mathbf{P}^*$ in $2n/|\mathbf{P}|$ digits. Their sum can be computed with three consecutive applications of $\textsc{SUM}$ using the subsequence $\mathbf{P}^*$. As no other computation is executed outside the recursive calls to \coim{},  by Lemma~\ref{lem:costsum} and as $|\mathbf{P}^*| = 3\mathbf{P}|/4$, we have:
\begin{equation*}\label{eq:time1coimum}
\begin{aligned}
    T_{\coim{}_{MI}}\left(n,|\mathbf{P}|\right) &\leq  T_{\coim{}_{MI}}\left(n/2, |\mathbf{P}|/4\right) + 3T_{\textsc{SUM}}\left(3n/2,|\mathbf{P}^* | \right)\\
    &\leq T_{\coim{}_{MI}}\left(n/2, |\mathbf{P}|/4\right) + 3\left((12n/|\mathbf{P}| + 4\log_2 (3|\mathbf{P}|/4-1)\right)\\
\end{aligned}
\end{equation*}
In phase (1a), each of the $\npr{P}$ processor either sends or receives $2n/\npr{P}$. In each of the phases (1b) and (1c), each processor which communicates either sends or receives $n/\npr{P}$ digits. During each of the phases from (3a) to (3e) every processor which communicates either sends or it receives $2n/\npr{P}$ memory words to/from other processors. As the four generated subproblems, each with input size at most $n/2$, are computed in parallel using $\npr{P}{}/4$ processors each, by Lemma~\ref{lem:costsum} and as $|\mathbf{P}^*| = 3|\mathbf{P}|/4$, we have
\begin{equation*}\label{eq:band1coimum}
    \begin{aligned}
        BW_{\textsc{COIM}_{MI}}\left(n,|\mathbf{P}|\right) &\leq  BW_{\textsc{COIM}_{MI}}\left(n/2,|\mathbf{P}|/4\right) + 14n/|\mathbf{P}| + 3BW_{\textsc{SUM}}\left(3n/2, |\mathbf{P}^*|\right)\\
        &\leq  BW_{\textsc{COIM}_{MI}}\left(n/2,|\mathbf{P}|/4\right) + 14n/|\mathbf{P}| + 12(\log_2 3|\mathbf{P}|/4-1)\\
        L_{\textsc{COIM}_{MI}}\left(n,|\mathbf{P}|\right) &\leq  L_{\textsc{COIM}_{MI}}\left(n/2,|\mathbf{P}|/4\right) + 8 + 3L_{\textsc{SUM}}\left(3n/2, |\mathbf{P}^*|\right)\\
        &\leq  L_{\textsc{COIM}_{MI}}\left(n/2,|\mathbf{P}|/4\right) + 8 + 6(\log_2 3|\mathbf{P}|/4-1)\\
    \end{aligned}
\end{equation*}
After $\log_4 |\mathbf{P}|$ recursive breadth-first levels, each processor is assigned a single subproblem with input size $n/2^{\log_4 |\mathbf{P}|} = n/\sqrt{|\mathbf{P}|}$, which is computed \emph{locally} without any further communication as discussed for the base case (i.e., $BW_{\textsc{COIM}_{MI}}\left(n/|\mathbf{P}|,1\right)=L_{\textsc{COIM}_{MI}}\left(n/|\mathbf{P}|,1\right)=0$) using a \emph{sequential} long integer multiplication algorithm (e.g., algorithm $\textsc{SLIM}$). Hence, by Lemma~\ref{lem:seqcostlim}, $T_{\textsc{COIM}_{MI}}\left(n/|\mathbf{P}|,1\right) \leq 2n^2/|\mathbf{P}|$. As, by assumption, $n\geq |\mathbf{P}|$ we have:
\begin{align*}
    T_{\coim{}_{MI}}\left(n,|\mathbf{P}|\right) &\leq  T_{\coim{}_{MI}}\left(\frac{n}{\sqrt{|\mathbf{P}|}}, 1\right) + 36\frac{n}{|\mathbf{P}|}\sum_{i=1}^{\log_4 |\mathbf{P}|-1}2^i +   6\sum_{i=1}^{\log_4 |\mathbf{P}|-1} \log_2 \frac{3|\mathbf{P}|}{4^i}-1\\
    &< 2\frac{n^2}{|\mathbf{P}|} + 36\frac{n}{\sqrt{|\mathbf{P}|}} + 3\log_2^2 |\mathbf{P}|\\
    &= 38\frac{n^2}{|\mathbf{P}|} + 3\log_2^2 |\mathbf{P}|\\
        BW_{\textsc{COIM}_{MI}}\left(n,|\mathbf{P}|\right) &\leq BW_{\textsc{COIM}_{MI}}\left(\frac{n}{\sqrt{|\mathbf{P}|}},1\right)+ 14\frac{n}{|\mathbf{P}|}\sum_{i=1}^{\log_4 |\mathbf{P}|-1}2^i + 12\sum_{i=1}^{\log_4 |\mathbf{P}|-1}(\log_2 \frac{3|\mathbf{P}|}{4^i}-1)\\
        &< 14\frac{n}{\sqrt{|\mathbf{P}|}} + 6\log^2_2 |\mathbf{P}|\\
        L_{\textsc{COIM}_{MI}}\left(n,|\mathbf{P}|\right) &\leq  L_{\textsc{COIM}_{MI}}\left(n/\sqrt{|\mathbf{P}|},1\right) + \sum_{i=1}^{\log_4 |\mathbf{P}|-1} 8+ 6\sum_{i=1}^{\log_4 |\mathbf{P}|-1}(\log_2 3|\mathbf{P}|/4^i-1)\\
        &< 3 \log^2_2 |\mathbf{P}|
\end{align*}
\end{proof}
\subsection{\coim{} in the main execution mode}\label{sec:coimLM}
An important consequence of Theorem~\ref{thm:costcoimUM} is that to execute $\coim{}_{MI}$ to multiply two $n$-digit integers each processor must be equipped with a memory of size at least $12n/\sqrt{|\mathbf{P}|}$. Considering the aggregate memory available in $|\mathbf{P}|$ processors, this implies that $\coim{}_{MI}$ can only be used to multiply $n$-digit integers where $n\in \BO{M\sqrt{|\mathbf{P}|}}$.  In this section, we describe how to combine the MI execution mode  $\coim{}_{MI}$ with a \emph{depth-first} scheduling of the subproblems, as outlined in the general framework in Section~\ref{sec:overview}. Our algorithm \coim{} allows to compute the product of two $n$-digit integers provided that (i) $M\geq 80n/|\mathbf{P}|$ and (ii) $M\geq \log_2 \npr{P}$. By (i), \coim{} can thus be used to multiply $n$-digit integers where $n\in \BO{M|\mathbf{P}|}$. That is, its memory requirement corresponds asymptotically to the amount of overall memory required to store the input integers and/or the product. Requirement $(ii)$ is also very reasonable in practice, as it is generally more sensible and cost/effective to have each processor equipped with a memory of respectable size rather than explode the number of available processors. 

\sloppy In the main execution mode, \coim{} proceeds in a sequence of at most $\BO{\log_2 n/(M\sqrt{|\mathbf{P}|})}$ recursive \emph{depth-first steps}, where $M$ denotes the size of the memory available to each processor, until the size of the generated sub-problems allows them to be computed in the MI execution mode $\coim{}_{MI}$.\\


In the main execution mode, when $\coim{}\left(n,A,B,\mathbf{P},M, n/|\mathbf{P}|\right)$ is invoked, if $M\geq  12n\sqrt{|\mathbf{P}|}$, that is if the size of the memory available to each processor allows for it, the product is computed by invoking $\coim{}_{MI}{}\left(n,A,B,\mathbf{P},n/|\mathbf{P}|\right)$. Otherwise \coim{} will proceed with a \emph{depth-first step}, by generating four subproblems invoking itself on each of the subproblems one at a time. The memory available to the recursive calls is reduced by the amount of memory space required to maintain the data used in the depth first step, including, for each processor, $2n/|\mathbf{P}|$ digits of the input, up to $3n/|\mathbf{P}|$ digits of the outputs of the sub-problems, and the memory space required for an invocation of $\textsc{SUM}$ to sum $(3n/2)$-digits integers partitioned in $3|\mathbf{P}|/4$ processors (i.e., by Lemma~\ref{lem:costsum}, $4(2n/|\mathbf{P}|+1)$). Hence, the available memory is reduced by $20n/|\mathbf{P}|$.  In the following, we denote $A_0$,$B_0$, $A_1$ and $B_1$ as defined in~\eqref{eq:partitionAB2}. 
The operations being executed are slightly different depending on the subproblems:
\begin{itemize}
    \item $\mathbf{\mathbf{C_0 = A_0\times B_0}:}$ In parallel, each processor $\mathbf{P}[j]$ for $j=0,1,\ldots,|\mathbf{P}|/2-1$ sends to $\mathbf{P}[j+|\mathbf{P}|/2]$ a copy of the $n/(2|\mathbf{P}|)$ most significant digits of $A(\mathbf{P}[j])$ and $B(\mathbf{P}[j])$. At the end of this step $A_0$ and $B_0$  are partitioned in the sequence
    $$\mathbf{P'}= [\mathbf{P}[|\mathbf{P}|-1], \mathbf{P}[|\mathbf{P}|/2-1], \ldots,\mathbf{P}[|\mathbf{P}|/2+1],\mathbf{P}[1]], \mathbf{P}[|\mathbf{P}|/2],\mathbf{P}[0]]$$
    in $n/(2|\mathbf{P}|)$ digits (i.e., the even/odd index processors in the sequence $\mathbf{P'}$ are the processors in the first/second half of the sequence $\mathbf{P}$).
    $\coim{}(n,A_0,B_0,\mathbf{P},M-20n/|\mathbf{P}|,n/(2|\mathbf{P}|))$ is then recursively invoked to compute $C_0=A_0\times B_0$ using the sequence $\mathbf{P}'$. Once computed, $C_0$ is partitioned in $\mathbf{P'}$ in $n/|\mathbf{P}|$ digits. 
    
    Before proceeding in the computation of the second subproblem, \coim{} rearranges the digits of the output $C_0$ so that it is partitioned in the fist half of the processors sequence $\mathbf{P}$, that is  $[\mathbf{P}[|\mathbf{P}|/2-1],\ldots,\mathbf{P}[0]]$ in $n/(2|\mathbf{P}|)$ digits. This is accomplished in a single communication step during which in parallel, each odd-index processor of $\mathbf{P}'[j]$, for $j=1,3,\ldots,|\mathbf{P}'|/2-1$, sends to $\mathbf{P}'[j-1]$ the $n/|\mathbf{P}|$ digits of $C_0(\mathbf{P}'[j])$, and then it deletes them from its local memory.
    
    \item $\mathbf{C_1 = A_0\times B_1:}$ In parallel, each processor $\mathbf{P}[j]$ for $j=0,1,\ldots,|\mathbf{P}|/2-1$ sends to $\mathbf{P}[j+|\mathbf{P}|/2]$ a copy of the $n/(2|\mathbf{P}|)$ most significant digits of $A(\mathbf{P}[j])$, and then removes those digits from its local memory. Then, in parallel, each processor $\mathbf{P}[j+ |\mathbf{P}|/2]$ for $i=0,1,\ldots,|\mathbf{P}|/2-1$ sends to $\mathbf{P}[j]$ a copy of the $n/(2|\mathbf{P}|)$ least significant digits of $B(\mathbf{P}[j+ |\mathbf{P}|/2])$. At the end of these two steps, $A_0$ and $B_1$ are partitioned in the sequence $\mathbf{P'}$ in $n/(2|\mathbf{P}|)$ digits. $\coim{}(n,A_0,B_1,\mathbf{P},M-20n/|\mathbf{P}|,n/(2|\mathbf{P}|))$ is then recursively invoked to compute $C_1=A_0\times B_1$ using the sequence $\mathbf{P}'$. Once computed, $C_1$ is partitioned in $\mathbf{P'}$ in $n/|\mathbf{P}|$ digits. 
   
   Before proceeding in the computation of the third subproblem, \coim{} rearranges the digits of the output $C_1$ so that it is partitioned in the sub-sequence of $\mathbf{P}$ composed of the $|\mathbf{P}|/2$ processors in the ``\emph{middle}'' of $\mathbf{P}$, $[\mathbf{P}[3|\mathbf{P}|/4-1],\ldots,\mathbf{P}[|\mathbf{P}|/4]]$, in $2n/|\mathbf{P}|$ digits.
     In parallel, each odd-index (resp., even-index) processor $\mathbf{P}'[j]$ (resp., $\mathbf{P}'[j+|\mathbf{P}'|/2-1]$) for $j=1,3,\ldots,|\mathbf{P}'|/2-1$ sends to $\mathbf{P}[j-1]$ (resp., $\mathbf{P}[j+|\mathbf{P}'|/2]$) a copy of the $n/|\mathbf{P}|$ digits of $C_1(\mathbf{P}[j])$ (resp., $C_1(\mathbf{P}[j+|\mathbf{P}'|/2-1])$), and then deletes them from its local memory. After this step, the $n/2$ least (resp., most) significant digits of $C_1$ are partitioned in the sub-sequence  $[\mathbf{P}[|\mathbf{P}|/4-1],\ldots,\mathbf{P}[0]]$ (resp., $[\mathbf{P}[|\mathbf{P}|-1],\ldots,\mathbf{P}[3|\mathbf{P}|/4]]$) in $2n/|\mathbf{P}|$ digits. In a following parallel communication step, in parallel, each processor $\mathbf{P}[j]$ (resp., $\mathbf{P}[j+3|\mathbf{P}|/4]$), for $j=0,1,\ldots, |\mathbf{P}|/4-1$, send to $\mathbf{P}[j+|\mathbf{P}|/4]$ (resp., $\mathbf{P}[j+|\mathbf{P}|/2]$) the $2n/|\mathbf{P}|$ digits of $C_1(P[j])$ (resp., $C_1(P[j+3|\mathbf{P}|/4])$), and then removes them from its local memory. 
     \item $\mathbf{C_2 = A_1\times B_0:}$ The operations executed for this sub-problem closely follow those discussed in the previous one. A detailed description can be obtained by replacing $B_0$ with $A_0$, $A_1$ with $B_1$, and $C_2$ with $C_1$. 
     \item $\mathbf{C_3=A_1\times B_1:}$ In parallel, each processor $\mathbf{P}[j+|\mathbf{P}|/2]$ for $j=0,1,\ldots,|\mathbf{P}|/2-1$ sends to $\mathbf{P}[j]$ a copy of the $n/(2|\mathbf{P}|)$ least significant digits of $A(\mathbf{P}[i])$ and $B(\mathbf{P}[i])$. At the end of this step $A_1$ and $B_1$  are partitioned in the sequence $\mathbf{P'}$ in $n/(2|\mathbf{P}|)$ digits. \coim{} is then recursively invoked to compute $C_3=A_1\times B_1$ using the sequence $\mathbf{P}'$. Once computed, $C_3$ is partitioned in $\mathbf{P'}$ in $n/|\mathbf{P}|$ digits. 
     
     Before proceeding, \coim{} rearranges the digits of the output $C_3$ so that it is partitioned in the second half of the processors sequence $\mathbf{P}$, that is  $[\mathbf{P}[|\mathbf{P}|-1],\ldots,\mathbf{P}[|\mathbf{P}|/2]]$. This is accomplished in a single communication step during which in parallel, each even-index processor of $\mathbf{P}'[j]$, for $j=0,2,\ldots,|\mathbf{P}'|/2-2$, sends to $\mathbf{P}'[j+1]$ the $n/|\mathbf{P}|$ digits of $C_0(\mathbf{P}'[j])$, and then it deletes them from its local memory.
\end{itemize}
     
    After the four sub-problems have been computed, and their respective outputs rearranged in $\mathbf{P}$. \coim{} completes the computation of $C$ following the same steps presented for the MI execution mode memory setting at the end of phase (3). Once computed, $C$ is correctly partitioned in $\mathcal{C}$ in $2n/|\mathbf{P}|$ digits. 

\begin{theorem}\label{thm:coim}
Let $A$ and $B$ be two $n$-digit integers portioned in a sequence of processors $\mathbf{P}$, with $n\geq |\mathbf{P}|$, in $n/|\mathbf{P}|$ digits. $\coim{}_{MI}$ computes the product $C=A\times B$ using the processors in $\mathbf{P}$, provided that each processor is equipped with a local memory of size at least $M_{\coim{}_{MI}}(n,|\mathbf{P}|)\geq \max\{80n/|\mathbf{P}|,\log_2 \npr{P}\}$.  We have:
\begin{align*}
    T_{\coim{}}\left(n,|\mathbf{P}|,M\right) &\leq 196\frac{n^2}{|\mathbf{P}|}\\
    BW_{\coim{}}\left(n,|\mathbf{P}|,M\right) &\leq  3530\frac{n^2}{M|\mathbf{P}|}\\
    L_{\coim{}}\left(n,|\mathbf{P}|,M\right) &\leq 7012\frac{n^2 \log_2^2|\mathbf{P}|}{M^2|\mathbf{P}|}
\end{align*}
\end{theorem}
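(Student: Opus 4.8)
The plan is to treat \coim{} in the main mode as a depth-first (DF) recursion whose leaves are invocations of $\coim{}_{MI}$, and to derive each of the three cost bounds by unrolling this recursion against the base-case costs supplied by Theorem~\ref{thm:costcoimUM}. Correctness follows by induction on the recursion depth: each DF step splits $A,B$ into the halves $A_0,A_1,B_0,B_1$, solves the four products $C_0,\dots,C_3$ sequentially on all $\npr{P}$ processors (redistributing digits so that each subproblem is correctly partitioned), and recomposes $C = C_0 + s^{n/4}(C_1+C_2) + s^{n/2}C_3$ via three calls to \textsc{SUM}; the base case is the already-verified $\coim{}_{MI}$. The substance is therefore entirely in the cost accounting.

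First I would fix the two structural parameters. The DF recursion halves the input size at each level and stops at the first level $\ell_{DF}$ whose subproblems pass the MI-admissibility test $M' \geq 12 n_0/\sqrt{\npr{P}}$, where $n_0 = n/2^{\ell_{DF}}$ and $M'$ is the residual memory. Minimality of $\ell_{DF}$ forces $n_0 > M\sqrt{\npr{P}}/24$, so $2^{\ell_{DF}} = \BT{n/(M\sqrt{\npr{P}})}$ and hence $4^{\ell_{DF}} = (n/n_0)^2 = \BT{n^2/(M^2\npr{P})}$; this single identity is what converts every per-leaf quantity into the closed forms in the statement. The memory claim is handled separately: the data kept live during the DF step at a node of size $m$ (the two input halves, the partial outputs, and the workspace for the recomposition \textsc{SUM}) is $20m/\npr{P}$, so the memory lost along a root-to-leaf path telescopes to $\sum_{i\ge 0} 20(n/2^i)/\npr{P} < 40n/\npr{P}$. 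Thus $M \geq 80n/\npr{P}$ keeps the residual at least $40n/\npr{P}$ throughout, which both guarantees the MI-admissibility test is eventually met and leaves $\coim{}_{MI}$ enough room at the leaves; the second hypothesis $M \geq \log_2 \npr{P}$ is what lets the additive $\BO{\log\npr{P}}$ per-step terms be charged against $M$ in the final bounds.

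Next I would write the three recursions with the DF overhead made explicit. Each step performs a constant number of redistribution rounds moving $\BO{n/\npr{P}}$ words per processor and three invocations of \textsc{SUM} on $(3n/2)$-digit integers over $3\npr{P}/4$ processors, so by Lemma~\ref{lem:costsum} the per-step overheads are $\BO{n/\npr{P} + \log\npr{P}}$ for $T$ and $BW$, and $\BO{\log\npr{P}}$ for $L$. Unrolling $T_{\coim{}}(n) \leq 4\,T_{\coim{}}(n/2) + \BO{n/\npr{P} + \log\npr{P}}$ down to the leaves gives $T = 4^{\ell_{DF}}\,T_{\coim{}_{MI}}(n_0) + \sum_{i<\ell_{DF}} 4^i\,\BO{(n/2^i)/\npr{P} + \log\npr{P}}$. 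The homogeneous part is where the bound lives: using $T_{\coim{}_{MI}}(n_0) \le 38 n_0^2/\npr{P} + 3\log_2^2\npr{P}$ together with $4^{\ell_{DF}} = (n/n_0)^2$ collapses the leading $n_0^2/\npr{P}$ contribution to exactly $\BT{n^2/\npr{P}}$, independent of $n_0$. The identical telescoping turns $4^{\ell_{DF}}\cdot \BO{n_0/\sqrt{\npr{P}}}$ into the $BW$ form $\BO{n^2/(M\npr{P})}$, and turns the per-leaf $3\log_2^2\npr{P}$ latency into the $L$ form $\BO{n^2\log_2^2\npr{P}/(M^2\npr{P})}$.

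The main obstacle will be the lower-order but $4^i$-amplified overhead terms, not the leading terms. Every geometric sum $\sum_i 4^i\cdot(\text{per-step cost})$ is dominated by its top level $i=\ell_{DF}-1$, so each surviving overhead inherits a full factor $4^{\ell_{DF}} = \BT{n^2/(M^2\npr{P})}$: the $\BO{n/\npr{P}}$-type overheads then contribute $\BO{n^2/(M^2\npr{P}^{3/2})}$, absorbed into the leading term via $M \geq 80n/\npr{P}$ (equivalently $n/M \leq \npr{P}/80$), while the $\BO{\log\npr{P}}$-type overheads contribute $\BO{n^2\log\npr{P}/(M^2\npr{P})}$ and must be folded into the leading or the $L$-term by invoking $M \geq \log_2\npr{P}$ to cancel one logarithmic factor. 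Pinning down the explicit constants ($196$, $3530$, $7012$) therefore requires carrying the ceiling-induced slack in $2^{\ell_{DF}}$, the $3\npr{P}/4$ processor count inside each \textsc{SUM}, and the constant number of redistribution rounds through every sum, and then repeatedly applying the two memory hypotheses to bound each non-leading term; this bookkeeping, rather than any conceptual step, is the delicate part.
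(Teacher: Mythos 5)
Your proposal is correct and follows essentially the same route as the paper's proof: depth-first recursion until the subproblems reach size roughly $M\sqrt{|\mathbf{P}|}/24$ and become MI-admissible, the $20n/|\mathbf{P}|$-per-step memory cost telescoping to at most $40n/|\mathbf{P}|$ so that $M\geq 80n/|\mathbf{P}|$ leaves half the memory for the $\coim{}_{MI}$ leaves, the recurrences $T,BW,L(n)\leq 4\,(\cdot)(n/2)+\BO{n/|\mathbf{P}|+\log_2|\mathbf{P}|}$ obtained from Lemma~\ref{lem:costsum} for the three \textsc{SUM} calls on $(3n/2)$-digit integers over $3|\mathbf{P}|/4$ processors, and the amplification $4^{\ell_{DF}}=\BT{n^2/(M^2|\mathbf{P}|)}$ applied to the leaf costs from Theorem~\ref{thm:costcoimUM}, with $n\geq|\mathbf{P}|$ and $M\geq\log_2|\mathbf{P}|$ absorbing the lower-order terms exactly as in the paper. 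The only blemish is a cosmetic slip in one intermediate estimate: the $4^i$-amplified $\BO{n/|\mathbf{P}|}$-type overheads sum to $\BO{n^2/(M|\mathbf{P}|^{3/2})}$, not $\BO{n^2/(M^2|\mathbf{P}|^{3/2})}$, which changes nothing in the absorption argument.
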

\begin{proof}
$\coim{}$ correctly computes $C=A\times B$ by inspection.

If $n\leq M\sqrt{|\mathbf{P}|}/12$, the statement follows from Theorem~\ref{thm:costcoimUM}. In the following we assume $M|\mathbf{P}|/80\geq n>M\sqrt{|\mathbf{P}|}/12$. 
\coim{} then proceeds by executing $\ell$ consecutive depth-first steps, where $0< \ell\leq \lceil \log_2 n/(24M\sqrt{P})\rceil$. In each recursion step, in addition to the space required for the recursive invocation to $\coim{}$, each processor must maintain $2\times n/(|\mathbf{P}|)$ digits for the input of the problem being computed, at most $6n/|\mathbf{P}|$ digits of the outputs of the recursive subproblems, and the space required for the invocation of $\textsc{SUM}$ used to combine the outputs of the subproblems. Hence:
\begin{align}
        M_{\coim}\left(n,|\mathbf{P}|\right) &\leq M_{\coim}\left(\frac{n}{2},|\mathbf{P}|\right)+ 2\frac{n}{|\mathbf{P}|} +6\frac{n}{|\mathbf{P}|} + M_{SUM} (3n/2, 3|\mathbf{P}|/4)\nonumber\\
        &\leq M_{\coim}\left(\frac{n}{2},|\mathbf{P}|\right)+ 8\frac{n}{|\mathbf{P}|}+4\left(2\frac{n}{|\mathbf{P}|}+1\right)\label{eq:coimb1}\\
        &\leq M_{\coim}\left(\frac{n}{2},|\mathbf{P}|\right)+ 8\frac{n}{|\mathbf{P}|}+12\frac{n}{|\mathbf{P}|}\nonumber\\
        &\leq M_{\coim}\left(\frac{M\sqrt{|\mathbf{P}|}}{24},|\mathbf{P}|\right) +20\frac{n}{|\mathbf{P}|}\sum_{\ell=1}^{\lceil \log_2 \frac{24n}{M\sqrt{|\mathbf{P}|}}\rceil-1} 2^{-1}\nonumber\\
        &< M_{\coim{}_{MI}}\left(\frac{M\sqrt{|\mathbf{P}|}}{24},|\mathbf{P}|\right) +40\frac{n}{|\mathbf{P}|}\nonumber\\
       &\leq \frac{M}{2}+40\frac{n}{|\mathbf{P}|}\label{eq:coimb3}
\end{align}
where~\eqref{eq:coimb1} follows from Lemma~\ref{lem:costsum},
and~\eqref{eq:coimb3} follows from Theorem~\ref{thm:costcoimUM}. 
By construction, after $\ell$ depth-first steps, the available memory is reduced by at most $40n/|\mathbf{P}|$. As, by assumption, $M\geq 80 n/|\mathbf{P}|$, at least half of the space $M$ originally assigned to each of the processors is still available. After at most $\lceil \log_2 24n/(M\sqrt{P})\rceil$ recursive steps, the generated sub-problems will have size at most $M\sqrt{|\mathbf{P}|}/24$. Thus, by Theorem~\ref{thm:costcoimUM}, they can be computed using $\coim{}_{MI}$ using at most $M/2$ memory locations. This concludes the proof of the memory requirement for \coim{}.

The computation time required in a depth-first recursion level of \coim{} execution is bounded by the time required for the sequential computation of \coim{}'s recursive invocation on the generated subproblems plus the time required to combine the outputs of the sub-problems to compute the product itself using three invocations of $\textsc{SUM}$. As the numbers being summed have at most $3n/(2|\mathbf{P}|)$ each and are partitioned in the $3|\mathbf{P}|/4$ processors used to sum them in $2n/|\mathbf{P}|$ digits, by Lemma~\ref{lem:costsum}, we have:
\begin{align*}
    T_{\coim{}}\left(n, |\mathbf{P}|,M\right) &\leq 4 T_{\coim{}}\left(\frac{n}{2}, |\mathbf{P}|,M\right) + 3 T_{\textsc{SUM}}\left(\frac{3n}{2},\frac{3|\mathbf{P}|}{4}\right)\\
    &\leq 4 T_{\coim{}}\left(\frac{n}{2}, |\mathbf{P}|,M \right) + 3\left( \frac{2n}{|\mathbf{P}|} + 4\log_2 |\mathbf{P}|\right)
\end{align*}
Further, in a depth-first recursion level the \io{} cost of \coim{} (both bandwidth and latency) can be bound by that of the four consecutive invocations of $\coim{}$ used to compute the four subproblems, the cost of redistributing the input (resp., the output) of such subproblems, and the \io{} cost of the three invocations of $\textsc{SUM}$ used to combine the outputs of the three subproblems. By Lemma~\ref{lem:costsum}, we have:
\begin{equation*}
    \begin{aligned}
        BW_{\coim{}}\left(n, |\mathbf{P}|, M \right) 
        &\leq 4BW_{\coim{}}\left(\frac{n}{2}, |\mathbf{P}|,M \right) + 3\frac{n}{|\mathbf{P}|}+8\frac{n}{|\mathbf{P}|}+ 3 BW_{\textsc{SUM}}\left(\frac{3n}{2},3|\mathbf{P}|\right)\\
        &\leq 4BW_{\coim{}}\left(\frac{n}{2}, |\mathbf{P}|,M \right)+11\frac{n}{|\mathbf{P}|}+12\log_2|\mathbf{P}|\\\\
        L_{\coim{}}\left(n, |\mathbf{P}|, M \right) &\leq 4L_{\coim{}}\left(\frac{n}{2}, |\mathbf{P}|,M \right) + 12+ 3 L_{\textsc{SUM}}\left(\frac{3n}{2},\frac{3|\mathbf{P}|}{4}\right)\\
        &\leq 4L_{\coim{}}\left(\frac{n}{2}, |\mathbf{P}|,M \right)+12+6\log_2|\mathbf{P}|
    \end{aligned}
\end{equation*}
After $1\le\ell \leq \log_2 24n/(M\sqrt{|\mathbf{P}|})$ depth-first steps, the generated sub-problems have input size at most $M\sqrt{|\mathbf{P}|}/24$, and \coim{} switches to the MI execution mode by invoking $\coim{}_{MI}$. Hence, by Theorem~\ref{thm:coim}, and by the assumptions $n\geq |\mathbf{P}|$ and $M\geq \log_2 |\mathbf{P}|$  we have:
\begin{align*}
    T_{\coim{}}\left(n, |\mathbf{P}|,M\right) 
    &\leq 4^{\lceil \log_2 \frac{24n}{M\sqrt{|\mathbf{P}|}}\rceil}T_{\coim{}}\left(\frac{M\sqrt{P}}{24},|\mathbf{P}|,M\right) + \frac{6n}{|\mathbf{P}|}\sum_{\ell=1}^{\lceil \log_2 \frac{24n}{M\sqrt{|\mathbf{P}|}}\rceil-1}2^{-\ell}\\&\quad + \left(\lceil \log_2 \frac{24n}{M\sqrt{|\mathbf{P}|}}\rceil-1\right)4\log_2 |\mathbf{P}|\nonumber\\
    &< 4\times\frac{24^2n^2}{M^2|\mathbf{P}|}T_{\coim{}_{MI}}\left(\frac{M\sqrt{P}}{24},|\mathbf{P}|\right) + \frac{12n}{|\mathbf{P}|}+ \left(\lceil \log_2 \frac{24n}{M\sqrt{|\mathbf{P}|}}\rceil-1\right)4\log_2 |\mathbf{P}|\nonumber\\
    &\leq 4\times\frac{24^2n^2}{M^2|\mathbf{P}|}\left(\frac{38M^2}{24^2}+3\log_2^2 |\mathbf{P}|\right) + \frac{12n}{|\mathbf{P}|}+ \left(\lceil \log_2 \frac{24n}{M\sqrt{|\mathbf{P}|}}\rceil-1\right)4\log_2 |\mathbf{P}|\nonumber\\
    &\leq164\frac{n^2}{|\mathbf{P}|} + \frac{12n}{|\mathbf{P}|}+ \left(\lceil \log_2 \frac{24n}{M\sqrt{|\mathbf{P}|}}\rceil-1\right)4\log_2 |\mathbf{P}|\nonumber\\
    &< 196\frac{n^2}{|\mathbf{P}|}\nonumber
\end{align*}
\begin{align}
    BW_{\coim{}}\left(n, |\mathbf{P}|,M\right) 
    &\leq 4^{\lceil \log_2 \frac{24n}{M\sqrt{|\mathbf{P}|}}\rceil}BW_{\coim{}}\left(\frac{M\sqrt{P}}{24},|\mathbf{P}|,M\right) + 11\frac{n}{|\mathbf{P}|}\sum_{\ell=0}^{\lceil \log_2 \frac{24n}{M\sqrt{|\mathbf{P}|}}\rceil-1}2^{-\ell}\nonumber \\&\quad + 12\log_2 |\mathbf{P}|\left(\lceil \log_2 \frac{24n}{M\sqrt{|\mathbf{P}|}}\rceil-1\right)\nonumber\\
    &< 4\times\frac{24^2n^2}{M^2|\mathbf{P}|}BW_{\coim{}_{MI}}\left(\frac{M\sqrt{P}}{24},|\mathbf{P}|\right) + 22\frac{n}{|\mathbf{P}|}\nonumber\\
    &\quad+ 12\log_2 |\mathbf{P}|\left(\lceil \log_2 \frac{24n}{M\sqrt{|\mathbf{P}|}}\rceil-1\right)\nonumber\\
    &\leq 4\times\frac{24^2n^2}{M^2|\mathbf{P}|}\left(\frac{14M}{24^2}+6\log_2^2 |\mathbf{P}|\right) + \frac{22n}{|\mathbf{P}|}\nonumber\\&\quad+ 12\log_2 |\mathbf{P}|\left(\lceil \log_2 \frac{24n}{M\sqrt{|\mathbf{P}|}}\rceil-1\right)\nonumber\\
    &\leq 4\times\frac{24^2n^2}{M^2|\mathbf{P}|}M\frac{14+6\times24^2}{24^2} + \frac{22n}{|\mathbf{P}|}+ 12\log_2 |\mathbf{P}|\left(\lceil \log_2 \frac{24n}{M\sqrt{|\mathbf{P}|}}\rceil-1\right)\nonumber\\
    &\leq 3470\frac{n^2}{M|\mathbf{P}|} + \frac{22n}{|\mathbf{P}|}+ 12\log_2 |\mathbf{P}|\left(\lceil \log_2 \frac{24n}{M\sqrt{|\mathbf{P}|}}\rceil-1\right)\label{eq:ad1}\\
    &< 3530\frac{n^2}{M|\mathbf{P}|}\nonumber
\end{align}
\begin{align}
    L_{\coim{}}\left(n, |\mathbf{P}|,M\right) 
    &\leq 4^{\lceil \log_2 \frac{24n}{M\sqrt{|\mathbf{P}|}}\rceil}L_{\coim{}}\left(\frac{M\sqrt{P}}{24},|\mathbf{P}|,M\right)\nonumber\\&\quad+ \left(\lceil \log_2 \frac{24n}{M\sqrt{|\mathbf{P}|}}\rceil-1\right)\left( 12 +6\log_2 |\mathbf{P}|\right)\nonumber\\
    &< 4\times\frac{24^2n^2}{M^2|\mathbf{P}|}L_{\coim{}_{MI}}\left(\frac{M\sqrt{P}}{24},|\mathbf{P}|\right)\nonumber\\&\quad+ \left(\lceil \log_2 \frac{24n}{M\sqrt{|\mathbf{P}|}}\rceil-1\right)\left( 12 +6\log_2 |\mathbf{P}|\right)\nonumber\\
    &\leq 4\times\frac{24^2n^2}{M^2|\mathbf{P}|}3\log_2^2|\mathbf{P}| + \left(\lceil \log_2 \frac{24n}{M\sqrt{|\mathbf{P}|}}\rceil-1\right)\left( 12 +6\log_2 |\mathbf{P}|\right)\nonumber\\
    &\leq 6912\frac{n^2 \log_2^2|\mathbf{P}|}{M^2|\mathbf{P}|}+ \left(\lceil \log_2 \frac{24n}{M\sqrt{|\mathbf{P}|}}\rceil-1\right)\left( 12 +6\log_2 |\mathbf{P}|\right)\label{eq:ad2}\\
    &\leq 7012\frac{n^2 \log_2^2|\mathbf{P}|}{M^2|\mathbf{P}|} \nonumber
\end{align}
where~\eqref{eq:ad1} and~\eqref{eq:ad2} where the last passage follows as, by assumption, $n\geq \npr{P}$  and $M\geq \log_2 \npr{P}$, and as we are considering the case $n\geq M\sqrt{\npr{P}}/12$. If that was not the case the product would have been computed using $\coim_{MI}$.
\end{proof}

\subsection{Comparison with communication lower bounds}\label{sec:stalwbcomp}
Based on the analysis of \coim{} performance presented in Theorem~\ref{thm:costcoimUM} and Theorem~\ref{thm:coim}, we have:
\begin{reptheorem}{thm:informalcoim}
\coim{} achieves optimal computation time speedup and optimal bandwidth cost among all parallel standard integer multiplication algorithms. It also minimizes the latency cost up to a $\BO{\log^2\mathcal{P}}$ multiplicative factor. 
\end{reptheorem}
\begin{proof}
Let $\mathcal{P}$ denote the number of processors used in the computation.
By Theorem~\ref{thm:costcoimUM}, for $M\geq 12n/\sqrt{\mathcal{P}}$, the product $C=A\times B$ can be computed using $\coim{}_{MI}$. Under the assumptions $n\geq \mathcal{P}$ and $M\geq \log_2 \npr{P}$, the bandwidth cost of $\coim{}_{MI}$ asymptotically matches, the memory-independent lower bound in Theorem~\ref{thm:lwbstamemind}, and its latency latency is within a $\BO{\log^2 \mathcal{P}}$ factor of the corresponding lower bound. Note that the initial distribution of the input values among the processors used in $\coim{}_{MI}$ satisfies the balanced input distribution assumption used to derive  Theorem~\ref{thm:lwbstamemind}.

For $ 12n/\sqrt{\mathcal{P}}>M\geq 80n/\mathcal{P}$, by Theorem~\ref{thm:coim}, the product $C=A\times B$ can be computed using $\coim{}$. For $n\geq P$ and $M\geq 24\sqrt{\mathcal{P}}$, the bandwidth cost of $\coim{}_{MI}$ asymptotically matches, the memory-dependent lower bound in Theorem~\ref{thm:lwbstamemdip}, and its latency latency is within a $\BO{\log^2 P}$ factor of the corresponding lower bound. The total memory space required across the available processors for the execution of \coim{} is $\BO{n}$, that is, within a constant factor of the minimum space required to store the input (and output) values. Finally, in both cases, \coim{} achieves optimal computational time speedup $\BO{n^2/\mathcal{P}}$.
\end{proof}

\section{Communication-optimal Parallel Karatsuba \\ Multiplication}\label{sec:karatsuba}
In this section, we present $\cok{}$ (Communication Optimal Parallel Karatsuba), our parallel implementation of Karatsuba's algorithm in the distributed memory setting which achieves optimal speedup, optimal bandwidth cost, and whose latency is within $\BO{\log_2 |\mathbf{P}|}$ of the corresponding lower bound.
Compared to standard \emph{long} integer multiplication algorithms (e.g., \textsc{SLIM}), Karatsuba's algorithm asymptotically reduces the computation time of integer multiplication by cleverly decreasing the number of sub-problems being generated at each recursion level from four to three. 

While several different variations of the algorithm have been discusses in the literature, in this work we consider the following sequential implementation, henceforth referred to as $\textsc{SKIM}$ (Sequential Karatusba Integer Multiplication). The algorithm follows a simple recursive strategy. Let $A,B$ be $n$-digit integers, if $n=1$, the product $C=A\times B$ is computed directly. Otherwise, let:

\noindent\begin{minipage}{.5\linewidth}
\begin{align*}
        A_0 &= \left(A[\lceil n/2\rceil-1],\ldots, A[0]\right)_s\\  
B_0 &= \left(B[\lceil n/2\rceil-1],\ldots, B[0]\right)_s
\end{align*}
\end{minipage}%
\begin{minipage}{.5\linewidth}
\begin{align*}
        A_1 &= \left(A[n-1],\ldots, A[\lceil n/2\rceil]\right)_s\\  
        B_1 &= \left(B[n-1],\ldots, B[\lceil n/2\rceil]\right)_s
\end{align*}
\end{minipage}
\vspace{2mm}

\noindent $\textsc{SKIM}$ is then recursively invoked to compute the products $C_0=A_0\times B_0$, $C' = (A_0-A_1)\times (B_1-B_0)$, $C_2= A_1\times B_0$ and $C_3=A_1\times B_1$. Then, let $C_1 = C'+C_0+C_2$. Finally, $C$ is computed as $C=C_0 + s^{n/4}(C_1)+s^{n/2}C_2$. 

The following characterization of the time and memory requirements of \textsc{SKIM} can be obtained by inspection:
\begin{fact}\label{lem:seqcostcok}
Algorithm $\textsc{SKIM}$ competes the product of two $n$-digit integers using at most $16n^{\log_2 3}$ digit wise operations and a memory space of size at most $8n$. 
\end{fact}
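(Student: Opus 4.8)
The goal is to show two things by induction on the recursion depth: the digit-operation count is at most $16n^{\log_2 3}$, and the memory footprint is at most $8n$.

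The plan is to set up the two recurrences that govern \textsc{SKIM} and solve each one against the proposed closed-form bound. First I would treat the computational cost $T(n)$. On input of size $n$, \textsc{SKIM} makes three recursive calls, but each call is on operands of size essentially $n/2$: the three products are $A_0\times B_0$, $(A_0-A_1)\times(B_1-B_0)$, and $A_1\times B_1$, all of which have roughly $n/2$-digit factors. Outside the recursive calls, the algorithm performs a bounded number of additions and subtractions on $O(n)$-digit integers (forming $A_0-A_1$ and $B_1-B_0$, then assembling $C_1=C'+C_0+C_2$, then the final shifted sum $C=C_0+s^{n/4}C_1+s^{n/2}C_2$), each of which costs $cn$ digit operations for a small constant $c$. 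This yields a recurrence of the shape
\begin{equation*}
T(n)\le 3\,T(n/2)+cn,
\end{equation*}
with base case $T(1)=1$. Since $3>2$, the Master-theorem/substitution solution is dominated by the leaf term, giving $T(n)=\Theta(n^{\log_2 3})$; the remaining task is to verify that the constants work out so that the bound is at most $16n^{\log_2 3}$. I would do this by a clean induction: assume $T(n/2)\le 16(n/2)^{\log_2 3}=16n^{\log_2 3}/3$, substitute, and check that $3\cdot(16n^{\log_2 3}/3)+cn=16n^{\log_2 3}+cn$ can be absorbed — this is where one must be slightly careful, since the naive substitution leaves a surplus $cn$ term. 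The honest way is to account for the linear overhead by noting $\sum_i 3^i\,(c\,n/2^i)=cn\sum_i(3/2)^i$, whose last (deepest) term dominates and is itself $\Theta(n^{\log_2 3})$; choosing the constant $16$ generously covers both the leaf contribution and this geometric overhead. The auxiliary linear-work subroutines (additions/subtractions) are exactly the ones whose sequential cost scales linearly, consistent with the earlier \textsc{SUM}/\textsc{DIFF} analyses.

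For the memory bound $M(n)\le 8n$, the key observation is that \textsc{SKIM} is executed \emph{sequentially}, so the three recursive products are computed one at a time and their working space can be reused. Thus the peak memory is the space to store the operands and the accumulated partial results at one level, $O(n)$, plus the space used by a single recursive call at depth $n/2$ (not three simultaneously). This gives a recurrence of the form $M(n)\le M(n/2)+c'n$, which solves to $M(n)=O(n)$ since the per-level overhead decays geometrically: $\sum_i c'n/2^i\le 2c'n$. I would again verify the explicit constant $8$ by induction, bounding the number of $O(n)$-digit temporaries (the operands $A,B$, the differences $A_0-A_1$ and $B_1-B_0$, and the accumulators $C_0,C_2,C'$) that must coexist at a single recursion level, and confirming their total plus the reused child footprint stays under $8n$.

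The main obstacle is bookkeeping the constants rather than any conceptual difficulty: one must count carefully how many linear-cost passes the recomposition step $C=C_0+s^{n/4}C_1+s^{n/2}C_2$ together with $C_1=C'+C_0+C_2$ actually requires, and ensure that the chosen constant ($16$ for time, $8$ for space) dominates both the branching leaf term and the accumulated linear overhead across all $\log_2 n$ levels. Since the statement is flagged as obtainable ``by inspection,'' I would keep the argument at the level of writing down the two recurrences, identifying the base cases, and citing the standard geometric-series solution, rather than grinding every constant — the constants $16$ and $8$ are deliberately loose to absorb the overhead terms.
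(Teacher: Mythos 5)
Your proposal is correct and coincides with the paper's (implicit) argument: the paper states this fact without proof, merely as obtainable ``by inspection,'' and your recurrences $T(n)\le 3T(n/2)+cn$ and $M(n)\le M(n/2)+c'n$, solved by the geometric-series unrolling with sequential reuse of the recursive workspace, are exactly the standard analysis that inspection refers to. The only caveat --- that the explicit constants $16$ and $8$ depend on how one charges the additions, subtractions, and carry propagation in the recomposition --- is one you already flag, and the paper itself provides no finer accounting.
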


Our parallel algorithm $\cok{}$ extends the recursive scheme of $\textsc{SKIM}$ to take advantage of multiple available processors following the general outline discussed in Section~\ref{sec:overview}. We assume that the input factor $n$-digit integers are partitioned in a sequence of processors $\mathbf{P}$ in $n/|\mathbf{P}|$ digits at the beginning of the computation. Further, we assume that each processor is equipped with a local memory of size $M\geq $. Finally, in order to simplify our analysis, we assume $\npr{P}=4\times 3^i$, and $n=\npr{P}\times2^j$, for $i,j\in\mathbb{N}$. If that is not the case, the input integers can be padded with dummy digits, and/or some of the available processors may not be used. The following description and analysis remain correct in these cases with small corrections of the constant factors.

If $n\leq M\npr{P}^{\log_3 2}/10$, then the product is computed by \cok{} in the MI execution mode (i.e., $\cok{}_{MI}$) in $\log_3 3|\mathbf{P}|/4$ \emph{breadth-first} recursive steps. If that is not the case, \cok{} proceeds by executing up to $\log_2 |\mathbf{P}|$ \emph{depth-first} recursive steps. The sub-problems generated after such steps will have input size which allows their solution to be computed in the MI execution mode. \cok{} uses the sequential algorithm $\textsc{SKIM}$ to compute the products of integers $\emph{locally} $. Clearly, any sequential algorithm can be used in place of it. In the following, we present in detail and analyze \cok{}'s execution in the UM and main execution mode.

\subsection{\cok{} in the MI execution mode}
When multiplying input integers with $n$ digits, if $|\mathbf{P}|=1$, $\cok{}_{MI}$ computes the product $C$ using the sequential Karatsuba's algorithm $\textsc{SKIM}$ discussed in the previous section. If $|\mathbf{P}| = 4$, then $\cok{}_{MI}$ proceeds as follows:
\begin{enumerate}
     \item In parallel, $\mathbf{P}[2]$ and $\mathbf{P}[3]$  send to, respectively, $\mathbf{P}[1]$ and $\mathbf{P}[0]$ a copy of the $n/4$ digits of $A(\mathbf{P}[3])$ (resp., $A(\mathbf{P}[2])$.
     \item In parallel, $\mathbf{P}[0]$ and $\mathbf{P}[1]$  send to, respectively, $\mathbf{P}[2]$ and $\mathbf{P}[3]$ a copy of the $n/4$ digits of $B(\mathbf{P}[0])$ (resp., $A(\mathbf{P}[1])$.
     \item After the two previous steps, $A_0$ and a copy of $A_1$ (resp., a copy of $B_0$ and $B_1$) are partitioned in $[\mathbf{P}[1],\mathbf{P}[0]]$ (resp., $[\mathbf{P}[3],\mathbf{P}[2]]$) in $n/4$ digits. In parallel $[\mathbf{P}[1],\mathbf{P}[0]]$ (resp., $[\mathbf{P}[3],\mathbf{P}[2]]$) invoke the parallel subroutine \textsc{DIFF} to compute the flag $A'= |A_0-A-1|$ (resp., $B'=|B_1-B_0|$) and the flag $f_A$ (resp., $f_B$) which equals zero if $A_0=A_1$ (resp., $B_0=B_1$), $1$ if $A_0>A_1$ (resp., $B_1>B_0$), and $-1$ if $A_0<A_1$ (resp., $B_1<B_0$). Once computed $A'$ (resp., $B'$) is partitioned in $[\mathbf{P}[1],\mathbf{P}[0]]$ (resp., $[\mathbf{P}[3],\mathbf{P}[2]]$) in $n/4$ digits, and each processor in the subsequence holds a copy of $f_A$ (resp., $f_B$). Before proceeding processors in $[\mathbf{P}[1],\mathbf{P}[0]]$ (resp., $[\mathbf{P}[3],\mathbf{P}[2]]$) remove the digits of the copy of $A_1$ (resp., $B_0$) from their local memory.
    \item $\mathbf{P}[3]$ sends  $A(\mathbf{P}[3])$, $B(\mathbf{P}[3])$ and $B'(\mathbf{P}[3])$ to $\mathbf{P[2]}$. Then it removes these values, as well as $f_B$, from its local memory. In parallel, $\mathbf{P}[1]$ sends $A(\mathbf{P}[1])$ and $B(\mathbf{P}[1])$ to $\mathbf{P}[0]$, and then it removes them from it local memory. After this step, $\mathbf{P[2]}$ (resp., $\mathbf{P[0]}$) holds $A_1$, $B_1$ and $B'$ (resp., $A_0$ and $B_0$) in its local memory.
    \item $\mathbf{P}[0]$ sends $A'{\mathbf{P}[0]}$ to $\mathbf{P}[1]$, and then removes it and $f_A$ from its local memory.
    \item $\mathbf{P}[2]$ sends $B'$ and $f_B$ to $\mathbf{P}[1]$ and then removes them from its local memory.
    \item In parallel, $\mathbf{P}[0]$ computes $C_0=A_0\times B_0$, $\mathbf{P}[1]$ computes $C'=f_AA'\times f_BB'$, and $\mathbf{P}[2]$ computes $C_2=A_2\times B_2$  using the sequential Karatsuba's algorithm \textsc{SKIM};
    \item In parallel, $\mathbf{P}[0]$ sends to $\mathbf{P}[1]$ a copy of $C_0$, and $\mathbf{P}[2]$ sends to $\mathbf{P}[3]$ a copy of $C_2$.
    \item In parallel, $\mathbf{P}[0]$ (resp., $\mathbf{P}[3]$) sends to $\mathbf{P}[2]$ (resp., $\mathbf{P}[1]$) a copy of the $n/2$ most (resp., least) significant digits of $C_0$ (resp.. $C_2$), and then removes them from its local cache.
    \item $\mathbf{P}[1]$ sends to $\mathbf{P}[2]$ a copy of the $n/2$ most significant digits of $C'$ and then removes them from its local cache.
\end{enumerate}
At the end of the previous steps, two copies of $C_0$ are partitioned in each $[\mathbf{P}[1]\mathbf{P}[0]]$ and in $[\mathbf{P}[1]\mathbf{P}[0]]$ in $n/2$ digits, $C'$ is partitioned in $[\mathbf{P}[2]\mathbf{P}[1]]$ in $n/2$ digits, and two copies of are partitioned $C_2$ is partitioned in each $[\mathbf{P}[2]\mathbf{P}[1]]$ and $[\mathbf{P}[3]\mathbf{P}[2]]$. By construction we have $C(P[0])=C_0(\mathbf{P}[0])$. Let $C'_0 = C_0\mod s^{n/2}$ and $C'_2 = C_2\times s^{n/2}$. $C'_0$, $C_0$, $C_1$, $C_2$ and $C'_2$ are integers with at most $3n/2$ digits each partitioned in $[\mathbf{P}[3],\mathbf{P}[2],\mathbf{P}[1]]$ in $n/2$ digits.
The $3n/2$ most significant digits of $C$ correspond to those of the algebraic sum $C'_0+C'+C_0+C_2+C'_2$. If $C'\geq 0$ such sum can be computed using four consecutive invocations of the $\textsc{SUM}$ subroutine discussed in Section~\ref{sec:distsum} using the sequence $[\mathbf{P}[2],\mathbf{P}[1],\mathbf{P}[0]]$. If instead $C'< 0$, the sum $C'_0+C'+C_0+C_2+C'_2$ can be computed using three invocations of the $\textsc{SUM}$ and one of $\textsc{DIFF}$ using the sequence $[\mathbf{P}[2],\mathbf{P}[1],\mathbf{P}[0]]$. The $3n/2$ digits of such sum are partitioned in $P^*$ in $2n/|\mathbf{P}|$ digits, and, hence, $C$ is partitioned in $\mathbf{P}$ in $2n/|\mathbf{P}|$ digits.

At the end of this procedure, the product $C$ is partitioned in $\mathbf{P}$ in $2n/|\mathbf{P}|$ digits. We use the operations described for $\npr{P}=4$ as the base case for the recursive scheme of $\cok{}_{MI}$.\\


If $|\mathbf{P}| > 4$, $\cok{}_{MI}$ proceeds following a \emph{breadth-first} traversal of the recursion tree, similarly to the operations discussed for the case $|\mathbf{P}| = 4$:
\begin{enumerate}
    \item \textbf{Splitting:} Let $A_0$ and $B_0$ (resp., $A_1$ and $B_1$) be defined as~\ref{eq:partitionAB2}. By assumption, they are partitioned in $\mathbf{P'}$ (resp., $\mathbf{P''}$, that is, the first (resp., second) half of the sequence of $\mathbf{P}$ as defined in~\ref{eq:processorpartition2}. 
    \begin{enumerate}
        \item The values $f_A$, $A'$, $f_B$ and $B'$ are computed following steps analogous to those in the description of the base case $|\mathbf{P}|=4$ (1-4). Processors in $\mathbf{P'}$ (resp., $\mathbf{P''}$) operate as those in $[\mathbf{P}[1],\mathbf{P}[0]]$ (resp., $[\mathbf{P}[3],\mathbf{P}[2]]$), all the communications occur between processors of the same index within the sub-sequence.
    \end{enumerate}
 Consider the following subsequences of $\mathbf{P}$: 
\begin{equation*}
    \begin{aligned}
    \mathbf{P_0} = &[\mathbf{P}[(|\mathbf{P}|/2)-1],\mathbf{P}[(|\mathbf{P}|/2)-3],\ldots,\mathbf{P}[5],\mathbf{P}[3],,\mathbf{P}[2],\mathbf{P}[0]];\\
    \mathbf{P_1} = &[\mathbf{P}[|\mathbf{P}| -2],\mathbf{P}[(3|\mathbf{P}|/4 -2], \mathbf{P}[|\mathbf{P}| - 5],\mathbf{P}[3|\mathbf{P}|/4 -5],
    \ldots, \mathbf{P}[3|\mathbf{P}|/4 +1],  \mathbf{P}[|\mathbf{P}|/2 +1],\\ &\mathbf{P}[|\mathbf{P}|/2 -2],\mathbf{P}[(|\mathbf{P}|/4 -2], \mathbf{P}[|\mathbf{P}|/2 - 5],\mathbf{P}[|\mathbf{P}|/4 -5],\ldots, \mathbf{P}[|\mathbf{P}|/4 +1],  \mathbf{P}[1]];\\
    \mathbf{P_2} = &[\mathbf{P}[(|\mathbf{P}| -1],\mathbf{P}[(|\mathbf{P}| -3],\mathbf{P}[(|\mathbf{P}| -4],\ldots,\mathbf{P}[|\mathbf{P}|/2 +3],\mathbf{P}[|\mathbf{P}|/ +2],\mathbf{P}[2|\mathbf{P}|/3]].
\end{aligned}
\end{equation*}
The sub-sequence $\mathbf{P_0}$ (resp., $\mathbf{P_2}$) includes the processors in the first (resp., second) half of $\mathbf{P}$ 
except those of index $i=1,4,7,\ldots,|\mathbf{P}|/2-5,|\mathbf{P}|/2-2$ (resp., $j=i+|\mathbf{P}|/2+1$). The sub-sequence $\mathbf{P_1}$ is assigned all the remaining processors rearranged in a way which will reduce communications in the latter phases. $\cok{}_{MI}$ assigns each subsequence of the available processors to one of the three sub-problems which are recursively invoked to computed the product $A\times B$: 
\begin{itemize}
    \item The product $A_0\times B_0$ is computed by $\mathbf{P_0}$;
    \item The product $(A_0-A_1)\times (B_1-B_0)$ is computed by $\mathbf{P_1}$;
    \item The product $A_1\times B_1$ is computed by $\mathbf{P_2}$;
\end{itemize}
$\cok{}_{MI}$ transfers to each sequence $\mathbf{P_i}$ the input integers for the corresponding sub-problems. This is achieved in the two following parallel communication steps: 
\begin{enumerate}[resume]
    \item In parallel, each processors $\mathbf{P}[i]$  for $i=1,4,\ldots,|\mathbf{P}|/2-5, |\mathbf{P}|/2-2$ sends to $\mathbf{P}[i-1]$ a copy of the $n/(2|\mathbf{P}|)$ least significant digits of $A(\mathbf{P}[i])$, $B(\mathbf{P}[i])$ and either $A'(\mathbf{P}[i])$ if $i< |\mathbf{P}/2|$ or $B'(\mathbf{P}[i])$ if $i\geq |\mathbf{P}/2|$, and then removes them from its local cache. 
    \item In parallel, each processors $\mathbf{P}[i]$  for $i=1,4,\ldots,|\mathbf{P}|/2-5, |\mathbf{P}|/2-2$ sends to $\mathbf{P}[i+1]$ a copy of the $n/(2|\mathbf{P}|)$ most significant digits of $A(\mathbf{P}[i])$, $B(\mathbf{P}[i])$ and either $A'(\mathbf{P}[i])$ if $i< |\mathbf{P}/2|$ or $B'(\mathbf{P}[i])$ if $i\geq |\mathbf{P}/2|$, and then removes them from its local cache. At the end of this step $A_0$, $B_0$ and $A'$ (resp., $A_1$, $B_1$ and $B'$) are partitioned in $\mathbf{P_0}$ (resp., $\mathbf{P_2}$) in $3n/(2|\mathbf{P}|)$ digits. 
    \item In parallel, all processors $\mathbf{P_0}[i]$ for $i\in\{0,1,\ldots, |\mathbf{P}|/3-1\}$,  send the $3n/(2\npr{P})$ digits of  $A'(\mathbf{P_0}[i])$  to $\mathbf{P_1}[i]$, and then remove them from their local memory.
    \item In parallel, all processors $\mathbf{P_2}[i]$ for $i\in\{0,1,\ldots, |\mathbf{P}|/3-1\}$,  send the $3n/(2\npr{P})$ digits of $B'(\mathbf{P_0}[i])$  to $\mathbf{P_1}[i]$, and then remove them from their local memory.
\end{enumerate}
 
 \item \textbf{Recursive multiplication:} The three sub-products are computed in parallel using $\cok{}_{MI}$: $C_0=A_0\times B_0$ is computed by the processors in $\mathbf{P_0}$, $C' = A'\times B'$ is computed by $\mathbf{P_1}$, and $C_2 = A_1\times B_1$ is computed by $\mathbf{P_2}$. At the end of these recursive calls,  each product is partitioned in the subsequence used to compute it in $3n/|\mathbf{P}|$ digits.
 
 \item \textbf{Recomposition:} $\cok{}_{MI}$ combines $C_0$, $C'$ and $C_2$ to obtain the desired $C$. The steps closely follows those discussed for the base case $|\mathbf{P}| = 4$. 
 \begin{enumerate}
     \item In parallel, each processor $\mathbf{P_0}[i]$ (resp., $\mathbf{P_2}[i]$) for $i=0,3,6,\ldots,|\mathbf{P}_0|-3$, sends to $\mathbf{P}[i+1]$ (resp., $\mathbf{P}[i+1+|\mathbf{P}|/2]$) the $n/|\mathbf{P}|$ most significant digits of $C_0(\mathbf{P_0}[i])$ (resp., $C_2(\mathbf{P_2}[i])$) and then removes them from its local memory. 
     \item In parallel, each processor $\mathbf{P_0}[i]$ (resp., $\mathbf{P_2}[i]$) for $i=2,5,8,\ldots,|\mathbf{P}_0|-1$, sends to $\mathbf{P}[i-1]$ (resp., $\mathbf{P}[i-1+|\mathbf{P}|/2]$) the $n/|\mathbf{P}|$ least significant digits of $C_0(\mathbf{P_0}[i])$ (resp., $C_2(\mathbf{P_2}[i])$) and then removes them from its local memory. Note that each of the $\mathbf{P_0}[i]$'s (resp., $\mathbf{P_2}[i]$'s) is communicating is a different processor in $\mathbf{P_1}$. Hence, all such communication can occur in parallel.
     At the end of these two steps, $C_0$ (resp., $C_2$) is partitioned in the first (resp., second) half of the sequence $\mathbf{P}$, denoted as $\mathbf{P'}$ (resp., $\mathbf{P''}$).
     \item In parallel, each processor $\mathbf{P_1}[i]$:
     \begin{itemize}
         \item for $i=0,2,4, \ldots, |\mathbf{P}_1|/2-2$, sends to $\mathbf{P}[|\mathbf{P}|/4+3i/2]$ the $2n/\npr{P}$ least significant digits of $C'(\mathbf{P_1}[i])$;
         \item for $i=1,3,5, \ldots, |\mathbf{P}_1|/2-1$, sends to $\mathbf{P}[|\mathbf{P}|/4+ \lceil 3i/2\rceil]$ the $2n/|\mathbf{P}|$ most significant digits of $C'(\mathbf{P_1}[i])$ and then removes them from its local memory;
    \end{itemize}
    Further, each processor $\mathbf{P_1}[|\mathbf{P}_1|/2+i]$:
    \begin{itemize}
        \item for $i=0,2,4, \ldots, |\mathbf{P}_1|/2-2$, sends to $\mathbf{P}[|\mathbf{P}|/2+3i/2]$ the $2n/\npr{P}$ least significant digits of $C'(\mathbf{P_1}[i])$, and then removes them from its cache;
         \item for $i=1,3,5, \ldots, |\mathbf{P}_1|/2-1$, sends to $\mathbf{P}[|\mathbf{P}|/2+ \lceil 3i/2\rceil]$ the $2n/|\mathbf{P}|$ most significant digits of $C'(\mathbf{P_1}[i])$;
     \end{itemize}
     As in these steps each processor in $\mathbf{P}_1$ communicates with a single, distinct, processor in either $\mathbf{P_0}$ or $\mathbf{P_2}$, all these communications may occur in parallel.
     \item In parallel, each  processor $\mathbf{P'}[i]$ (resp., $\mathbf{P''}[i+|\mathbf{P}|/4]$) for $i=0,2,\ldots,|\mathbf{P}|/4-2$ sends to $\mathbf{P'}[i+|\mathbf{P}|/4]$ (reps., $\mathbf{P''}[i]$) a copy of the $2n/\npr{P}$ digits of $C_0(\mathbf{P'}[i])$ (resp., $C_2(\mathbf{P''}[i+|\mathbf{P}|/4])$).
     \item In parallel, each  processor $\mathbf{P'}[i+|\mathbf{P}|/4]$, for $i=0,1,\ldots,|\mathbf{P}|/4-1$, sends to $\mathbf{P''}[i]$ a copy of the digits og $2n/\npr{P}$ of $C_0(\mathbf{P'}[i])$.
     \item In parallel, each  processor $\mathbf{P''}[i]$, for $i=0,1,\ldots,|\mathbf{P}|/4-2$, sends to $\mathbf{P'}[i+|\mathbf{P}|/4]$  a copy of the $2n/\npr{P}$ of digits of $C_2(\mathbf{P'}[i])$.
 \end{enumerate} 
 \vspace{-2mm}
 At the end of these operations, $C_0$ is partitioned in $\mathbf{P}'$ in $2n/|\mathbf{P}|$ digits, $C_2$ is partitioned in $\mathbf{P}''$ in $2n/|\mathbf{P}|$ digits. Further, $C'$ and copies of $C_0$ and $C_2$ are partitioned in the subsequence of processors $[\mathbb{P}[3|\mathbf{P}|/4-1],\ldots,\mathbb{P}[|\mathbf{P}|/4-1]]$ in $2n/|\mathbf{P}|$ digits. The computation of $C$ is completed following steps analogous to those discussed for the base case $|\mathbf{P}|=4$.
\end{enumerate}
\vspace{-3mm}
\begin{theorem}\label{thm:costcokUM}
Let $A$ and $B$ be two $n$-digit integers portioned in a sequence of processors $\mathbf{P}$, with $n\geq |\mathbf{P}|$, in $n/|\mathbf{P}|$ digits. $\cok{}_{MI}$ computes the product $C=A\times B$ using the processors in $\mathbf{P}$, provided that each processor is equipped with a local memory of size at least $M_{\cok{}_{MI}}(n,|\mathbf{P}|)=10n/\npr{P}^{\log_3 2}$.  We have:
\vspace{-4mm}
\begin{align*}
    T_{\cok{}_{MI}}\left(n,|\mathbf{P}|\right) &\leq 173\frac{n^{\log_2 3}}{\npr{P}}\\
    BW_{\cok{}_{MI}}\left(n,|\mathbf{P}|\right) &\leq 174\frac{n}{\npr{P}^{\log_3 2}}\\
    L_{\cok{}_{MI}}\left(n,|\mathbf{P}|\right) &\leq 25\log^2_2 \npr{P}
\end{align*}
\end{theorem}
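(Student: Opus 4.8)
The plan is to mirror the proof of Theorem~\ref{thm:costcoimUM}, establishing correctness by inspection and then bounding memory, computation, bandwidth, and latency through per-level recurrences that bottom out at the $|\mathbf{P}|=4$ base case, where single processors run \textsc{SKIM} locally. Correctness follows by checking that the speculative difference step (computing $A'=|A_0-A_1|$ and $B'=|B_1-B_0|$ together with the sign flags $f_A,f_B$ via \textsc{DIFF}) and the final recombination reproduce Karatsuba's identity $C=C_0+s^{n/4}(C'+C_0+C_2)+s^{n/2}C_2$. The one point needing care is that $C'=f_Af_B\,|A_0-A_1|\,|B_1-B_0|$ may be negative, so the recomposition of $C'_0+C'+C_0+C_2+C'_2$ uses four invocations of \textsc{SUM} when $C'\ge 0$ and three of \textsc{SUM} with one of \textsc{DIFF} when $C'<0$.

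For the memory requirement I would track the data each processor must hold across the $\log_3(3|\mathbf{P}|/4)$ breadth-first levels: its share of the current subproblem's inputs and of the subproblem it is about to join, the $A'/B'$ values and flags from \textsc{DIFF}, the partial outputs awaiting recombination, and the workspace for \textsc{SUM}/\textsc{DIFF} (bounded via Lemmas~\ref{lem:costsum} and~\ref{lem:costdiff}). The binding constraint is the local \textsc{SKIM} at the bottom: after $i=\log_3(|\mathbf{P}|/4)$ levels the $|\mathbf{P}|=4$ base multiplies operands of size $n/2^{i+1}=n/\bigl(2(|\mathbf{P}|/4)^{\log_3 2}\bigr)\le 1.2\,n/|\mathbf{P}|^{\log_3 2}$, so by Fact~\ref{lem:seqcostcok} the local memory needed is at most $8\cdot 1.2\,n/|\mathbf{P}|^{\log_3 2}\le 10\,n/|\mathbf{P}|^{\log_3 2}$, which dominates all other per-level requirements.

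Next I would write the three cost recurrences, each of the form $X(n,|\mathbf{P}|)\le X(n/2,|\mathbf{P}|/3)+(\text{overhead of one level})$, since the three subproblems run in parallel on $|\mathbf{P}|/3$ processors each. The per-level overhead is controlled using Lemmas~\ref{lem:costsum} and~\ref{lem:costdiff}: the \textsc{DIFF} calls that build $A',B'$ cost $O(n/|\mathbf{P}|)$ work and $O(\log|\mathbf{P}|)$ bandwidth and latency; the constant number of splitting and recomposition rounds each move $O(n/|\mathbf{P}|)$ digits; and the recombining \textsc{SUM}/\textsc{DIFF} calls add $O(n/|\mathbf{P}|+\log|\mathbf{P}|)$ work and $O(\log|\mathbf{P}|)$ bandwidth/latency. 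For computation the recursion bottoms out at $16\,(1.2\,n/|\mathbf{P}|^{\log_3 2})^{\log_2 3}$, and since $\log_3 2\cdot\log_2 3=1$ this equals $\Theta(n^{\log_2 3}/|\mathbf{P}|)$, with the accumulated per-level overhead lower order, giving the stated $173\,n^{\log_2 3}/|\mathbf{P}|$. Latency is immediate: $O(\log|\mathbf{P}|)$ levels each contributing $O(\log|\mathbf{P}|)$ yield $O(\log^2|\mathbf{P}|)$.

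The step I expect to be the crux is the bandwidth bound, where the per-level term $O(n/|\mathbf{P}|)$ is \emph{not} dominated by the base case but by the sum over levels. Unrolling gives $\sum_{j=0}^{i-1}(n/2^j)/(|\mathbf{P}|/3^j)=(n/|\mathbf{P}|)\sum_{j}(3/2)^j$, a geometric series dominated by its last term $(3/2)^i=(|\mathbf{P}|/4)^{1-\log_3 2}=(|\mathbf{P}|/4)^{\log_3(3/2)}$. Combining with the $n/|\mathbf{P}|$ prefactor and using $1-\log_3 2=\log_3(3/2)$ collapses this to $\Theta(n/|\mathbf{P}|^{\log_3 2})$, matching the claimed $174\,n/|\mathbf{P}|^{\log_3 2}$. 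Pinning down the constants, the geometric-sum bookkeeping, and a careful accounting of every communication round in the elaborate redistribution of the $C_0,C',C_2$ shares (including the ten explicit steps of the $|\mathbf{P}|=4$ base case) is the most laborious part; the asymptotics, however, fall out cleanly from the $3/2$ ratio between the branching factor and the input shrinkage.
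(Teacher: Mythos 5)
Your proposal matches the paper's own proof essentially step for step: the same $|\mathbf{P}|=4$ base case in which three processors run \textsc{SKIM} locally, the same per-level recurrences $X(n,|\mathbf{P}|)\leq X(n/2,|\mathbf{P}|/3)+\BO{n/|\mathbf{P}|+\log_2|\mathbf{P}|}$ controlled via Lemmas~\ref{lem:costsum} and~\ref{lem:costdiff}, the same sign-dependent recomposition (\textsc{SUM} only versus \textsc{SUM} plus \textsc{DIFF}), and the same unrolling in which the ratio $3/2$ between branching and input shrinkage makes the bandwidth sum collapse to $\BT{n/\npr{P}^{\log_3 2}}$ while the bottom-level \textsc{SKIM} workspace ($8\cdot 1.2\,n/\npr{P}^{\log_3 2}\leq 10n/\npr{P}^{\log_3 2}$) is the binding memory constraint, exactly as in the paper. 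The argument is correct and takes essentially the same approach as the paper's proof.
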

\begin{proof}
$\cok{}_{MI}$ correctly computes $C=A\times B$ by inspection.

If $|\mathbf{P}|=1$, the product $C=A\times B$ is computed locally by the single available processor using algorithm $\textsc{SKIM}$. By Lemma~\ref{lem:seqcostlim} we thus have $M_{\cok{}_{MI}}(n,1)\le 8n$,  $T_{\cok{}_{MI}}(n,1)\le 8n^2$, $BW_{\cok{}_{MI}}(n,1)=0$ and $L_{\cok{}_{MI}}(n,1)=0$.

For $|\mathbf{P}|=4$, the product $C=A\times B$ is computed using three of the four available processors. By the description, at any time each processor may need to maintain in its local memory at most $3n/2$ digits of the input, $4\times n/2$ digits of the output of the sub-problems, the data required to invoke $\textsc{DIFF}$ to compute $A'$ and $B'$, the data required to execute $\textsc{SKIM}$ for input integers of size $n/2$, and the data required to invoke $\textsc{SUM}$ and $\textsc{DIFF}$ in the final recombination step. Note that, once the outputs of the subproblems are computed, it is unnecessary to maintain the respective input values in the memory. By Lemma~\ref{lem:costdiff} and Lemma~\ref{lem:costsum}, $4\frac{n}{2}+5\geq M_{\textsc{DIFF}}(3n/2,3)\geq M_{\textsc{SUM}}(3n/2,3)$. 
\vspace{-3mm}
\begin{align}
    M_{\cok{}_{MI}}\left(n,4\right) &\leq \mymax{\frac{3n}{4}+ M_{\textsc{DIFF}}\left(\frac{n}{2},2\right), M_{\textsc{SKIM}}\left(\frac{n}{2}\right), \frac{3n}{4} +  M_{\textsc{DIFF}}\left(\frac{3n}{2},3\right)}\nonumber\\
    &\leq \mymax{\frac{3n}{4} + 4\frac{n}{4}+5, 4n, \frac{3n}{2}+4\frac{n}{2} + 5}\nonumber\\
    &< 4n
\end{align}
where the last passage follows as, by assumption, $ n\geq \npr{P}=4$.
The computation time is given by the sum of the time required for computing the differences in step (4), the invocation of $\textsc{SKIM}$ in step (8), and for computing a difference and three sums, as discussed at the end of the presentation of the base case:
\vspace{-2mm}
\begin{align}
    T_{\cok{}_{MI}}\left(n,4\right) &\leq T_{\textsc{DIFF}}\left(\frac{n}{2},2\right) + T_{\textsc{SKIM}}\left(\frac{n}{2}\right) + T_{\textsc{DIFF}}\left(\frac{3n}{2},3\right) + 3T_{\textsc{SUM}}\left(\frac{3n}{2},2\right)\nonumber\\
    &< 7\frac{n}{4}+16\left(\frac{n}{2}\right)^{\log_2 3} + 7\frac{n}{2}+3\times 6\frac{n}{2}\nonumber\\
    &< 12n^{\log_2 3}
\end{align}
where the last passage follows as, by assumption, $ n\geq \npr{P}=4$.
Similarly, the \io{} cost of $\cok{}_{MI}$ for the base case is bounded by the cost of operations discussed in the description of the base case as follows:
\begin{align}
    BW_{\cok{}_{MI}}\left(n,4\right) &\leq \frac{n}{4}+ \frac{n}{4}+ BW_{\textsc{DIFF}}\left(\frac{n}{2},2\right) +\frac{3n}{4} +\frac{n}{4}+\frac{n}{4}+1+ BW_{\textsc{SKIM}}\left(\frac{n}{2}\right) +n\nonumber\\
    & +\frac{n}{2}+\frac{n}{2} + BW_{\textsc{DIFF}}(n,2) +3 BW_{\textsc{SUM}}(n,2)\nonumber \\
    &= 4n + BW_{\textsc{DIFF}}\left(\frac{n}{2},2\right) + BW_{\textsc{DIFF}}(n,2) +3 BW_{\textsc{SUM}}(n,2)\nonumber\\
    &= 4n + 5+5+ 3\times 4 +1 \label{eq:cokp4bw1}\\
    &< 10n\label{eq:cokp4bw2}\\
    L_{\cok{}_{MI}}\left(n,4\right) &\leq 1+1+ L_{\textsc{DIFF}}\left(\frac{n}{2},2\right) +1+1+1  + L_{\textsc{SKIM}}\left(\frac{n}{2}\right) +1+1+1\nonumber\\
    &+L_{\textsc{DIFF}}(n,2) +3 L_{\textsc{SUM}}(n,2)\nonumber \\
    &=8 +L_{\textsc{DIFF}}\left(\frac{n}{2},2\right)+L_{\textsc{DIFF}}(n,2) +3 L_{\textsc{SUM}}(n,2)\nonumber\\
    &=8+3+3+3\times2 \label{eq:cokp4bw3}\\
    &=20\nonumber
\end{align}
where~\eqref{eq:cokp4bw1} and ~\eqref{eq:cokp4bw3} follow from Lemma~\ref{lem:costsum} and Lemma~\ref{lem:costdiff}, and~\eqref{eq:cokp4bw2} follows from the assumption $n\geq|\mathbf{P}|=4$. This concludes the analysis of the base case.

In the following, we assume $|\mathbf{P}|>4$. $\cok{}_{MI}$ proceeds to execute $\log_3 \npr{P}/4$ \emph{breadth-first} recursion steps. In each recursion branch, the available processors are divided into three disjoint subsequences and assigned to a different recursive invocation of $\cok{}_{MI}$. After $\log_3 |\mathbf{P}|/4$ recursive breadth-first levels, each of the generated sub-problems is assigned to a subsequence of four processors and $\cok{}_{MI}$ proceeds according to the base case.
The analysis largely follows that for the base case $|P|=4$ with only minor adjustments corresponding to the minor difference in the use of the available processors:
\vspace{-3mm}
\begin{align}
    M_{\cok{}_{MI}}\left(n,|\mathbf{P}|\right) &\leq \max \Big\{\frac{3n}{|\mathbf{P}|}+ M_{\textsc{DIFF}}\left(\frac{n}{2},\frac{\npr{P}}{2}\right), M_{\cok{}_{MI}}\left(\frac{n}{2},\frac{|\mathbf{P}|}{3}\right),\nonumber\\ &\qquad \qquad 3\frac{2n}{\npr{P}} + M_{\textsc{DIFF}}\left(n,\frac{\npr{P}}{2},\right)\Big\}\nonumber\\
    &\leq \mymax{3\frac{n}{|\mathbf{P}|}+4\frac{n}{|\mathbf{P}|}+5,  M_{\cok{}_{MI}}\left(\frac{n}{2},\frac{|\mathbf{P}|}{3}\right),6\frac{n}{\npr{P}}+4\frac{2n}{\npr{P}}+5}\nonumber\\
    &\leq \mymax{M_{\cok{}_{MI}}\left(\frac{n}{2},\frac{|\mathbf{P}|}{3}\right),14\frac{n}{\npr{P}}+5}\nonumber\\
    &\leq \mymaxo{i=1,\ldots,\log_3 \npr{P}/4}{M_{\cok{}_{MI}}\left(\frac{n}{2^i},\frac{|\mathbf{P}|}{3^i}\right),14\frac{n}{\npr{P}}\left(\frac{3}{2}\right)^{i-1}+5}\nonumber\\
    &\leq \mymax{M_{\cok{}_{MI}}\left(n\left(\frac{4}{\npr{P}}\right)^{\log_3 2},4\right),14\frac{n}{\npr{P}}\left(\frac{3}{2}\right)^{\log_3 \npr{P}/4 -1}+5}\nonumber\\
    &< \mymax{\frac{10n}{\npr{P}^{\log_3 2}},6\frac{n}{\npr{P}^{\log_2 3}}+5}\nonumber\\
    &= 10n/\npr{P}^{\log_3 2}.\nonumber
\end{align}

Hence, by Lemma~\ref{lem:seqcostlim}, $M_{\cok{}_{MI}}(n,|\mathbf{P}|)\geq 8n/\sqrt{|\mathbf{P}|}$. We can thus conclude that all such requirement are met if each processor is equipped with a local memory of size $M_{\cok{}_{MI}}(n,|\mathbf{P}|)=12n/\sqrt{|\mathbf{P}|}$.\\

By construction, in each recursion level of the MI execution mode, $\cok_{MI}$ computes the differences $A'$,$B'$, and then it recursively invokes itself on three distinct subsequences of available processors to compute the three generated subproblems. Then, $\cok_{MI}$ combines the output of such subproblems to conclude the computation of $C$. By Lemma~\ref{lem:costsum} and Lemma~\ref{lem:costdiff}, we have:
\vspace{-3mm}
\begin{align*}
    T_{\cok{}_{MI}}\left(n,\npr{P}\right) &\leq T_{\textsc{DIFF}}\left(\frac{n}{2},\frac{\npr{P}}{2}\right) + T_{\cok{}_{MI}}\left(\frac{n}{2},\frac{|\mathbf{P}|}{3}\right)\\ &\quad + T_{\textsc{DIFF}}\left(\frac{3n}{2},\frac{3\npr{P}}{4}\right) + 3 T_{\textsc{SUM}}\left(\frac{3n}{2},3\npr{P}/4\right)\nonumber\\
    &< 7\frac{n}{\npr{P}}+5 \log_2 \frac{\npr{P}}{2}+T_{\cok{}_{MI}}\left(\frac{n}{2},\frac{|\mathbf{P}|}{3}\right)+ 14\frac{n}{\npr{P}}+5 \log_2 \frac{3\npr{P}}{4}\\ &\quad + 3\left(12\frac{n}{\npr{P}}+4 \log_2 \frac{3\npr{P}}{4}\right)\nonumber\\
    &< 57\frac{n}{\npr{P}} + 22 \log_2 \npr{P}+T_{\cok{}_{MI}}\left(\frac{n}{2},\frac{|\mathbf{P}|}{3}\right)\nonumber\\
    &\leq T_{\cok{}_{MI}}\left(n\left(\frac{4}{\npr{P}}\right)^{\log_3 2},4\right) + 57\frac{n}{\npr{P}}\sum_{i=0}^{\log_3 \npr{P}/4-1} \left(\frac{3}{2}\right)^{i}\nonumber\\&\quad + 22\sum_{i=0}^{\log_3 \npr{P}/4-1} \log_2 \frac{\npr{P}}{3^i}\nonumber\\
    &< 48\frac{n^{\log_2 3}}{\npr{P}}+69\frac{n}{\npr{P}^{\log_3 2}}+ 14\log^2_2 \npr{P}\nonumber\\
    &<117\frac{n^{\log_2 3}}{\npr{P}}+56\frac{n^{\log_2 3}}{\npr{P}}\nonumber
\end{align*}
where the last passage follows from the fact that, by assumption, $n>\npr{P}\geq 2$, and that for such values $4\frac{n^{\log_2 3}}{\npr{P}}>\log^2_2\npr{P}$.
The \io{} operation of $\cok{}_{MI}$ have been presented in detail in the description of the algorithm. Here we present the analysis of the recursion, which allows us to characterize both bandwidth and latency cost. To aid readability, in the first line of the analysis of the bandwidth (resp., latency), each term of the sum corresponds to the bandwidth (resp., latency) cost of each phase specified in the description of $\cok{}_{MI}$.  By Lemma~\ref{lem:costsum} and Lemma~\ref{lem:costdiff}, we have:
\begin{align}
    BW_{\cok{}_{MI}}\left(n,\npr{P}\right) 
    &\leq \frac{n}{\npr{P}}+ \frac{n}{\npr{P}}+ BW_{\textsc{DIFF}}(n/2,\npr{P}/2,n/\npr{P}) +\frac{n}{\npr{P}} +\frac{n}{\npr{P}}+2\times \frac{3n}{2\npr{P}}+2\nonumber\\ 
    &\quad+ BW_{\cok{}_{MI}}\left(\frac{n}{2},\frac{|\mathbf{P}|}{3}\right) +2\times\frac{n}{\npr{P}} \nonumber\\
    &\quad +4\times\frac{2n}{\npr{P}} + BW_{\textsc{DIFF}}\left(\frac{3n}{2},\frac{3\npr{P}}{4}\right) +3 BW_{\textsc{SUM}}\left(\frac{3n}{2},\frac{3|\mathbf{P}|}{4}\right)\nonumber \\
    &\leq  5\log_2 \frac{\npr{P}}{2}+ BW_{\cok{}_{MI}}\left(\frac{n}{2},\frac{|\mathbf{P}|}{3}\right) + 5\log_2 \frac{3\npr{P}}{4} +3\times 4\log_2 \frac{3\npr{P}}{4}\nonumber\\&\quad + 20\frac{n}{\npr{P}}+2\nonumber \\
    &\leq  BW_{\cok{}_{MI}}\left(\frac{n}{2},\frac{|\mathbf{P}|}{3}\right)+22\log_2 \npr{P} + 20\frac{n}{\npr{P}}\nonumber  \\
    &\leq BW_{\cok{}_{MI}}\left(n\left(\frac{4}{\npr{P}}\right)^{\log_3 2},4\right) + 20\frac{n}{\npr{P}}\sum_{i=0}^{\log_3 \npr{P}/4-1} \left(\frac{3}{2}\right)^{i}\nonumber\\&\quad + 22\sum_{i=0}^{\log_3 \npr{P}/4-1} \log_2 \frac{\npr{P}}{3^i}\nonumber\\
    &< 24\frac{n}{\npr{P}^{\log_3 2}}+24\frac{n}{\npr{P}^{\log_3 2}}+ 14\log^2_2 \npr{P}.\nonumber\\
    &< 48\frac{n}{\npr{P}^{\log_3 2}}+ 126\frac{n}{\npr{P}^{\log_3 2}}.\label{eq:bwcokum}
\end{align}
where~\eqref{eq:bwcokum} follows from the fact that, by assumption, $n>\npr{P}\geq 2$, and that for such values $9\frac{n}{\npr{P}^{\log_3 2}}>\log^2_2\npr{P}$. Finally,
\begin{align}
    L_{\cok{}_{MI}}\left(n,\npr{P}\right) 
    &\leq 1+1+ L_{\textsc{DIFF}}\left(\frac{n}{2},\frac{|\mathbf{P}|}{2}\right) +1+1+1+1 + L_{\cok{}_{MI}}\left(\frac{n}{2},\frac{|\mathbf{P}|}{3}\right) \nonumber\\
    &\quad+2 +4 + L_{\textsc{DIFF}}\left(\frac{3n}{2},\frac{3|\mathbf{P}|}{4}\right) +3 L_{\textsc{SUM}}\left(\frac{3n}{2},\frac{3|\mathbf{P}|}{4}\right)\nonumber \\
    &\leq  3\log_2 \frac{\npr{P}}{2}+ L_{\cok{}_{MI}}\left(\frac{n}{2},\frac{|\mathbf{P}|}{3}\right) + 3\log_2 \frac{3\npr{P}}{4} +3\times 2\log_2 \frac{3\npr{P}}{4} + 12 \nonumber\\
    &\leq  L_{\cok{}_{MI}}\left(\frac{n}{2},\frac{|\mathbf{P}|}{3}\right)+12\log_2 \npr{P}\nonumber\\
    &\leq L_{\cok{}_{MI}}\left(n\left(\frac{4}{\npr{P}}\right)^{\log_3 2},4\right) + \sum_{i=0}^{\log_3 \npr{P}/4-1} 12\log_2 \npr{P}\nonumber\\
    &< 20+ 5\log^2_2 \npr{P}\nonumber\\
    &<25\log^2_2 \npr{P}\nonumber
\end{align}
\end{proof}

\subsection{\cok{} in the main execution mode}
An important consequence of of Theorem~\ref{thm:costcoimUM} is that to execute $\cok{}_{MI}$ to multiply two $n$-digit integers each processor must be equipped with a memory of size at least $10n/|\mathbf{P}|^{\log_3 2}$. Considering the aggregate memory available in $|\mathbf{P}|$ processors, this implies that $\cok{}_{MI}$ can only be used to multiply $n$-digit integers where $n\in \BO{M|\mathbf{P}|^{\log_3 2}}$.  

As outlined in the general framework in Section~\ref{sec:overview}, in this section, we describe how to combine the MI execution mode in $\cok{}_{MI}$ with a \emph{depth-first} scheduling of the recursively generated subproblems. Our algorithm, \cok{}, allows to compute the product of two $n$-digit integers provided that (i) $M\geq 40n/|\mathbf{P}|$ and (ii) $M\geq \log_2 \npr{P}$. By (i), \cok{} can thus be used to multiply $n$-digit integers where $n\in \BO{M|\mathbf{P}|}$. That is, its memory requirement corresponds asymptotically to the amount of overall memory required to store the input integers and/or the product.

In its main execution mode, \cok{} executes up to $\log_2 \frac{20n}{M\npr{P}^{\log_3 2}}$ of \emph{depth-first} steps, until the size of the subproblems being generated is such that they can be computed by the sequence of processors $\mathbf{P}$ using \cok{} according to the steps discussed for the MI execution mode.

When $\cok{}\left(n,A,B,\mathbf{P},m\right)$ is invoked, if $n\leq 12n\sqrt{|\mathbf{P}|}$, the product is computed by invoking $\cok{}_{MI}$. Otherwise \cok{} proceeds as follows:
\begin{enumerate}
    \item In parallel, each processor $\mathbf{P}[i]$, for $i=0,1,\ldots,|\mathbf{P}|/2-1$, sends to $\mathbf{P}[i+|\mathbf{P}|/2]$  a copy of the $n/(2|\mathbf{P}|)$ most significant digits of $A(\mathbf{P}[i])$ and $B(\mathbf{P}[i])$, and then remove them from their cache.
    \item In parallel, each processor $\mathbf{P}[|\mathbf{P}|/2+i]$, for $i=0,1,\ldots,|\mathbf{P}|/2-1$, sends to $\mathbf{P}[i]$ a copy of the $n/(2|\mathbf{P}|)$ least significant digits of $A(\mathbf{P}[|\mathbf{P}|/2+i])$ and $B(\mathbf{P}[|\mathbf{P}|/2+i])$), and then remove them from their cache.
\end{enumerate}
At the end of step (1) (resp., (2)), $A_0$ and $B_0$ (resp., $A_1$ and $B_1$)  are partitioned in the sequence:
    $$\mathbf{\tilde{P}}= [\mathbf{P}[|\mathbf{P}|-1], \mathbf{P}[|\mathbf{P}|/2-2], \ldots,\mathbf{P}[|\mathbf{P}|/2],\mathbf{P}[0]$$
    in $n/(2|\mathbf{P}|)$ digits. The even (resp., odd) index processors in the sequence $\mathbf{\tilde{P}}$ are the processors in the first (resp., second) half of the sequence $\mathbf{P}$).
\begin{enumerate}[resume]
    \item \cok{} is then recursively invoked to compute $C_0= A_0\times B_0$ using the sequence $\mathbf{\tilde{P}}$. Once computed, $C_0$ is partitioned in $\mathbf{\tilde{P}}$ in $n/|\mathbf{P}|$ digits.
    \item \cok{} is then recursively invoked to compute $C_2= A_1\times B_1$ using the sequence $\mathbf{\tilde{P}}$. Once computed, $C_2$ is partitioned in $\mathbf{\tilde{P}}$ in $n/|\mathbf{P}|$ digits.
    \item \textsc{DIFF} is invoked on $\mathbf{\tilde{P}}$ to compute $A'=|A_0-A_1|$ and the flag $f_A$ such that $f_A=0$ if $A_0 = A_1$, $f_A=1$ if $A_0 > A_1$, and $f_A=-1$ if $A_0 < A_1$. Then each processor removes the digits of $A_0$ and $A_1$ from its local memory.
    \item \textsc{DIFF} is invoked on $\mathbf{\tilde{P}}$ to compute $B'=|B_1-A_0|$ and the flag $f_B$ such that $f_B=0$ if $B_0 = B_1$, $f_B=1$ if $B_0 < B_1$, and $f_B=-1$ if $B_0 < B_1$. Then each processor removes the digits of $B_0$ and $B_1$ from its local memory.
    \item If $f_A\times f_B = 0$, each processor in $\mathbf{\tilde{P}}$ sets $C'(\mathbf{\tilde{P}})=0$. Otherwise, \cok{} is then recursively invoked to compute $C'= A'\times B'$ using the sequence $\mathbf{\tilde{P}}$. Once computed, $C'$ is partitioned in $\mathbf{\tilde{P}}$ in $n/|\mathbf{P}|$ digits.
\end{enumerate}
To complete the computation of $C$, \cok{} opportunely redistributes and combine $C_0$, $C'$ and $C_2$
\begin{enumerate}[resume]
    \item \textsc{SUM} is invoked on $\mathbf{\tilde{P}}$ to compute the sum $C_0+C_2$, which, once computed is partitioned in $\mathbf{\tilde{P}}$ in $n/|\mathbf{P}|$ digits.
    \item If $f_A\times f_B = 1$ (resp., $f_A\times f_B = -1$), \textsc{SUM} (resp., \textsc{DIFF}) is invoked on $\mathbf{\tilde{P}}$ to compute $C_1 =(f_A\times f_B)C'+C_0+C_2$, which, once computed is partitioned in $\mathbf{\tilde{P}}$ in $n/|\mathbf{P}|$ digits.
    \item In parallel each processor $\mathbf{\tilde{P}}[|\mathbf{P}|/2+i]$, for $i = 0,1,\ldots,|\mathbf{P}|/2-1$, sends to $\mathbf{\tilde{P}}[i]$ a copy of the $n/|\mathbf{P}|)=$ digits of $C_0(\mathbf{\tilde{P}}[|\mathbf{P}|/2+i])$, henceforth referred as $C'_0(\mathbf{\tilde{P}}[|\mathbf{P}|/2+i])$, and then it removes them from its local memory. Let $C'_0$ be the integer value whose digits correspond to the $n/2$ most significant digits of $C_0$ (i.e., $C'_0= \lceil C_0/s^{n/2}\rceil$). $C'_0$ is partitioned in $[\mathbf{\tilde{P}}[\npr{P}/2-1],\ldots,\mathbf{\tilde{P}}[0]]$.
    \item In parallel each processor $\mathbf{\tilde{P}}[i]$, for $i = 0,1,\ldots,|\mathbf{P}|/2-1$, sends to $\mathbf{\tilde{P}}[i+\npr{P}/2]$ a copy of the $n/|\mathbf{P}|$ digits of $C_2(\mathbf{\tilde{P}}[|\mathbf{P}|/2+i])$, henceforth referred as $C'_2(\mathbf{\tilde{P}}[|\mathbf{P}|/2+i])$, and then removes them from its local memory. Let $C'_2$ be the integer value whose digits correspond to the $n/2$ least significant digits of $C_2$ (i.e., $C'_2=  C_2 \mod s^{n/2}$). $C'_2$ is partitioned in $[\mathbf{\tilde{P}}[\npr{P}-1],\ldots,\mathbf{\tilde{P}}[\npr{P}/2]]$.
    \item Let $C^* = C'_2\times s^{n/2}+C'_0$. By construction, $C^*$ is partitioned in $\mathbf{\tilde{P}}$ in $n/\npr{P}$ digits. $C'_1=C^*+C_1$ is computed using the parallel subroutine \textsc{SUM} on $\mathbf{\tilde{P}}$.
\end{enumerate}
    Note that the values $C_0+C_2$, $C_1$ and $C'_1$ may have up to $n+\lceil 3/s\rceil$ non-zero digits in their base-$s$ expansion, with the most significant(s) $\lceil 3/s\rceil\leq 2$ are held in the memory of $\mathbf{\tilde{P}}[\npr{P}-1]$. Sums and differences involving these values can still be computed with $\textsc{SUM}$ and $\textsc{DIFF}$ by increasing by $\lceil 3/s\rceil\leq 2$ their memory requirement and computation time. 
    By construction, $n$ least significant digits of $C'_1$ correspond to the $n$ digits of $C$ from the $(n/2-1)$-th least significant, to the $(3n/2-1)$-th least significant. The $n/2$ least significant digits of $C_0$ correspond to the $n/2$ least significant digits of $C$. Finally, the digits in the base-$s$ expansion of $C^m = \lfloor C_2/s^{n/2}\rfloor + \lfloor C'_1/s^{n}\rfloor$ correspond to the $n/2$ most significant digits of $C$.  
\begin{enumerate}[resume]
    \item Let $d = \lceil C'_1/s^{n}\rceil$. $\mathbf{\tilde{P}}[\npr{P}-1]$ sends $d$ to $\mathbf{\tilde{P}}[\npr{P}/2]$ and then removes its from its cache. 
    \item Let $C''_2 = \lfloor C_2/s^{n/2}\rfloor$. By construction, $C''_2$ is partitioned in $[\mathbf{\tilde{P}}[\npr{P}-1],\ldots,\mathbf{\tilde{P}}[\npr{P}/2]]$ in $n/\npr{P}$ digits. \textsc{SUM} is invoked on this subsequence to compute $C^m=C''_2+d$. As previously mentioned, its base-$s$ expansion correspond to the $n/2$ most significant digits of $C$.
    \item In parallel each processor $\mathbf{\tilde{P}}[i]$ (resp., $\mathbf{\tilde{P}}[i-1+\npr{P}/2]$) for $i = 1,3,\ldots,|\mathbf{P}|/2-1$, sends to $\mathbf{\tilde{P}}[i-1]$ (resp., $\mathbf{\tilde{P}}[i+\npr{P}/2]$) the $n/|\mathbf{P}|$ digits of $C_0(\mathbf{\tilde{P}}[i])$ (resp., $C^m(\mathbf{\tilde{P}}[i])$), and then removes them from its local memory. After these operations, $C_0\mod s^{n/2}$ (resp., $C^m$)  is partitioned in $[\mathbf{P}[\npr{P}/4-1],\ldots,\mathbf[0]]$ in $2n/|\mathbf{P}|$ (resp., $[\mathbf{P}[\npr{P}-1],\ldots,\mathbf[3\npr{P}/4]]$ in $2n/|\mathbf{P}|$)digits.
    \item In parallel each processor $\mathbf{\tilde{P}}[i]$ (resp., $\mathbf{\tilde{P}}[i+|\mathbf{P}|/2-1]$), for $i=1,3,5,\ldots,|\mathbf{P}|/2-1$, sends to $\mathbf{\tilde{P}}[i-1]$ (resp., $\mathbf{\tilde{P}}[i+|\mathbf{P}|/2]$) the $n/\npr{P}$ digits of $C'(\mathbf{\tilde{P}}[i])$ (resp., $C'(\mathbf{\tilde{P}}[i+\npr{P}/2-1])$), and then removes them from its cache.
    \item In parallel each processor $\mathbf{\tilde{P}}[i]$ (resp., $\mathbf{\tilde{P}}[i+|\mathbf{P}|/2]$), for $i=0,2,4,\ldots,|\mathbf{P'}|/2-2$, sends to $\mathbf{\tilde{P}}[i+|\mathbf{P}|/2]$ (resp., $\mathbf{\tilde{P}}[i+1]$) the $n/\npr{P}$ digits of $C'(\mathbf{\tilde{P}}[i])$ (resp., $C'(\mathbf{\tilde{P}}[i+|\mathbf{P}|/2-1])$), and then remove them from its cache. After these last two steps, $C'$ is partitioned in $[\mathbf{P}[3\npr{P}/4-1],\ldots,\mathbf[\npr{P}/4]]$ in $2n/\npr{P}$ digits.
\end{enumerate}

The following theorem rigorously characterizes the performance of \cok{}:

\begin{theorem}\label{thm:cok}
Let $A$ and $B$ be two $n$-digit integers portioned in a sequence of processors $\mathbf{P}$, with $n\geq |\mathbf{P}|$, in $n/|\mathbf{P}|$ digits. $\cok{}$ computes the product $C=A\times B$ using the processors in $\mathbf{P}$, provided that each processor is equipped with a local memory of size at least $M_{\cok{}}(n,|\mathbf{P}|)\geq \max\{40n/|\mathbf{P}|,\log_2 \npr{P}\}$.  We have:
\begin{align*}
    T_{\cok{}}\left(n,|\mathbf{P}|, M\right) &\leq 675\frac{n^{\log_2 3}}{|\mathbf{P}|}\\
    BW_{\cok{}}\left(n,|\mathbf{P}|, M\right) &\leq  1708\left(\frac{n}{M}\right)^{\log_23}\frac{M}{\npr{P}}\\
    L_{\cok{}}\left(n,|\mathbf{P}|, M\right) &\leq 8728\frac{n^{\log_23}}{\npr{P}M^{\log_2 3}}\log_2^2|\mathbf{P}|
\end{align*}
\end{theorem}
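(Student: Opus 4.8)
The plan is to follow the same two-phase template used in the proof of Theorem~\ref{thm:coim}, adapted to the fact that the Karatsuba recursion spawns \emph{three} subproblems per level rather than four. Correctness is argued by inspection: steps (1)--(19) of the main-mode description realize $C = C_0 + s^{n/4}C_1 + s^{n/2}C_2$ with $C_1 = (f_A f_B)C' + C_0 + C_2$, where the signed middle product $C' = |A_0-A_1|\cdot|B_1-B_0|$ and the sign $f_A f_B$ are produced by \textsc{DIFF}, and every carry/borrow arising in the recombination is absorbed by the \textsc{SUM} and \textsc{DIFF} subroutines of Section~\ref{sec:components}. For the base case, whenever the current size satisfies $n \leq \nz$ with $\nz = \Theta\!\left(M\npr{P}^{\log_3 2}\right)$ the subproblem fits the MI regime and the bound follows directly from Theorem~\ref{thm:costcokUM}; otherwise \cok{} performs $\ell = \BO{\log_2(n/\nz)}$ consecutive depth-first steps until every generated subproblem has reached size $\nz$.

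First I would settle the memory requirement. In a depth-first step each processor, beyond the space reused by the single active recursive call, must retain $\BO{n/\npr{P}}$ digits for its share of the inputs $A_0,A_1,B_0,B_1$, of the three outputs $C_0,C',C_2$, and of the \textsc{SUM}/\textsc{DIFF} working buffers (bounded through Lemmas~\ref{lem:costsum} and~\ref{lem:costdiff}, including the $\myceil{3/s}\leq 2$ overflow digits). This yields $M_{\cok}(n,\npr{P}) \leq M_{\cok}(n/2,\npr{P}) + \BO{n/\npr{P}}$, whose geometric telescoping gives $M_{\cok{}_{MI}}(\nz,\npr{P}) + \BO{n/\npr{P}}$. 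Choosing $\nz$ so that $M_{\cok{}_{MI}}(\nz,\npr{P}) = 10\nz/\npr{P}^{\log_3 2} \leq M/2$, and using $M \geq 40n/\npr{P}$ to force the accumulated overhead below $M/2$, shows the whole execution fits in the allotted $M$ words, which also bounds the number of DF steps.

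Next I would set up and solve the three cost recurrences. Because a depth-first step issues the three recursive calls \emph{sequentially} on all of $\mathbf{P}$ and then recombines via a constant number of \textsc{SUM}/\textsc{DIFF} calls on $\BO{n}$-digit operands, Lemmas~\ref{lem:costsum} and~\ref{lem:costdiff} give (with $\npr{P}$ and $M$ fixed across DF steps)
\begin{align*}
T_{\cok}(n) &\leq 3\,T_{\cok}(n/2) + \BO{n/\npr{P} + \log_2\npr{P}},\\
BW_{\cok}(n) &\leq 3\,BW_{\cok}(n/2) + \BO{n/\npr{P} + \log_2\npr{P}},\\
L_{\cok}(n) &\leq 3\,L_{\cok}(n/2) + \BO{1 + \log_2\npr{P}}.
\end{align*}
Unrolling $\ell$ levels multiplies the MI base cost by $3^{\ell} = \BT{(n/\nz)^{\log_2 3}}$ and adds geometric sums dominated by their last ($3^{\ell}$) term. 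The computation bound follows from $T_{\cok{}_{MI}}(\nz) \leq 173\,\nz^{\log_2 3}/\npr{P}$ via the crucial identity $(\log_3 2)(\log_2 3) = 1$, which collapses $\npr{P}^{(\log_3 2)(\log_2 3)}$ to $\npr{P}$ and cancels all $M$-dependence, leaving $\BO{n^{\log_2 3}/\npr{P}}$. The same cancellation applied to $BW_{\cok{}_{MI}}(\nz) \leq 174\,\nz/\npr{P}^{\log_3 2}$ and $L_{\cok{}_{MI}}(\nz) \leq 25\log_2^2\npr{P}$ yields the claimed $(n/M)^{\log_2 3}M/\npr{P}$ bandwidth and $n^{\log_2 3}\log_2^2\npr{P}/(\npr{P}M^{\log_2 3})$ latency, after using $M \geq \log_2\npr{P}$ to absorb the stray $\log_2\npr{P}$ factors of the additive sums into lower order terms.

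The main obstacle I anticipate is not the recursion, which is structurally identical to the \coim{} analysis, but the bookkeeping around the \emph{signed} middle term $C'$ and the overflow digits. One must verify that substituting a \textsc{DIFF} for a \textsc{SUM} when $f_A f_B = -1$ leaves the asymptotic cost unchanged, that the extra high-order digits only inflate constants, and that the redistribution steps (10)--(19) leave each intermediate partitioned exactly as the next \textsc{SUM}/\textsc{DIFF} call expects. Pinning down the explicit constants $675$, $1708$, and $8728$ then reduces to summing the per-level constants against the geometric series and the MI constants of Theorem~\ref{thm:costcokUM}, which is routine provided $\nz$ is fixed so that the MI phase consumes at most half of each processor's memory.
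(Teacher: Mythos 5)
Your proposal is correct and follows essentially the same route as the paper's own proof: a memory telescoping argument showing the depth-first overhead stays below $M/2$ under $M\geq 40n/\npr{P}$, the three-fold recurrences $X(n)\leq 3X(n/2)+\BO{n/\npr{P}+\log_2\npr{P}}$ for all three costs, unrolling down to subproblems of size $\BT{M\npr{P}^{\log_3 2}}$ handled by Theorem~\ref{thm:costcokUM}, and the exponent identity $(\log_3 2)(\log_2 3)=1$ together with $n\geq\npr{P}$ and $M\geq\log_2\npr{P}$ to absorb the additive terms. Your observation that the unrolled additive sums carry the $3^i$ multiplier and are dominated by their last term is in fact a slightly more careful bookkeeping than the paper's intermediate inequalities, but it leads to the same bounds.
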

\begin{proof}
The algorithm is correct by inspection. If $n\leq M|\mathbf{P}|^{\log_3 2}/10$, the statement follows from Theorem~\ref{thm:costcoimUM}. In the following we assume $M|\mathbf{P}|/24\geq n>M\npr{P}^{\log_3 2}/12$. 
\cok{} then proceeds by executing $\ell$ consecutive depth-first steps, where $0< \ell\leq \lceil \log_2 n/(40M\sqrt{P})\rceil$. In each recursion step, in addition to the space required for the recursive invocation to $\cok{}$, each processor must maintain $2\times n/(|\mathbf{P}|)$ digits for the input of the problem being computed, at most $4n/|\mathbf{P}|+2$ (including the temporary value $C'_1$) digits of the outputs of the recursive subproblems, and the space required for the invocation of $\textsc{DIFF}$ and $\textbf{SUM}$ used to combine the outputs of the subproblems. Note that the recursive calls to $\cok{}$ and the invocations of $\textsc{DIFF}$ and $\textbf{SUM}$ are always executed in distinct steps of the the same recursion level. Hence the memory space used for each can be \emph{reused}. Further, by Lemma~\ref{lem:costdiff} and Lemma~\ref{lem:costsum}, the memory requirement of $\textsc{DIFF}$ is higher of that on $\textbf{SUM}$ for same input size and number of available processors, we have:
\begin{align}
        M_{\cok{}}\left(n,\npr{P}\right) &\leq \frac{4n}{\npr{P}}+2 + \mymax{M_{\cok}\left(n,\npr{P}\right), M_{DIFF} \left(n, |\mathbf{P}|\right)}\nonumber\\
        &\leq \frac{4n}{\npr{P}}+2 + \mymax{M_{\cok}\left(n,\npr{P}\right), \frac{4n}{\npr{P}}+5}\label{eq:cokmem1}\\
        &\leq \frac{4n}{\npr{P}}+2 + M_{\cok{}}\left(n,\npr{P}\right)\nonumber\\
        &\leq \frac{4n}{\npr{P}}\sum_{i=0}^{\lceil\log_2 \frac{20n}{M\npr{P}^{\log_3 2}}\rceil-1}2^{-i}+2\left(\lceil\log_2 \frac{20}{M\npr{P}^{\log_3 2}}\rceil-1\right)\nonumber\\&\quad+ M_{\cok{}}\left(\frac{M\npr{P}^{\log_3 2}}{20},\npr{P},\frac{M\npr{P}^{\log_3 2}}{20\npr{P}^{1-\log_32}}\right)\nonumber\\
        &\leq \frac{8n}{\npr{P}}+2\log_2 \frac{20n}{M\npr{P}^{\log_3 2}}+ M_{\cok{}_{MI}}\left(\frac{M\npr{P}^{\log_3 2}}{20},\npr{P},\frac{M\npr{P}^{\log_3 2}}{20\npr{P}^{1-\log_32}}\right)\nonumber\\
        &\leq 20\frac{n}{\npr{P}}+\frac{M}{2}\label{eq:cokmem2}
\end{align}
where~\eqref{eq:cokmem1} follows from Lemma~\ref{lem:costdiff}, and   \eqref{eq:cokmem2} 
from Theorem~\ref{thm:costcokUM}. 
By construction, after $\ell$ depth-first steps, the available memory is reduced by at most $10n/|\mathbf{P}|$. As, by assumption, $M\geq 40 n/|\mathbf{P}|$, at least half of the space $M$ originally assigned to each processor is still available. After at most $\lceil \log_2 20n/(M\npr{P}^{\log_3 2})\rceil$ recursive steps, the generated sub-problems will have size at most $M\npr{P}^{\log_3 2}/20$. Thus, by Theorem~\ref{thm:costcokUM}, can be computed using $\cok{}$ using at most $M/2$ memory locations. This concludes the proof of the memory requirement for \cok{}.

The computation time required in a depth-first recursion level of \cok{}'s execution is bounded by the time required for evaluating $A'$ and $B'$, the computation steps required for the consecutive recursive invocations of \cok{} on the generated subproblems, plus the time required to combine the outputs of the sub-problems to compute the product itself using $\textsc{DIFF}$ and three invocations of $\textsc{SUM}$. By following the description of the algorithm, we have:
\begin{align*}
    T_{\cok{}}\left(n, |\mathbf{P}|,M\right) 
    &\leq T_{\textsc{DIFF}}\left(\frac{n}{2},\frac{\npr{P}}{2}\right)+3 T_{\cok{}}\left(\frac{n}{2}, |\mathbf{P}|,M\right)+ T_{\textsc{SUM}}\left(n,\npr{P}\right)\\&\quad+ T_{\textsc{DIFF}}\left(n,\npr{P}\right)+  T_{\textsc{SUM}}\left(n,\npr{P}\right)+1 
    +T_{\textsc{SUM}}\left(n,\frac{\npr{P}}{2}\right)\\
    &<\frac{7n}{\npr{P}}+ 4\log_2 \frac{\npr{P}}{2} +3 T_{\cok{}}\left(\frac{n}{2}, |\mathbf{P}|,M\right)+\frac{6n}{\npr{P}}+ 4\log_2 \frac{\npr{P}}{2}+\frac{7n}{\npr{P}}+ 4\log_2 \npr{P}\\&\quad+\frac{6n}{\npr{P}}+ 4\log_2 \npr{P}+1+\frac{12n}{\npr{P}}+ 4\log_2 \frac{\npr{P}}{2}\\
    &<3 T_{\cok{}}\left(\frac{n}{2}, |\mathbf{P}|,M\right)+38\frac{n}{\npr{P}}+16\log_2 \npr{P}
\end{align*}
Further, in a depth-first recursion level, the \io{} cost of \cok{} (both bandwidth and latency) can be bound by that of the four consecutive invocations of $\cok{}$ used to compute the four subproblems, the cost of redistributing the input (resp., the output) of such subproblems, and the \io{} cost of the three invocations of $\textsc{SUM}$ used to combine the outputs of the three subproblems. We refer the reader to the detailed description of the algorithm. Here we compose the cost of the various operations step-by-step. By Lemma~\ref{lem:costsum} and Lemma~\ref{lem:costdiff}:
\vspace{5mm}
    \begin{align*}
        BW_{\cok{}}\left(n, |\mathbf{P}|, M \right) 
        &\leq  2\times\frac{n}{\npr{P}}+ BW_{\textsc{DIFF}}\left(\frac{n}{2},\frac{\npr{P}}{2} \right)+ 3BW_{\cok{}}\left(\frac{n}{2}, |\mathbf{P}|,M \right)\\&\quad + BW_{\textsc{SUM}}\left(n,\npr{P} \right)+ BW_{\textsc{DIFF}}\left(n,\npr{P}\right) 
        +2\times \frac{n}{\npr{P}}\\&\quad+BW_{\textsc{SUM}}\left(n,\npr{P}\right)+2 +BW_{\textsc{SUM}}\left(n,\npr{P}/2 \right)+3\times \frac{n}{\npr{P}}\nonumber\\
        &\leq 10\frac{n}{\npr{P}} +5\log_2 \frac{\npr{P}}{2} + 3BW_{\cok{}}\left(\frac{n}{2}, |\mathbf{P}|,M \right)\\&\quad+4\log_2 \npr{P}+5\log_2 \npr{P}+4\log_2 \npr{P}+2+4\log_2 \frac{\npr{P}}{2}\nonumber\\
        &\leq 3BW_{\cok{}}\left(\frac{n}{2}, |\mathbf{P}|,M \right) +10\frac{n}{\npr{P}}+22\log_2 \npr{P}\nonumber\\
        \end{align*}
        \begin{align*}
        L_{\cok{}}\left(n, |\mathbf{P}|, M \right) 
        &\leq  2+ L_{\textsc{DIFF}}\left(\frac{n}{2},\frac{\npr{P}}{2}\right)+ 3L_{\cok{}}\left(\frac{n}{2}, |\mathbf{P}|,M \right) + L_{\textsc{SUM}}\left(n,\npr{P} \right)\\&\quad+ L_{\textsc{DIFF}}\left(n,\npr{P}\right)
        +2+L_{\textsc{SUM}}\left(n,\npr{P}\right)+1 +L_{\textsc{SUM}}\left(n,\npr{P}/2\right)+3\nonumber\\
        &\leq 8 +3\log_2 \frac{\npr{P}}{2} + 3L_{\cok{}}\left(\frac{n}{2}, |\mathbf{P}|,M \right)+2\log_2 \npr{P}+3\log_2 \npr{P}\\&\quad+2\log_2 \npr{P}+2\log_2 \frac{\npr{P}}{2}\nonumber\\
        &\leq 3L_{\cok{}}\left(\frac{n}{2}, |\mathbf{P}|,M \right)+15\log_2 \npr{P}\nonumber
    \end{align*}
After $1\le\ell \leq \lceil\log_2 \frac{20n}{M\npr{P}^{\log_3 2}}\rceil$ depth-first steps, the generated sub-problems have input size at most $M\npr{P}^{\log_3 2}/20$, and \cok{} switches to the MI execution mode by invoking $\cok{}_{MI}$. Hence, by Theorem~\ref{thm:costcokUM}, and by the assumptions $n\geq |\mathbf{P}|$, $n\geq M\npr{P}^{\log_32}/10$, and $M\geq \log_2 \npr{P}$  we have:
\begin{align}
    T_{\cok{}}\left(n, |\mathbf{P}|,M\right) 
    &<3 T_{\cok{}}\left(\frac{n}{2}, |\mathbf{P}|,M\right)+38\frac{n}{\npr{P}}+16\log_2 \npr{P}\nonumber\\
    &< 3^{\lceil\log_2 \frac{20n}{M\npr{P}^{\log_3 2}}\rceil}T_{\cok{}}\left(\frac{M\npr{P}^{\log_32}}{20}, |\mathbf{P}|,M\right)+38\frac{n}{\npr{P}}\sum_{i=0}^{\lceil\log_2 \frac{20n}{M\npr{P}^{\log_3 2}}\rceil-1} 2^{-i}\nonumber \\&\quad + 16\log_2 \npr{P}\left(\lceil\log_2 \frac{20n}{M\npr{P}^{\log_3 2}}\rceil-1\right)\nonumber\\
    &< 3\frac{(20n)^{\log_23}}{\npr{P}M^{\log_2 3}}T_{\cok{}_{MI}}\left(\frac{M\npr{P}^{\log_32}}{20}, |\mathbf{P}|\right) + 76\frac{n}{\npr{P}}\nonumber\\&\quad + 16\log_2 \npr{P}\log_2 \frac{20n}{M\npr{P}^{\log_3 2}}\nonumber\\
    &< 3\frac{(20n)^{\log_23}}{\npr{P}M^{\log_2 3}}173\frac{M^{\log_23}\npr{P}}{20^{\log_23}\npr{P}} + 76\frac{n^{\log_23}}{\npr{P}} + 16\log_2 \npr{P}\log_2 \frac{20n}{M\npr{P}^{\log_3 2}}\nonumber\\
    &< 595\frac{n^{\log_23}}{\npr{P}} + 80\frac{n^{\log_23}}{\npr{P}} \label{eq:coktimefina}
\end{align}\vspace{5mm}
\begin{align}
    BW_{\cok{}}\left(n, |\mathbf{P}|, M \right) 
    &\leq 3BW_{\cok{}}\left(\frac{n}{2}, |\mathbf{P}|,M \right) +10\frac{n}{\npr{P}} +22\log_2 \npr{P}\nonumber\\
    &< 3^{\lceil\log_2 \frac{20n}{M\npr{P}^{\log_3 2}}\rceil}BW_{\cok{}}\left(\frac{M\npr{P}^{\log_32}}{20}, |\mathbf{P}|,M\right)+\frac{10n}{\npr{P}}\sum_{i=0}^{\lceil\log_2 \frac{20n}{M\npr{P}^{\log_3 2}}\rceil-1} 2^{-i}\nonumber \\&\quad + 22\log_2 \npr{P}\log_2 \frac{20n}{M\npr{P}^{\log_3 2}}\nonumber\\
    &< 3\frac{(20n)^{\log_23}}{\npr{P}M^{\log_2 3}}BW_{\cok{}_{MI}}\left(\frac{M\npr{P}^{\log_32}}{20}, |\mathbf{P}|\right) + 20\frac{n}{\npr{P}}\nonumber\\&\quad + 22\log_2 \npr{P}\log_2 \frac{20n}{M\npr{P}^{\log_3 2}}\nonumber\\
    &< 3\frac{(20n)^{\log_23}}{\npr{P}M^{\log_2 3}}174\frac{M\npr{P}^{\log_3 2}}{20\npr{P}^{\log_3 2}} + 20\frac{n}{\npr{P}} + 22\log_2 \npr{P}\log_2 \frac{20n}{M\npr{P}^{\log_3 2}}\nonumber\\
    &< (1578+20+110)\left(\frac{n}{M}\right)^{\log_23}\frac{M}{\npr{P}} \label{eq:cokbandfina}
\end{align}
\begin{align}
    L_{\cok{}}\left(n, |\mathbf{P}|,M\right) 
    &\leq 3L_{\cok{}}\left(\frac{n}{2}, |\mathbf{P}|,M \right)+15\log_2 \npr{P}\nonumber\\
    &\leq 3^{\lceil\log_2 \frac{20n}{M\npr{P}^{\log_3 2}}\rceil}L_{\cok{}}\left(\frac{M\npr{P}^{\log_32}}{20}, |\mathbf{P}|,M\right)\nonumber\\&\quad+  15\log_2 |\mathbf{P}|\left(\lceil\log_2 \frac{20n}{M\npr{P}^{\log_3 2}}\rceil-1\right)\nonumber\\
    &< 3\frac{(20n)^{\log_23}}{\npr{P}M^{\log_2 3}}L_{\cok{}_{MI}}\left(\frac{M\npr{P}^{\log_32}}{20}, |\mathbf{P}|\right)+ 15\log_2 |\mathbf{P}|\log_2 \frac{20n}{M\npr{P}^{\log_3 2}}\nonumber\\
    &\leq  3\frac{(20n)^{\log_23}}{\npr{P}M^{\log_2 3}}25\log_2^2|\mathbf{P}| + 15\log_2 |\mathbf{P}|\log_2 \frac{20n}{M\npr{P}^{\log_3 2}}\nonumber\\
    &\leq  8728\frac{n^{\log_23}}{\npr{P}M^{\log_2 3}}\log_2^2|\mathbf{P}|\label{eq:idgaf}
\end{align}
where~\eqref{eq:coktimefina},~\eqref{eq:cokbandfina} and~\eqref{eq:idgaf}  follow as, under the assumptions $n\geq \npr{P}$, $M\geq\log_2 \npr{P}$, and as we are considering the case $n\geq 10M\npr{P}^{\log_32}$,
we have $\frac{n}{\npr{P}}, \log_2 \npr{P}\log_2 \frac{20n}{M\npr{P}^{\log_3 2}}\leq \BO{\left(\frac{n}{M}\right)^{\log_23}\frac{M}{\npr{P}}}$.
\end{proof}

\subsection{Comparison with communication lower bounds}\label{sec:karlwbcomp}
Based on the analysis of \cok{} performance presented in Theorem~\ref{thm:costcokUM} and Theorem~\ref{thm:cok}, we have:
\begin{reptheorem}{thm:informalcok}
\cok{} achieves optimal computation time speedup and optimal bandwidth cost among all parallel Karatsuba-based integer multiplication algorithms. It also minimizes the latency cost up to a $\BO{\log^2 \mathcal{P}}$ multiplicative factor, where $\mathcal{P}$ denotes the number of processors used in the computation.
\end{reptheorem}
\begin{proof}
By Theorem~\ref{thm:costcokUM}, for $M\geq 10n/\mathcal{P}^{\log_3 2}$, the product $C=A\times B$ can be computed using $\cok{}_{MI}$. Under the assumption $n\geq \mathcal{P}$, the bandwidth cost of $\cok{}_{MI}$ asymptotically matches the memory-independent lower bound in Theorem~\ref{thm:karaboundmi}, and its latency latency is within a $\BO{\log^2 \mathcal{P}}$ factor of the corresponding lower bound. Note that the initial distribution of the input values among the processors used in $\cok{}_{MI}$ satisfies the balanced input distribution assumption used to derive  Theorem~\ref{thm:lwbstamemind}.

For $ 10n/\mathcal{P}^{\log_3 2}<M\geq 40n/\mathcal{P}$, by Theorem~\ref{thm:cok}, the product $C=A\times B$ can be computed using $\cok{}$. For $n\geq \mathcal{P}$ and $M\geq \log_2 \mathcal{P}$, the bandwidth cost of $\cok{}_{MI}$ asymptotically matches the memory-dependent lower bound in Theorem~\ref{thm:karabound}, and its latency latency is within a $\BO{\log^2 \mathcal{P}}$ factor of the corresponding lower bound. The total memory space required across the available processors for the execution of \cok{} is $\BO{n}$, that is, within a constant factor of the minimum space required to store the input (and output) values. Finally, in both cases, \cok{} achieves optimal speedup of the computational time $\BO{n^{\log_2 3}/\mathcal{P}}$.
\end{proof}

\section{Conclusion}
We presented parallel algorithms for computing the product of integer numbers in the distributed memory setting. Our algorithm \coim{} is based on the recursive long integer multiplication, while \cok{} is a parallel implementation of Karatsuba's fast multiplication scheme. Under mild conditions on the input size $n$, the number of available processors $P$, and the size of the local cache available to each of them, our algorithms achieve asymptotically optimal computational speedup and bandwidth cost, while their latency cost is within a $\BO{\log^2 P}$ multiplicative factor of the respective theoretical lower bounds. Further, our algorithms require that space available across the processors to be within a multiplicative constant factor of the minimum amount required to store the input values. Due to the common underlying strategy used to obtain both \coim{} and \cok{}, it is possible to combine them seamlessly, thus achieving hybridization of the two algorithmic schemes (as discussed in~\cite{stefani2019io}). Such hybridization is of actual practical interest, as due to the constant factor terms in the complexity characterization of the algorithms (both computational and \io{}) \cok{} allows for overall improved performance over \coim{} for large input size, while when multiplying integers with fewer digits, \coim{} may actually achieve lower execution time.

Due to the large constant factors in the bounds, the presented algorithms are mostly of theoretical interest. While such coefficients can be considerably reduced for particular, and reasonable, values of $n,M$, and $\mathcal{P}$, the pursuit of improved algorithms to be used successfully in practice is an important natural direction for future research. Further, the $\BO{\log^2 P}$ multiplicative factor discrepancy between the latency of our proposed algorithms and the corresponding lower bound leaves open the question on whether it is actually possible to obtain algorithms with lower latency, or if instead, it is possible to obtain \io{} lower bounds which capture such a higher latency requirement. Finally, we believe that the approach discussed in this work could be used to obtain a communication-optimal parallel version of other integer multiplication algorithms, among whom, in particular, the general Toom-Cook-$k$ algorithm.

\bibliographystyle{plain}
\bibliography{bibliography}
\end{document}